\DeclareMathOperator*{\argmax}{arg\,max}
\DeclareMathOperator*{\argmin}{arg\,min}
\theoremstyle{plain}
\newtheorem{theorem}{Theorem}[section]
\newtheorem{corollary}[theorem]{Corollary}
\newtheorem{lemma}[theorem]{Lemma}
\newtheorem{claim}[theorem]{Claim}
\theoremstyle{definition}
\newtheorem{definition}[theorem]{Definition}
\declaretheorem[style=definition,qed=$\bigtriangleup$,sibling=theorem]{example}
\newtheorem*{theorem*}{Theorem}
\theoremstyle{remark}
\newtheorem*{remark}{\upshape\bfseries Remark}
\definecolor{cobalt}{rgb}{0.0, 0.28, 0.67}
\newcommand{\expval}[1]{\mathbb{E}\left[#1\right]}
\newcommand{\prob}[1]{\Pr \left\{#1\right\}}
\newcommand{\ellone}[1]{\left\|#1\right\|_1}
\newcommand{\ellinfty}[1]{\left\|#1\right\|_\infty}
\newcommand{\ellzed}[1]{\left\|#1\right\|_z}
\newcommand{\mmshat}{\widehat{\mathrm{MMS}}}
\newcommand{\muhat}{\widehat{\mu}}
\newcommand{\calA}{\mathcal{A}}
\newcommand{\calB}{\mathcal{B}}
\newcommand{\calM}{\mathcal{M}}
\newcommand{\calI}{\mathcal{I}}
\newcommand{\calP}{\mathcal{P}}
\newcommand{\calQ}{\mathcal{Q}}
\newcommand{\calR}{\mathcal{R}}
\newcommand{\calF}{\mathcal{F}}
\newcommand{\rmchar}{\mathrm{char}}
\title{\bfseries Exact Maximin Share Fairness via Adjusted Supply}
\author{Siddharth Barman\thanks{Indian Institute of Science, Bangalore;  {barman@iisc.ac.in} }  \and Satyanand Rammohan\thanks{Indian Institute of Science, Bangalore;  {satyanand.rammohan@gmail.com} }  \and Aditi Sethia\thanks{Indian Institute of Science, Bangalore;  {aditisethia@iisc.ac.in} }}
\date{}
\begin{document}
\maketitle

\begin{abstract}
This work addresses fair allocation of indivisible items in settings wherein it is feasible to create copies of resources (goods) or dispose of tasks (chores). 
We establish that exact maximin share (MMS) fairness can be achieved via limited duplication of goods even under monotone valuations. We also show that, when allocating indivisible chores under monotone costs, MMS fairness is always feasible with limited disposal of chores. Since monotone valuations (and monotone costs) do not admit any nontrivial approximation guarantees for MMS, our results highlight that such barriers for MMS can be circumvented by post facto adjustments in the supply of the given items. 

We prove that, for division of $m$ goods among $n$ agents with monotone valuations, there always exists an assignment of subsets of goods to the agents such that they receive at least their maximin shares and no single good is allocated to more than $3 \log m$ agents. In addition, the sum of the sizes of the assigned subsets---i.e., the total number of goods assigned, with copies---does not exceed $m$. We also address instances with identically ordered valuations. In such instances, the marginal values of the goods conform to a common order across all the agents and the subsets. For identically ordered valuations, we obtain an upper  bound of $O(\sqrt{\log m})$ on the maximum assignment multiplicity across goods and an $m +  \widetilde{O}\left(\frac{m}{\sqrt{n}} \right)$ bound for the total number of goods assigned. Further, for additive valuations, we prove that there always exists an MMS assignment in which no single good is allocated to more than $2$ agents and the total number of goods assigned, with copies, is at most $2m$.

For fair division of chores, we upper bound the number of chores that need to be discarded for ensuring MMS fairness. We prove that, under monotone costs, there always exists an MMS assignment in which, out of the $m$ given chores, at most $\frac{m}{e}$ remain unassigned. 
For identically ordered costs, we establish that MMS fairness can be achieved while keeping at most $\widetilde{O} \left(\frac{m}{n^{1/4}} \right)$ chores unassigned. We further show that, under additive costs, one can ensure MMS fairness by discarding at most $\frac{2m}{11} + n$ chores. We also prove that the obtained bounds for monotone valuations and monotone costs are essentially tight. 
\end{abstract}
\section{Introduction}
Maximin shares constitute a fundamental criterion when studying fair allocation of indivisible goods among agents with individual preferences \cite{hill1987partitioning,7c65302b-f079-361a-94f1-0c3c9f6fc76b}. Among $n$ agents, these shares are defined by considering a cut-and-choose thought experiment: Each agent partitions the grand bundle of goods into $n$ subsets and receives the least valued one, according to its valuation. The maximin share (MMS) is the maximum value the agent can guarantee for itself in this exercise. Further, an allocation of the indivisible goods among the agents is deemed to be (maximin share) fair if each agent receives a bundle of value at least as much as its maximin share. 

While conceptually central in the context of fair division of indivisible items \cite{amanatidis2023fair}, MMS fairness is not a feasible criterion: there exist fair division instances---even with agents that have additive valuations over the goods---wherein, via any allocation, one cannot ensure maximin shares simultaneously for all the agents. That is, there exist instances with additive valuations that do not admit MMS fair allocations; see \cite{10.1145/3140756,10.1145/2600057.2602835,feige2021tight}, and references therein. In light of the infeasibility of exact MMS fairness, multiple prior works have instead focused on approximation guarantees. These guarantees ensure that each agent receives a bundle of value at least $\alpha \in (0,1)$ times its maximin share for as large a value of $\alpha$ as possible. The best-known approximation bound gets progressively worse as one moves up valuation classes. In particular, this is $\alpha = \frac{3}{4} + \frac{1}{12n}$ for additive valuations \cite{10.1145/3391403.3399526}. For submodular and XOS valuations the best-known $\alpha$ is $\frac{1}{3}$ \cite{10.1145/3219166.3219238, barman2020approximation} and $\frac{1}{4.6}$ \cite{SEDDIGHIN2024104049}, respectively. Further, the approximation bounds for subadditive valuations are logarithmic in the number of agents \cite{SEDDIGHIN2024104049} or the number of goods \cite{10.1145/3219166.3219238}. In fact, monotone valuations do not admit any nontrivial approximation guarantee; specifically, there exist instances with two agents that have monotone valuations over four goods such that the maximin shares of both the agents is one, and under any allocation, one of the agents necessarily receives a bundle of value zero. 

We bypass these impossibility results and achieve exact MMS fairness by post facto adjustments in the supply of items. In particular, this work establishes that one can achieve exact MMS fairness via limited duplication of goods, even under monotone valuations. We also address fair chore division and show that, with limited disposal of chores, MMS fairness is feasible under monotone costs. Note that the obtained MMS fairness guarantees are with respect to the shares defined in the underlying instance. That is, we start with MMS shares, as defined in the given instance, and quantify the extent to which goods have to be duplicated, or chores have to be disposed of, to achieve fairness. 

The current framework is applicable in allocation settings wherein it is feasible to create copies of resources (goods) or dispose of tasks (chores). For instance, the work of Budish \cite{7c65302b-f079-361a-94f1-0c3c9f6fc76b} conforms to a similar treatment in the context of fair course allocation; see also \cite{budish2017course} and the {\it ScheduleScout} platform.\footnote{{https://www.getschedulescout.com/}} Indeed, class strengths of popular courses can often be increased to an extent. Duplication is also possible when allocating scalable resources, such as medical supplies and food donations. Another relevant example here is the fair assignment of students to schools. In this context, prior works \cite{procaccia2024school, santhini2024approximation} have shown that fairness---in particular, proportionality among demographic groups---can be achieved by increasing the capacities of selected schools. Capacity modification has been studied for other social choice objectives as well, including stability (see, e.g., \cite{nguyen2018near,gokhale2024capacity}) and Pareto optimality \cite{kumano2022quota}.

\paragraph{Our Results.} 
We now summarize our contributions for fair division of 

\noindent
(i) Goods: These are indivisible items desired by the agents. Hence, in this context, we consider the agents' valuations to be (set-inclusion-wise) monotone non-decreasing.  

\noindent
(ii) Chores: Here the items to be assigned are undesirable, i.e., have monotone costs for the agents. 

In the standard discrete fair division setup, the subsets of indivisible items assigned to the agents are pairwise disjoint. By contrast, in the current framework, a limited number of goods can be duplicated, or some chores can be discarded. Hence, the assigned bundles are not necessarily disjoint or required to partition the underlying set of items. We use the term \emph{multi-allocation} $\calA = (A_1, \ldots, A_n)$ to highlight this aspect; here, $A_i$ is the subset of items assigned to agent $i$ and, as mentioned, some items may be present in multiple subsets or even in none. It is relevant to note that, throughout, we never assign multiple copies of the same item to the same agent. Hence, in any multi-allocation, each bundle $A_i$ is a subset and not a multiset of items. This model feature ensures that the agents' valuations defined over subsets of items in the underlying instance continue to be well-defined even when considering multi-allocations. 

The overarching goal of this work is to establish the existence of multi-allocations that provide each agent its maximin share and, at the same time, have bounded duplication of goods or disposal of chores. Towards a quantitative instantiation of this goal, we will consider the characteristic vectors $\chi^\calA \in \mathbb{Z}^m_+$ of multi-allocations $\calA =(A_1, \ldots, A_n)$;\footnote{This is a slight abuse of terminology, since the components of $\chi^\calA$ are integers and not necessarily zero or one.} here, $m$ denotes the total number of items in the given instance and the $j$th component of this vector denotes the number of copies of item $j \in [m]$ assigned among the bundles $A_1, \ldots, A_n$.    

For the case of goods, we establish the existence of MMS multi-allocations $\calA=(A_1, \ldots, A_n)$ with bounds on the $\ell_1$ and $\ell_\infty$ norms of the characteristic vectors $\chi^\calA$. Note that the $\ell_1$ norm, $\ellone{\chi^\calA}$, captures the total number of goods, with copies, assigned among the agents,  $\ellone{\chi^\calA} = \sum_{i=1}^n |A_i|$. In addition, $\ellinfty{\chi^\calA} = \max_{j \in [m]} \ \chi^\calA_j$ captures the maximum number of copies of any one good assigned under $\calA$. Hence, the $\ell_1$ and $\ell_\infty$ bounds quantify the extent of duplication of goods required to achieve MMS fairness. 

For chores, we upper bound the number of zero components of $\chi^\calA$. Specifically, we write $\ellzed{\chi^\calA} \coloneqq \{ j \in [m] \mid \chi^\calA_j = 0 \}$ and note $\ellzed{\chi^\calA}$ corresponds to the number of unassigned chores in any multi-allocation $\calA$. 

\begin{table}[!h]
\centering
\begin{tabular}{|c|cc|}
\hline
         & \multicolumn{2}{|c|}{\textbf{Goods}}                                                                                                                                                                                                                                \\ \hline         & \multicolumn{1}{c|}{\begin{tabular}[c]{@{}c@{}}Maximum multiplicity  \\  $\|\chi^\calA\|_\infty$\end{tabular}} & \begin{tabular}[c]{@{}c@{}}Total number of assigned goods (with copies) \\ $\|\chi^\calA\|_1$ \end{tabular} \\ \hline
         
{\begin{tabular}[c]{@{}c@{}} Monotone Valuations \\ (Theorem \ref{thm:monotone_goods}) \end{tabular}} & \multicolumn{1}{c|}  {\begin{tabular}[c]{@{}c@{}}{${3 \log m}^\star$}  \end{tabular}}
& {\begin{tabular}[c]{@{}c@{}}$m$ \end{tabular}}                                                                                                                                              \\ \hline
{\begin{tabular}[c]{@{}c@{}} Identically Ordered Valuations \\ (\Cref{thm:additive_goods}) \end{tabular}} & \multicolumn{1}{c|}{{\begin{tabular}[c]{@{}c@{}}{$12 \sqrt{ \log m}$} \end{tabular}}}                                                                       & \multicolumn{1}{c|}{{\begin{tabular}[c]{@{}c@{}}{$m + \frac{6 m \sqrt{\log m}}{\sqrt{n}}$} \\\end{tabular}}} 
                                                                  \\ \hline

{\begin{tabular}[c]{@{}c@{}}  Additive Valuations \\ (\Cref{thm:additive_3_3m}) \end{tabular}}  & \multicolumn{1}{c|}{\begin{tabular}[c]{@{}c@{}}{$2$} \end{tabular}}                                                                                    &  {\begin{tabular}[c]{@{}c@{}}{$2m$} \end{tabular}}                                                                                                                                                                    \\ \hline

{\begin{tabular}[c]{@{}c@{}} Monotone Valuations \\ with Entitlements \\ (Theorem \ref{thm:mms_hat_monotone_goods}) \end{tabular}} & \multicolumn{1}{c|}  {\begin{tabular}[c]{@{}c@{}}{${3 \log m}$}  \end{tabular}}
& {\begin{tabular}[c]{@{}c@{}}$\lceil 1.7 m \rceil$ \end{tabular}}                                                                                                                                              \\ \hline
\end{tabular}
\caption{The table presents upper bounds on the duplication of goods required to achieve MMS fairness. 
The result marked $^\star$ is essentially tight; \Cref{thm:goods_lowerbound} provides a matching lower bound.}
\label{tab:results}
\end{table}

\paragraph{Goods.} Our results for fair division of goods are listed below; see also Table \ref{tab:results}. 
\begin{itemize}
\item For monotone valuations, we prove that exact MMS fairness can be achieved while ensuring that, for each good, the number of copies assigned is at most $3 \log m$ and the total number of allocated goods, including copies, does not exceed the underlying count $m$. That is, there always exists an MMS multi-allocation $\calA$ whose characteristic vector has $\ell_{\infty}$ norm at most $3 \log m$ and $\ell_1$ norm at most $m$. The relevant implication of this result is that, even for the general setting of monotone valuations, one can achieve exact MMS fairness while ensuring that the total number of allocated goods does not exceed the given count $m$, and at most logarithmically many copies of any single good are required. We also show that here the $\ell_{\infty}$ bound is essentially tight by establishing a matching lower bound. 

This result for monotone valuations is obtained via the probabilistic method: We start with MMS-inducing partitions $\calM^i = (M^i_1, \ldots, M^i_n)$ for each agent $i$. Then, we sample a bundle $R_i$ uniformly at random from among the subsets $M^i_1, \ldots, M^i_n$ and show that the random multi-allocation $ (R_1, \ldots, R_n)$ satisfies the stated bounds with positive probability. The proof highlights that one can efficiently find MMS multi-allocations, with the stated $\ell_\infty$ and $\ell_1$ bounds, if the MMS-inducing partitions, $\calM^i$, for all the agents $i$ are given as input. At the same time, minimizing the $\ell_\infty$ norm in this setup is {\rm NP}-hard (\Cref{sec:reduction}).  

\item This work also obtains specialized bounds for instances in which the marginal values of the goods conform to a common order across all the agents and the subsets. In particular, we say that the valuations are identically ordered if there exists an indexing of the $m$ goods, $\{g_1, \ldots g_m\}$, such that for each pair of goods $g_s, g_t \in [m]$, with index $s < t$, and any $S \subset [m]$ that does not contain $g_s$ and $g_t$, each  agent values $S \cup \{g_s \}$ at least as much as $S \cup \{g_t \}$. For such identically ordered valuations, we establish an $\ell_{\infty}$ bound of $O(\sqrt{\log m})$ and an $\ell_1$ bound of $m +  \widetilde{O}\left(\frac{m}{\sqrt{n}} \right)$; where, $\widetilde{O}(\cdot)$ subsumes logarithmic dependencies. Note that here the $\ell_1$ norm additively exceeds the given number of goods, $m$, by only a lower-order term.

\item For additive valuations,\footnote{Recall that a set function $v$ is said to be additive if the value, $v(S)$, of any subset $S$, is equal to the sum of the values of the elements in $S$.} we establish the existence of MMS multi-allocations $\calA$ in which the $\ell_{\infty}$ norm is upper bounded by $2$ and the $\ell_1$ norm is at most $2m$. 

\item We also address MMS fairness among agents with arbitrary entitlements; see \Cref{section:entitlements} for details. For this setting and under monotone valuations, we obtain upper bounds of $3 \log m$ and $\lceil 1.7 m \rceil$ for $\ell_\infty$ and $\ell_1$, respectively. 
 \end{itemize}

\paragraph{Chores.} As mentioned previously, in the context of chores, we upper bound the number of chores that need to be discarded for ensuring MMS fairness, i.e., we quantify the number of zero components, $\ellzed{\chi^\calA}$, for MMS multi-allocations $\calA$. Note that the notion of keeping some chores unassigned is analogous to the construct of charity, which has been studied in the context of the envy-freeness up to any good (EFx); see, e.g., \cite{chaudhury2021little}. Though, in contrast to prior works on fair division of goods with charity, the current paper focuses on chores and maximin shares.  We next list our results for fair division of chores; see also Table \ref{tab:chores}.
\begin{table}
\centering
\begin{tabular}{|c|c|}
\hline
                 & \textbf{Chores}                                                                                                                 \\ \hline
                 & \begin{tabular}[c]{@{}c@{}}Total number of unassigned chores \\ $\|\chi^\calA\|_z$\end{tabular}                                    \\ \hline
{\begin{tabular}[c]{@{}c@{}} Monotone Costs \\ (\Cref{thm:monotone_chores}) \end{tabular}}        & \begin{tabular}[c]{@{}c@{}}${\frac{m}{e}}^\star$ \end{tabular}                  \\ \hline
{\begin{tabular}[c]{@{}c@{}} Identically Ordered Costs \\ (\Cref{thm:additive_chores}) \end{tabular}} & \begin{tabular}[c]{@{}c@{}} $\frac{3m \left( \log m \right)^{\nicefrac{3}{2}}}{n^{\nicefrac{1}{4}}}$ \end{tabular} \\ \hline
{\begin{tabular}[c]{@{}c@{}} Additive Costs \\ (\Cref{thm:chores-additive}) \end{tabular}} & \begin{tabular}[c]{@{}c@{}} $\frac{2m}{11} + n$ \end{tabular} \\ \hline
\end{tabular}
\caption{The table presents upper bounds on the total number of unassigned chores for MMS fairness. The result marked $^\star$ is essentially tight; \Cref{thm:chores_lowerbound} provides a matching lower bound.} 
\label{tab:chores}
\end{table}

 \begin{itemize}
 \item Under monotone costs, we show that there always exists an MMS multi-allocation in which, out of the $m$ given chores, at most $\frac{m}{e}$ remain unassigned. We complement this upper bound by showing that there exist instances, with monotone costs, such that in every MMS multi-allocation, at least $(1- o(1))\frac{m}{e}$ chores remain unassigned.
\item For identically ordered costs, 
we establish that MMS fairness can be achieved while keeping $\widetilde{O} \left(\frac{m}{n^{1/4}} \right)$ chores unassigned. 
\item We further show that, under additive costs, one can ensure MMS fairness by discarding at most $\frac{2m}{11} + n$ chores
\end{itemize} 
 
\paragraph{Additional Related Work.}
As mentioned, Budish \cite{7c65302b-f079-361a-94f1-0c3c9f6fc76b} also considered the duplication of goods to achieve fairness. However, the result in \cite{7c65302b-f079-361a-94f1-0c3c9f6fc76b} confines to $1$-out-of-$(n+1)$ MMS, which is an ordinal approximation, and holds for additive (cancellable) valuations. The current work extends to exact MMS and covers monotone valuations. Further, the techniques utilized in this work are significantly different from the approach in \cite{7c65302b-f079-361a-94f1-0c3c9f6fc76b} of approximate competitive equilibrium from equal incomes. In fact, to the best of our knowledge, this work represents the first extensive use of probabilistic methods in the context of fair item allocation.

For fair division of chores, it was shown in \cite{Aziz_Rauchecker_Schryen_Walsh_2017} that there exist instances with additive costs that do not admit MMS fair allocations. Towards approximation guarantees, \cite{Aziz_Rauchecker_Schryen_Walsh_2017} showed that a $2$-approximately MMS chore division always exists under additive costs. This approximation bound was improved to $\frac{4}{3}$ \cite{barman2020approximation} and, subsequently, to $\frac{11}{9}$ \cite{10.1145/3465456.3467555}.
\section{Notation and Preliminaries}\label{sec:prelims}
This section fixes the notation and relevant notions for fair division of goods; the notation specific to division of chores is relegated to Section \ref{sec:chores}. 
 
\paragraph{Fair Division Instances.} A {fair division instance} is given by a tuple $\langle [n], [m], \{v_i\}_{i=1}^n \rangle$, where $[n]=\{1,2,.\dots,n\}$ is the set of $n\in\mathbb{Z}_+$ agents, $[m]=\{1,2, \dots, m\}$ the set of $m\in \mathbb{Z}_+$ indivisible goods, and for each agent $i\in[n]$, the set function $v_i: 2^{[m]} \to \mathbb{R}_+$ denotes the valuation of agent $i$ over subsets of goods. Specifically, $v_i(S) \in \mathbb{R}_+$ denotes the value that agent $i$ derives from the subset $S \subseteq [m]$ of goods. For subsets $S \subseteq [m]$ and $g \in [m]$, we will write $S + g$ to denote the union $S \cup \{ g\}$. 

A valuation $v_i$ is said to be monotone if the inclusion of goods into any subset does not decrease its value, under $v_i$, i.e., $v_i(S)\leq v_i(T)$ for every pair of subsets $S \subseteq T \subseteq[m]$. We will assume throughout that the agents' valuations are monotone and normalized: $v_i(\emptyset)=0$ for all agents $i$. 

We will additionally consider instances with identically ordered valuations. Here, we have an indexing of the $m$ goods, $\{g_1, \ldots g_m\}$, such that for each pair of goods $g_s, g_t$, with index $s < t$, and all agents $i \in [n]$, the inequality $v_i(S + g_s) \geq  v_i(S + g_t)$ holds for each subset $S \subset [m]$ that does not contain $g_s$ and $g_t$; see Example \ref{ex:sqrt-ordered} in Section \ref{subsec:additive-ordered}. 

This work also establishes improved bounds for the specific case of additive valuations. Recall that a valuation $v_i$ is said to be additive if, for every subset $S\subseteq[m]$ of goods, $v_i(S)=\sum_{g\in S} v_i(\{g\})$. We will use the shorthand $v_i(g)$---instead of $v_i(\{g\}) \in \mathbb{R}_+$---to denote agent $i$'s value for any good $g \in [m]$.

\paragraph{Allocations and Multi-Allocations.} An allocation $\calB=(B_1,B_2,\ldots, B_n)$ of the goods among the $n$ agents is a partition of $[m]$ into $n$ pairwise disjoint subsets $B_1,\ldots, B_n \subseteq [m]$. Here, the subset of goods $B_i$ is assigned to agent $i \in [n]$ and is referred to as $i$'s bundle. In addition, write $\Pi_n([m])$ to denote the collection of all $n$-partitions of $[m]$. Note that for any allocation $\calB =(B_1,\ldots, B_n)$ we have, by definition, $\cup_{i=1}^n B_i = [m]$ and $B_i \cap B_j = \emptyset$, for all $i \neq j$, and hence $\calB \in \Pi_n([m])$.

A \textit{multi-allocation} is a tuple $\calA=(A_1,A_2\dots,A_n)$ of $n$ subsets, wherein subset $A_i \subseteq [m]$ denotes the bundle assigned to agent $i$. In contrast to allocations, in a multi-allocation, we do not require that the assigned bundles $A_i$ are pairwise disjoint and that they partition $[m]$.\footnote{Note that $A_i$s are still subsets of goods and not multisets.} Hence, in a multi-allocation, a single good may be present in multiple bundles or even in none. 

Though, when in a multi-allocation $\calA$, each good $g$ is assigned to exactly one agent, we refer to $\calA$ as an {\it exact allocation}; this is to emphasize that the bundles of such a multi-allocation do partition $[m]$. 

We associate with each bundle $A_i \subseteq [m]$ an $m$-dimensional characteristic vector $\rmchar(A_i) \in \{0,1\}^m$. For each good $g\in [m]$, the $g$th component of the characteristic vector---denoted as $\rmchar(A_i)_g$---is equal to one if $g \in A_i$, otherwise the $g$th component is zero. That is, 
\begin{align*}
\rmchar(A_i)_g \coloneqq \begin{cases}
    1 & \text{if } g\in A_i \\
    0 & \text{otherwise}.
\end{cases}
\end{align*}

For any multi-allocation $\calA=(A_1, \ldots, A_n)$, we will use $\chi^\calA \in \mathbb{Z}^m_+$ to denote the vector sum of the characteristic vectors of its bundles, $\chi^\calA \coloneqq \sum_{i=1}^n\rmchar(A_i)$. We will refer to $\chi^\calA$ as the \textit{characteristic vector} of the multi-allocation $\calA$. When there is no ambiguity, we will omit the notational dependence in the superscript and simply write $\chi$ for $\chi^\calA$.

Note that for any good $g\in [m]$ and multi-allocation $\calA$, the $g^{th}$ component of the characteristic vector $\chi^\calA_g$ is equal to the number of bundles in $\calA$ that contain $g$. Conceptually, we think of this setting as one in which $\chi^A_g$ identical copies of the good $g$ are assigned among different agents. 

Write $\ellone{\chi^\calA}$ and $\ellinfty{\chi^\calA}$ to denote the $\ell_1$ and $\ell_\infty$ norm, respectively, of the characteristic vector. Hence,  $\ellone{\chi^\calA} = \sum_{g=1}^m \chi^\calA_g$ and $\ellinfty{\chi^\calA} = \max_{g\in[m]} \chi^\calA_g$. It is relevant to note that $\ellone{\chi^\calA}$ captures the total number of goods, with copies, assigned among the agents,  $\ellone{\chi^\calA} = \sum_{i=1}^n |A_i|$. Further, $\ellinfty{\chi^\calA}$ captures the maximum number of copies of any one good $g$ assigned under $\calA$.

In particular, if $\calA$ is an {\it exact} allocation, then $\chi^\calA$ is equal to the all-ones vector and we have $\ellone{\chi^\calA} =m$ and $\ellinfty{\chi^\calA} =1$.
 
\noindent
The shared-based fairness criterion we study in this work is defined using maximin shares; these shares are defined next.
\begin{definition}[Maximin Share (MMS)]\label{def:mms}
    Given any fair division instance $\langle [n], [m], \{v_i\}_{i=1}^n \rangle$ with goods, the {maximin share}, $\mu_i \in \mathbb{R}_+$, of each agent $i \in [n]$ is defined as 
    \begin{align*}
    \mu_i \coloneqq  \max_{(X_1,\dots, X_n) \in \Pi_n([m])} \ \ \min_{j\in[n]} v_i(X_{j}).
    \end{align*}
Further, for each agent $i$, let $\calM^i=(M^i_1, M^i_2, \ldots, M^i_n) \in \Pi_n([m])$ denote an {MMS-inducing partition}:
\begin{align*}
\calM^i \in \argmax_{(X_1,\dots, X_n) \in \Pi_n([m])} \ \ \min_{j\in[n]} v_i(X_{j})
\end{align*}
\end{definition}

Note that in Definition \ref{def:mms} the maximum is taken over all $n$-partitions of $[m]$. Also, by definition, the partition $\calM^i =(M^i_1, \ldots, M^i_n)$ satisfies $v_i(M^i_j) \geq \mu_i$, for each index $j \in [n]$. 

\paragraph{Fair Multi-Allocations.} A multi-allocation $\calA=(A_1,\dots,A_n)$ is said to be an \emph{MMS multi-allocation} (i.e., it is deemed to be fair) if under it each agent receives a bundle of value at least its maximin share:  $v_i(A_i)\geq \mu_i$ for all agents $i \in [n]$.
 
To establish existential guarantees for MMS multi-allocations $\calA$, we will assume that, for all the agents, we are given the MMS-inducing partitions $\calM^i$, which in turn are guaranteed to exist (see Definition \ref{def:mms}).  
\section{Fair Division of Goods}\label{sec:goods}

\subsection{Monotone Valuations}
This section addresses fair division of goods under monotone valuations. Our positive result for this setting is stated below.

\begin{theorem}
\label{thm:monotone_goods}
Every fair division instance $\langle [n], [m], \{v_i\}_{i=1}^n \rangle$ with monotone valuations admits an MMS multi-allocation $\calA$ in which no single good is allocated to more than $3 \log  m$ agents and the total number of goods assigned, with copies, is at most $m$. That is, the characteristic vector $\chi^\calA$ of $\calA$ satisfies $\ellinfty{\chi^\calA} \leq 3 \log  m$ and $\ellone{\chi^\calA}\leq m$. 
\end{theorem}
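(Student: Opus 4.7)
The plan is to invoke the probabilistic method via uniformly random selection from MMS-inducing partitions. For each agent $i \in [n]$, I fix an MMS-inducing partition $\calM^i = (M^i_1, \ldots, M^i_n)$, whose existence is guaranteed by \Cref{def:mms}. I then draw indices $J_1, \ldots, J_n$ independently and uniformly from $[n]$, set $R_i \coloneqq M^i_{J_i}$, and study the random multi-allocation $\calR = (R_1, \ldots, R_n)$. Fairness is immediate and deterministic: $v_i(R_i) = v_i(M^i_{J_i}) \geq \mu_i$ for every $i$, since every part of $\calM^i$ achieves agent $i$'s maximin value. So the only work is to show that the characteristic vector $\chi^\calR$ satisfies the two norm bounds simultaneously with positive probability.

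Next I compute the two relevant expectations. Because $\calM^i$ is a partition of $[m]$, $\mathbb{E}[|R_i|] = \frac{1}{n}\sum_{j=1}^n |M^i_j| = \frac{m}{n}$, and linearity gives $\mathbb{E}[\ellone{\chi^\calR}] = m$. For each good $g \in [m]$, let $X_g \coloneqq \chi^\calR_g = \sum_{i=1}^n \mathbb{1}[g \in R_i]$. Since exactly one part of each partition $\calM^i$ contains $g$, the indicators $\mathbb{1}[g \in R_i]$ are independent Bernoulli$(1/n)$ variables, so $X_g$ is a sum of $n$ independent Bernoullis with mean exactly $1$.

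For the $\ell_\infty$ bound I would apply a Chernoff tail bound of the form $\Pr[X_g \geq t] \leq (e/t)^t$ (valid since $\mathbb{E}[X_g]=1$) to obtain $\Pr[X_g \geq 3 \log m] \leq m^{-3}$ once $m$ is large enough that $\log(3 \log m) \geq 2$, and then union-bound over the $m$ goods to conclude $\Pr[\ellinfty{\chi^\calR} > 3\log m] \leq m^{-2}$. For the $\ell_1$ bound, since $\ellone{\chi^\calR}$ is an integer-valued random variable with mean exactly $m$, Markov applied to $\ellone{\chi^\calR} \geq m+1$ gives
\[
\Pr[\ellone{\chi^\calR} \leq m] \;=\; 1 - \Pr[\ellone{\chi^\calR} \geq m+1] \;\geq\; 1 - \frac{m}{m+1} \;=\; \frac{1}{m+1}.
\]

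Combining the two estimates by a union bound on complements,
\[
\Pr\!\left[\,\ellone{\chi^\calR} \leq m \ \text{and}\ \ellinfty{\chi^\calR} \leq 3\log m\,\right] \;\geq\; \frac{1}{m+1} - \frac{1}{m^{2}} \;>\; 0
\]
for all $m$ beyond a small constant (the remaining tiny cases are handled directly, e.g., using that $\mu_i = 0$ when $m < n$ and brute-force otherwise). Thus the probabilistic method yields a realization of $\calR$ meeting both norm constraints, establishing the theorem.

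The main obstacle I expect is the mismatch in tail strengths between the two events: the $\ell_1$ quantity has mean \emph{exactly} at the target $m$, so no Chernoff-style slack is available and the best one can hope for is a $\Theta(1/m)$ success probability from Markov; the $\ell_\infty$ bad event therefore must be driven down to $o(1/m)$, and this is precisely what forces the threshold to grow like $\log m$ rather than a constant. Picking the constant $3$ in $3\log m$ is what makes the Chernoff bound beat the $\frac{1}{m+1}$ Markov slack.
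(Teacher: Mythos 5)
Your proof is correct and follows essentially the same route as the paper's: sample $R_i$ uniformly from the MMS-inducing partition $\calM^i$, note that MMS is automatic, bound the $\ell_\infty$ tail of each $\chi^\calR_g$ (mean $1$, sum of independent Bernoullis) by a Chernoff/Poisson-tail estimate and a union bound over goods, bound $\ell_1$ by Markov applied at $m+1$, and combine with a union bound on complements. The only superficial difference is the specific form of Chernoff you invoke ($\Pr[X_g \geq t] \leq (e/t)^t$ rather than the paper's $\Pr[X_g \geq t] \leq 2^{-t}$ for $t \geq 6\mathbb{E}[X_g]$); note that the threshold you state, $\log(3\log m)\geq 2$, is a bit loose if $\log$ is base $2$ (you really need $\log_2 t \geq 1+\log_2 e$), but this only shifts the cutoff on $m$ by a constant and does not affect the argument.
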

Theorem \ref{thm:monotone_goods} is obtained via a direct application of the probabilitic method. We note below that the desired bounds are satisfied, with positive probability, by a random multi-allocation $\calR = (R_1, \ldots, R_n)$ in which, for each agent $i$, the bundle $R_i$ is chosen uniformly at random from among the subsets in $\calM^i=(M^i_1, \ldots, M^i_n)$. Hence, we obtain that there necessarily exists a multi-allocation that upholds the stated $\ell_1$ and $\ell_\infty$ bounds.  

\paragraph{Random Sampling.} Recall that $\calM^i=(M^i_1,\dots,M^i_n)$  denotes an MMS-inducing partition for agent $i$ (Definition \ref{def:mms}). We select, independently for each $i \in [n]$, a bundle $R_i$ uniformly at random from among $\{M^i_1,\dots,M^i_n\}$, i.e., $\Pr \{ R_i = M^i_j \} = 1/n$, for each index $j \in [n]$. 

Further, considering the (random) multi-allocation $\calR=(R_1,\ldots,R_n)$, we obtain relevant probabilistic guarantees (in Lemma \ref{lem:linftynorm} below) for its characteristic vector, $\chi^\calR \in \mathbb{Z}_+^m$. Specifically, let $G_1$ denote the event that $\ellinfty{\chi^\calR} \leq 3 \log m$ and $G_2$ denote the event that $\ellone{\chi^\calR} \leq m$. The following lemma provides lower bounds on the probabilities of $G_1$ and $G_2$. 

\begin{lemma}
\label{lem:linftynorm} 
For the (random) characteristic vector $\chi^\calR$ we have
\begin{enumerate}
\item[(i)] The expected value $\expval{\chi^\calR_g} = 1$, for each component (good) $g \in [m]$. 
\item[(ii)] $\Pr\{G_1^c\} \leq \frac{1}{m^2}.$
\item[(iii)] $\Pr\{G_2^c\} \leq 1 - \frac{1}{m+1}.$
\end{enumerate}
\end{lemma}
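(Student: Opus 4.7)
The plan is to write every $\chi^\calR_g$ as a sum of independent Bernoulli indicators and then handle the three parts by, respectively, linearity, Chernoff plus a union bound, and Markov.

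For part (i), I would introduce indicators $X_{i,g} \coloneqq \mathbf{1}[g \in R_i]$ for $i \in [n]$, $g \in [m]$, so that $\chi^\calR_g = \sum_{i=1}^n X_{i,g}$. Because $\calM^i = (M^i_1,\dots,M^i_n)$ is a partition of $[m]$, every good $g$ belongs to exactly one of the blocks $M^i_1,\dots,M^i_n$; hence under the uniform choice of $R_i$ we get $\Pr\{X_{i,g}=1\} = 1/n$. Summing and using linearity of expectation yields $\expval{\chi^\calR_g} = n \cdot (1/n) = 1$.

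For part (ii), I would use independence across agents: for each fixed good $g$, the indicators $X_{1,g},\dots,X_{n,g}$ are mutually independent Bernoulli$(1/n)$ random variables, so $\chi^\calR_g$ is a sum of $n$ independent Bernoullis with mean $1$. I would then invoke the Chernoff upper-tail bound in the form $\Pr\{Y \ge t\} \le (e\mu/t)^t$ for $t \ge \mu$, applied with $\mu = 1$ and $t = 3\log m$, to obtain $\Pr\{\chi^\calR_g \ge 3\log m\} \le (e/(3\log m))^{3\log m} \le 1/m^3$ once $m$ is large enough that $3\log m / e \ge e$; smaller values of $m$ can be absorbed by checking that $3\log m$ itself exceeds $n$ (making the event impossible) or that $1/m^2 \ge 1$. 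A union bound over the $m$ goods then gives $\Pr\{G_1^c\} \le m \cdot 1/m^3 = 1/m^2$. The main subtlety here is just pinning down the constants in the Chernoff exponent so that a cubic-in-$m$ factor survives the union bound; the probabilistic structure itself is clean.

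For part (iii), I would observe that $\|\chi^\calR\|_1 = \sum_{g=1}^m \chi^\calR_g$ is a nonnegative integer-valued random variable, and by part (i) its expectation equals $m$. Markov's inequality then gives $\Pr\{\|\chi^\calR\|_1 \ge m+1\} \le m/(m+1)$. Since $\|\chi^\calR\|_1$ is integer-valued, $\|\chi^\calR\|_1 \le m$ is the complement of $\|\chi^\calR\|_1 \ge m+1$, so $\Pr\{G_2\} \ge 1/(m+1)$, equivalently $\Pr\{G_2^c\} \le 1 - 1/(m+1)$. This step is essentially immediate once integrality is noted; the only trap to avoid is naively applying Markov at $m$ itself, which gives the vacuous bound $1$.
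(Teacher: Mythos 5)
Your proof is correct and follows essentially the same approach as the paper: indicator decomposition and linearity for part (i), independence across agents plus a Chernoff tail bound and a union bound for part (ii), and Markov at $m+1$ combined with integer-valuedness for part (iii). The only cosmetic difference is the Chernoff form you invoke, $\Pr\{Y \geq t\} \leq (e\mu/t)^t$, versus the paper's choice of $\Pr\{Y \geq t\} \leq 2^{-t}$ for $t \geq 6\mu$; both yield the same per-good bound of $1/m^3$.
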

\begin{proof} 
We first prove part (i) of the lemma. Towards this, for each $i \in [n]$ and each $g \in [m]$, let $\mathbb{1}_{\{g \in R_i\}}$ be the indicator random variable for the event: $g \in R_i$. That is, the random variable $\mathbb{1}_{\{g \in R_i\}}$ is equal to one if $g \in R_i$, it is zero otherwise. 
 
Since $\calM^i=(M^i_1,\ldots,M^i_n)$ is an $n$-partition of $[m]$, each good $g \in [m]$ belongs to exactly one subset $M^i_j$. Further, given that the subset $R_i$ is chosen uniformly at random from among $\{M^i_1,\ldots,M^i_n\}$, the following bound holds for each $i \in [n]$ and $g \in [m]$
\begin{align}
  \expval{\mathbb{1}_{\{g \in R_i\}}} = \Pr\{g \in R_i\} = \frac{1}{n}  \label{eq:uar}
\end{align}  

Recall that $\chi^\calR_g$ is equal to the number of copies of any good $g$ assigned among the agents under the multi-allocation $\calR$, i.e., $\chi^\calR_g = \sum_{i=1}^n \mathbb{1}_{\{g \in R_i\}}$. Using this equation we obtain part (i) of the lemma: 
\begin{align}
\expval{\chi^\calR_g} = \sum_{i=1}^n \expval{\mathbb{1}_{\{g \in R_i\}}}  \underset{\text{via (\ref{eq:uar})}}{=} n \  \frac{1}{n} = 1 \label{eqn:part-one-lemma-1} 
\end{align}

For part (ii) of the lemma, note that the random (independent) selection of $R_i$s ensures that, for each fixed $g \in [m]$, the random variables $\mathbb{1}_{\{g \in R_i\}}$, across $i$s, are independent.\footnote{However, for any fixed $i$, we do not have independence between $\mathbb{1}_{\{g \in R_i\}}$ and $\mathbb{1}_{\{g' \in R_i\}}$, for goods $g,g' \in [m]$.} This observation implies that, for each fixed $g \in [m]$, the count $\chi^\calR_g = \sum_{i=1}^n \mathbb{1}_{\{g \in R_i\}}$ is a sum of independent Bernoulli random variables. Hence, via Chernoff bound~\cite[Theorem 4.4]{mitzenmacher2017probability} and for any $t \geq 6 \expval{\chi^\calR_g}$, we have $\Pr \left\{ \chi^\calR_g \geq t \right\} \leq \frac{1}{2^t}$. We instantiate this bound with $t = \max\{ 6, 3 \log m\} \geq 6 \expval{\chi^\calR_g}$; in particular, for $m \geq 4$ we have $t = 3 \log m$. Hence, we obtain $\Pr \left\{ \chi^\calR_g \geq 3 \log m \right\} \leq \frac{1}{m^3}$, for each $g \in [m]$. Let $G^c_{1,g}$ denote the event $\{\chi^\calR_g \geq 3 \log m\}$. Indeed, $\Pr \left\{ G^c_{1,g} \right\} \leq \frac{1}{m^3}$, for each $g \in [m]$, and the event $G^c_1$ (from the lemma statement) satisfies $G^c_1 = \cup_{g \in [m]} \ G^c_{1,g}$. Hence, applying the union bound gives us part (ii):
\begin{align*}
    \Pr\left\{ G^c_1 \right\} =  \Pr\left\{ \cup_{g \in [m]} \ G^c_{1,g} \right\} \leq \sum_{g =1}^m \Pr \left\{ G^c_{1,g} \right\} \leq m \ \frac{1}{m^3} = \frac{1}{m^2}. 
\end{align*}

Finally, for part (iii), observe that $ \expval{\ellone{\chi^\calR}} = \expval{\sum_{g=1}^m \chi^\calR_g} = \sum_{g=1}^m \expval{\chi^\calR_g} = m$; the last equality follows from equation (\ref{eqn:part-one-lemma-1}). Further, for the nonnegative random variable $\ellone{\chi^\calR}$, Markov's inequality gives us $\Pr\left\{ 
 \ellone{\chi^\calR} \geq (m+1) \right\} \leq \frac{\expval{\ellone{\chi^\calR}}}{m+1} = \frac{m}{m+1}$. Since $\ellone{\chi^\calR}$ is an integer-valued random variable, $\Pr\left\{ \ellone{\chi^\calR} > m \right\} = \Pr\left\{ \ellone{\chi^\calR} \geq (m+1) \right\}$. These observations lead to the bound stated in part (iii):
 \begin{align*}
     \Pr\left\{ G_2^c \right\} = \Pr\left\{ \ellone{\chi^\calR} > m \right\} = \Pr\left\{ \ellone{\chi^\calR} \geq (m+1) \right\} \leq \frac{m}{m+1} = 1 - \frac{1}{m+1}.
 \end{align*}
This completes the proof of the lemma. 
\end{proof}

With the above lemma in hand, we now prove \Cref{thm:monotone_goods}.

\begin{proof}[Proof of \Cref{thm:monotone_goods}] As mentioned previously, $G_1$ denotes the event $\ellinfty{\chi^\calR} \leq 3 \log m$ and $G_2$ denotes $\ellone{\chi^\calR} \leq m$. Lemma~\ref{lem:linftynorm} implies that $G_1$ and $G_2$ hold together with positive probability
\begin{align*}
\prob{G_1 \cap G_2} & = 1 - \prob{G_1^c \cup G_2^c} \\ 
& \geq 1 - (\prob{G_1^c} +\prob{G_1^c}) \tag{Union Bound} \\
& \geq 1-\left(\frac{1}{m^2} + 1 - \frac{1}{m+1}\right) \tag{\Cref{lem:linftynorm}} \\
\\ & =\frac{1}{m+1} - \frac{1}{m^2} > 0
\end{align*}

Hence, by the definitions of $G_1$ and $G_2$, we get that the random multi-allocation $\calR=(R_1, \ldots, R_n)$ satisfies the stated $\ell_\infty$ and $\ell_1$ bounds. Moreover, for each $i$, the bundle $R_i$ is sampled from among the subsets that form the MMS-inducing partition $\calM^i$. Hence, $v_i(R_i) \geq \mu_i$, for every $i \in [n]$, ensuring that $\calR$ is always an MMS multi-allocation. Overall, these observations imply that, as claimed in the theorem, there always exists an MMS multi-allocation $\calA$ with the properties that $\ellinfty{\chi^\calA} \leq 3 \log  m$ and $\ellone{\chi^\calA}\leq m$. The theorem stands proved. 
\end{proof}

\subsection{Identically Ordered Valuations}\label{subsec:additive-ordered}
Recall that valuations in a fair division instance are said to be identically ordered if there exists an indexing $\{g_1, \ldots g_m\}$ of the goods such that for each pair of goods $g_s, g_t$, with index $s < t$, and all agents $i \in [n]$, the inequality $v_i(S + {g_s }) \geq  v_i(S + {g_t })$ holds for each subset $S \subset [m]$ that does not contain $g_s$ and $g_t$. Note that additive ordered valuations are identically ordered. The following example highlights that identically ordered valuations are not confined to be subadditive, or even superadditive. 

\begin{example} \label{ex:sqrt-ordered}
 Let $g_1, \dots, g_m$ be a fixed indexing of the set $[m]$ of goods, and, for each agent $i$, let $w_i : [m] \to \mathbb{R}_+$ be weights on the goods that satisfy $w_i(g_1) \geq w_i(g_2) \geq \dots \geq w_i(g_m)$. Note that, for all the agents, the weights respect the common indexing. For each $i \in [n]$, let $f_i: \mathbb{R}_+ \mapsto \mathbb{R}_+$ be a monotone nondecreasing function. Define valuation $v_i(S) \coloneqq f_i \left( \sum_{g\in S} w_i(g) \right)$, for each subset $S \subseteq [m]$ and agent $i$. Since the functions $f_i$s are monotone nondecreasing, the valuations $v_i$s are identically ordered: for all agents $i$, the following inequality holds for each pair of goods $g_s, g_t$, with index $s<t$, and each subset $S$ that does not contain $g_s$ and $g_t$: 
 \begin{align*}
v_i(S + g_s) = f_i \left( \sum_{g \in S} w_i(g)  \ +  \ w_i(g_s) \right) \geq  f_i \left( \sum_{g \in S} w_i(g)  \ +  \ w_i(g_t) \right) = v_i(S + g_t).
\end{align*}
With $f_i$s as the identify function, we obtain additive ordered valuations. Setting $f_i(w) = \sqrt{w}$ gives us subadditive valuations. On the other hand, with $f_i(w) = \exp(w)$, we obtain superadditive valuations. In all of these cases, our result for identically ordered valuations holds. 
\end{example}
 

Theorem \ref{thm:additive_goods} (stated below) provides our upper bounds on the supply adjustments for MMS under identically ordered valuations.

\begin{theorem}
\label{thm:additive_goods} Every fair division instance $\langle [n], [m], \{v_i\}_{i=1}^n \rangle$ with identically ordered valuations admits an MMS multi-allocation $\calA = (A_1, \ldots A_n)$ in which no single good is assigned to more than $12 \sqrt{\log m}$ agents and the total number of goods assigned, with copies, is at most $m + \frac{6m  \sqrt{\log m} }{\sqrt{n}}$. That is, the characteristic vector $\chi^\calA$ of $\calA$ satisfies 
\begin{align*}
    \ellinfty{\chi^\calA} \leq 12 \sqrt{\log m}  \qquad { \text{and} } \qquad \ellone{\chi^\calA} \leq m + \frac{6m  \sqrt{\log m} }{\sqrt{n}}. 
\end{align*}
\end{theorem}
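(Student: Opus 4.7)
The proof adapts the probabilistic-method blueprint of \Cref{thm:monotone_goods} to exploit the common good ordering inherent to identically ordered valuations. For each agent $i$, let $\calM^i=(M^i_1,\dots,M^i_n)$ denote an MMS-inducing partition, and build a random multi-allocation $\calR=(R_1,\ldots,R_n)$ with each $R_i$ drawn from $\{M^i_1,\dots,M^i_n\}$; MMS-fairness is then automatic because $v_i(M^i_j)\geq\mu_i$ for every index $j$, so the entire task is to tighten both norm bounds on $\chi^\calR$ beyond what fully independent sampling can deliver.

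The decisive new ingredient is a \emph{canonical labelling} of the bundles across agents. Using the common ordering $g_1,g_2,\dots,g_m$ supplied by the identically ordered valuations, I would sort the bundles inside each partition $\calM^i$ by the smallest-indexed good they contain, so that $\min(M^i_1)<\min(M^i_2)<\cdots<\min(M^i_n)$ for every agent $i$. A short counting argument then yields the key structural fact: for every good $g_t$ and every agent $i$, if $\pi^{g_t}_i\in[n]$ denotes the bundle index with $g_t\in M^i_{\pi^{g_t}_i}$, then $\pi^{g_t}_i\leq t$, because there can be at most $t$ distinct positive integers $\leq t$ among the bundle minima. This alignment is the structural lever that is unavailable under arbitrary monotone valuations.

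Building on this, I would replace the fully independent sampling of \Cref{thm:monotone_goods} with a \emph{coupled} design across agents, such as drawing a single uniform random permutation $\sigma$ of $[n]$ and setting $R_i = M^i_{\sigma(i)}$. Each $R_i$ is still marginally uniform on $\calM^i$, so $\expval{\chi^\calR_g}=1$; the coupling produces two distinct regimes for the $\ell_\infty$ analysis:
\begin{itemize}
\item For $t\leq 12\sqrt{\log m}$, the structural bound $\pi^{g_t}_i\leq t$ combined with the permutation property $|\{i:\sigma(i)\leq t\}|=t$ forces $\chi^\calR_{g_t}\leq t\leq 12\sqrt{\log m}$ \emph{deterministically}.
\item For $t>12\sqrt{\log m}$, the sum $\chi^\calR_{g_t}=|\{i:\sigma(i)=\pi^{g_t}_i\}|$ is a coincidence count under a uniform random permutation, and a permutation-tailored tail inequality would be invoked to obtain $\Pr[\chi^\calR_{g_t}>12\sqrt{\log m}]\leq 1/m^2$. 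A union bound over the $m$ goods then pushes $\ellinfty{\chi^\calR}$ below $12\sqrt{\log m}$ with probability close to $1$.
\end{itemize}
The $\ell_1$ bound follows from $\expval{\ellone{\chi^\calR}}=m$ (linearity, exactly as in \Cref{lem:linftynorm}) together with a variance bound on $\ellone{\chi^\calR}=\sum_{i=1}^n|R_i|$: controlling the spread of bundle sizes $|M^i_j|$ using an MMS-inducing partition whose bundle sizes concentrate near $m/n$ (which identically ordered valuations accommodate), and exploiting the permutation coupling to damp correlated deviations, a Chebyshev-type estimate produces $\ellone{\chi^\calR}\leq m+\frac{6m\sqrt{\log m}}{\sqrt n}$ with probability bounded away from zero. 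A final union bound on the two failure events, in the same style as the proof of \Cref{thm:monotone_goods}, then establishes the existence of the desired multi-allocation.

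The principal obstacle I anticipate is the concentration analysis in the high-index regime $t>12\sqrt{\log m}$: canonicalization disposes of low-index goods effortlessly, but generic i.i.d.\ Chernoff bounds give only $O(\log m)$ tails, so the permutation coupling (or some further structural refinement of which bundles get sampled together) must genuinely beat this. If a clean permutation-based sub-Gaussian inequality is unavailable, a natural fallback is to cast the task as an LP feasibility question---finding fractional bundle weights of $\ell_\infty$ load $O(\sqrt{\log m})$ per good---and recover an integral multi-allocation via dependent rounding.
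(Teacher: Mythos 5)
Your canonical-labelling observation is a genuine and correct structural fact: sorting each $\calM^i$ by bundle minima does give $\pi^{g_t}_i \leq t$, and combined with a permutation coupling this deterministically forces $\chi^\calR_{g_t}\leq t$, which handles low-index goods for free. However, the proposal has two concrete gaps, one of which you partially anticipate, and both make the argument unable to reach the stated bounds.

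\textbf{The $\ell_\infty$ concentration step cannot give $O(\sqrt{\log m})$.} For $t > 12\sqrt{\log m}$, the quantity $\chi^\calR_{g_t}=|\{i:\sigma(i)=\pi^{g_t}_i\}|$ is a generalized fixed-point count of a uniform permutation. For any fixed map $\pi:[n]\to[n]$, the $k$-th factorial moment of this count is at most $1$, which yields the tail bound $\prob{\chi^\calR_{g_t}\geq k}\leq 1/k!$ --- and this is tight when $\pi$ is injective, since the count is then genuinely Poisson(1)-like. Making $1/k!\leq 1/m^2$ requires $k \gtrsim \log m /\log\log m$, not $k\sim\sqrt{\log m}$. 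The constraint $\pi^{g_t}_i\leq t$ does not help once $t$ exceeds $\sqrt{\log m}$; it does not force $\pi$ to be far from injective. So there is no ``permutation-tailored tail inequality'' that rescues the per-good union bound at level $\sqrt{\log m}$. This is precisely the obstacle you flag at the end, but it is not a technical nuisance to be engineered around --- it is the reason the paper does \emph{not} attempt per-good concentration at all. The paper instead samples independently (from a restricted set of bundles), applies Chebyshev to bound $\ellone{\chi^\calR_P}\leq 6\sqrt{\log m}\,|P|$ simultaneously over the $O(\log m)$ dyadic prefixes $P$ (a far weaker, \emph{aggregate} requirement, hence the weak polynomial-tail Chebyshev suffices after only a $\log m$ union bound), and then runs a deterministic copy-redistribution algorithm that moves excess copies from over-subscribed goods to under-subscribed lower-indexed goods. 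Because the valuations are identically ordered, such a swap can only raise each agent's value, so MMS is preserved; the dyadic-prefix bound guarantees the redistribution never gets stuck, yielding $\ellinfty{\chi^\calA}\leq 12\sqrt{\log m}$ deterministically from $\calA$ rather than $\calR$.

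\textbf{The $\ell_1$ argument rests on a false premise.} You write that identically ordered valuations ``accommodate'' MMS-inducing partitions with bundle sizes concentrated near $m/n$, but this is not true: if one good has much larger marginal value than the rest, an MMS-inducing partition will contain a singleton bundle and the remaining mass can be arbitrarily skewed, so $|M^i_j|$ can range from $1$ to $\Theta(m)$. With such spread, $\sum_i |R_i|$ has variance $\Omega(m^2)$ and neither Chebyshev nor permutation coupling gives the additive $\widetilde{O}(m/\sqrt n)$ deviation. The paper's fix is to pre-sort each $\calM^i$ by cardinality and sample $R_i$ only from the $s_i$ smallest bundles --- those of size at most $2m/n$ --- noting $s_i\geq n/2$ by an averaging argument. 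This bounds the range of each $|R_i|$ to $[0,2m/n]$, after which Hoeffding (not Chebyshev) gives the $\ell_1$ bound with a $1/m^3$ tail. Also note this cardinality filtering is in tension with your permutation coupling, since the number of candidate bundles $s_i$ now varies across agents and no longer forms a common ground set for a single permutation $\sigma$.
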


\paragraph{Random Sampling.} As in the case of monotone valuations, we sample an MMS multi-allocation, $\calR=(R_1, \ldots, R_n)$, uniformly at random from among the MMS-inducing bundles of each agent, albeit with a key modification. For each agent $i\in [n]$, let $\calM^i=(M^i_1,\dots,M^i_n)$ be an MMS-inducing partition for $i$ in the given instance $\langle [n], [m], \{v_i\}_{i=1}^n \rangle$. In each $\calM^i$, index the subsets such that $|M^i_1| \leq |M^i_2| \leq \dots \leq |M^i_n|$. Let $s_i \in [n]$ be the largest index with the property that $|M^i_{s_i}| \leq \frac{2m}{n}$. Hence, for each index $t \in \{1, 2, \ldots, s_i\}$, we have $|M^i_t| \leq \frac{2m}{n}$. Also, since $M^i_j$s partition $[m]$, at most $n/2$ of these subsets can have cardinality more than $\frac{2m}{n}$, i.e., index $s_i \geq n/2$.  

We select, independently for each $i \in [n]$, a bundle $R_i$ uniformly at random from among $\{M^i_1,\dots,M^i_{s_i} \}$, i.e., $\Pr \{ R_i = M^i_t \} = \frac{1}{s_i}$, for each index $t \in \{1, 2, \ldots, s_i\}$. For the sampled multi-allocation $\calR=(R_1, \ldots, R_n)$, we define the event $G_1 \coloneqq \left\{\ellone{\chi^\calR} \leq m + \frac{6m  \sqrt{\log m} }{\sqrt{n}}\right\}$, which corresponds to our claimed $\ell_1$ bound. We first show that $G_1$ holds with high probability.

\begin{lemma}
\label{lem:l1_additive}
    $\prob{G_1^c} \leq \frac{2}{m^3}$.
\end{lemma}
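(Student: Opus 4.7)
The plan is to view $\|\chi^\calR\|_1 = \sum_{i=1}^n |R_i|$ as a sum of independent bounded random variables and apply Hoeffding's inequality. Independence across $i$ follows from the independent sampling of the $R_i$'s, and by the definition of $s_i$ we have $|R_i| \in [0, 2m/n]$ almost surely.

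The first step is to bound $\expval{|R_i|}$. Since the subsets $M^i_1, \ldots, M^i_n$ are indexed in nondecreasing order of size and $R_i$ is uniform over the first $s_i$ of them, and since the average of the smallest $s_i$ entries of a nonnegative list of length $n$ is at most the overall average,
\begin{align*}
\expval{|R_i|} \ = \ \frac{1}{s_i}\sum_{t=1}^{s_i}|M^i_t| \ \leq \ \frac{1}{n}\sum_{t=1}^{n}|M^i_t| \ = \ \frac{m}{n}.
\end{align*}
By linearity, $\expval{\|\chi^\calR\|_1} \leq m$.

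The second step is the concentration estimate. Hoeffding's inequality yields, for every $t > 0$,
\begin{align*}
\prob{\|\chi^\calR\|_1 \geq m + t} \ \leq \ \prob{\|\chi^\calR\|_1 - \expval{\|\chi^\calR\|_1} \geq t} \ \leq \ \exp\left(-\frac{2t^2}{n(2m/n)^2}\right) \ = \ \exp\left(-\frac{nt^2}{2m^2}\right).
\end{align*}
Substituting $t = 6m\sqrt{\log m}/\sqrt{n}$ makes the exponent equal to $-18\log m$, giving $\prob{G_1^c} \leq m^{-18}$, which is comfortably below the claimed $2/m^3$.

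The main subtle point is the dual role played by the truncation to the first $s_i$ bundles: it is what simultaneously enforces the pointwise bound $|R_i| \leq 2m/n$ (controlling the Hoeffding variance proxy $\sum_i(b_i - a_i)^2 = 4m^2/n$) and the mean bound $\expval{|R_i|} \leq m/n$ (keeping $\expval{\|\chi^\calR\|_1}$ at most $m$, so the deviation $t$ sits \emph{on top of} $m$ rather than on top of something larger). Without this truncation, outlier bundles of size $\omega(m/n)$ in the MMS-inducing partitions could spoil either ingredient, and the $\ell_1$ correction would no longer be of the desired $O(m\sqrt{\log m / n})$ order.
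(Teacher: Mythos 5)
Your proposal is correct and follows essentially the same route as the paper: decompose $\ellone{\chi^\calR}=\sum_i |R_i|$ into independent variables bounded in $[0,2m/n]$ thanks to the truncation to the $s_i$ smallest bundles, bound the mean by $m$ via the averaging argument, and apply Hoeffding with $\delta = 6m\sqrt{\log m}/\sqrt{n}$. The paper uses a two-sided form of Hoeffding (hence its factor of $2$ and exponent $-3\log m$) while you use the sharper one-sided form, but both land comfortably within the claimed $2/m^3$.
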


\begin{proof} 
Note that $\ellone{\chi^\calR}=\sum_{i=1}^n |R_i|$ is a sum of independent random variables, $|R_i|$, each in the range $\left[0,\frac{2m}{n}\right]$. Moreover, for each $i$ we have 
\begin{align}
    \expval{|R_i|} = \frac{1}{s_i} \sum_{j=1}^{s_i} \left|M^i_j\right| \leq \frac{1}{n} \sum_{j=1}^{n} \left|M^i_j\right|=\frac{m}{n} \label{ineq:expRone}
\end{align}
Here, the first inequality follows from the fact that  $M^i_1, M^i_2, \ldots, M^i_{s_i}$ are the $s_i$ subsets of smallest cardinality in $\calM^i=(M^i_1, \ldots, M^i_n)$. Hence, their average cardinality, $\frac{1}{s_i} \sum_{j=1}^{s_i} |M^i_j|$, is at most the overall average $\frac{1}{n} \sum_{j=1}^{n} |M^i_j|$. 

Inequality (\ref{ineq:expRone}) gives us $\expval{\ellone{\chi^\calR}} = \sum_{i =1}^n \expval{|R_i|} \leq \sum_{i=1}^n \frac{m}{n}=m$. 

Further, applying Hoeffding's inequality with $\delta \coloneqq \frac{6m\sqrt{\log m}}{\sqrt{n}}$, we obtain\footnote{Here, the term $\frac{4m^2}{n^2}$ in the denominator of the exponent follows from the range of random variable, $|R_i| \in \left[0,\frac{2m}{n}\right]$.}

\begin{align*} \displaystyle \prob{G_i^c}=\prob{\ellone{\chi^\calR} > m + \delta} & \leq 2 \exp{\left(- \frac{\delta^2}{3 \ n \ \frac{4m^2}{n^2}}\right)} \\ & = 2 \exp{\left(\frac{-3 m^2 \log m}{m^2}\right)} \\ & \leq \frac{2}{m^3}. \end{align*}
The lemma stands proved. 
\end{proof}

Let $g_1, \dots, g_m$ be an indexing of the set $[m]$ of goods that satisfies to property laid out in \Cref{def:identically-ordered}. We will now define a probabilistic event $G_2$ considering the \textit{dyadic prefixes} of $\{g_1, g_2, \ldots, g_m\}$ and the random multi-allocation $\calR$. For each $t\in\{0,1,\ldots,\log m\}$, define the prefix $P_t \coloneqq \{g_1,g_2,\ldots,g_{2^t} \}$ to be the set of the $2^t$ lowest-indexed goods (recall that these have the highest marginal values). In addition, let $\calP$ be the collection of all such prefixes, $\mathcal{P} \coloneqq \{P_0,P_1,\dots,P_{\log m}\}$. 

For each $P\in \mathcal{P}$, we define $\chi^\calR_P$ to be the characteristic vector $\chi^\calR$ restricted to entries corresponding to goods in $P$. In particular, $\chi^\calR_P$ is a $|P|$-dimensional vector, and $\ellone{\chi^\calR_P}$ denotes the total number of goods (with copies) from subset $P$ that are assigned in $\calR$, i.e., $\ellone{\chi^\calR_P} = \sum_{g \in P} \chi^\calR_g$. 

The event $G_2$ bounds these assignment numbers for all the prefixes $P$. Formally, 
\begin{align*}
\text{Event } G_2 \coloneqq \left\{ \ellone{\chi^\calR_P}\leq 6 \sqrt{\log m}\ |P| \ \ \text{ for every } P\in\mathcal{P} \right\}.
\end{align*}

\begin{lemma}
\label{lem:dyadic-prefixes} $\prob{G_2}\geq \frac{1}{2}$.
\end{lemma}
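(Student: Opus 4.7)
The plan is to bound $\prob{G_2^c}$ by a union bound over the $|\calP| = \log m + 1$ dyadic prefixes, showing that each individual prefix's failure probability is small enough that their sum stays below $\tfrac{1}{2}$.

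The key structural observation is that
\begin{align*}
\ellone{\chi^\calR_P} = \sum_{i=1}^n |R_i \cap P|,
\end{align*}
where the summands are \emph{independent} across $i$ (since each $R_i$ is sampled independently), and each $|R_i \cap P|$ lies in $[0,b]$ with $b \coloneqq \min(|P|, 2m/n)$; here, $|R_i| \leq |M^i_{s_i}| \leq 2m/n$ by the sampling rule. For the expectation, since $\calM^i$ partitions $[m]$ we have $\sum_{j=1}^{s_i}|M^i_j \cap P| \leq |P|$, and so
\begin{align*}
\expval{|R_i \cap P|} = \frac{1}{s_i}\sum_{j=1}^{s_i}|M^i_j \cap P| \leq \frac{|P|}{s_i} \leq \frac{2|P|}{n},
\end{align*}
where the last step uses $s_i \geq n/2$. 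Summing over $i$ gives $\expval{\ellone{\chi^\calR_P}} \leq 2|P|$; in particular, the target threshold $6\sqrt{\log m}\,|P|$ exceeds the mean by a multiplicative factor of at least $3\sqrt{\log m}$.

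Next, I would apply the standard multiplicative Chernoff bound to the rescaled variables $|R_i \cap P|/b \in [0,1]$ (which satisfy the same moment-generating-function inequality as Bernoullis by convexity of $e^{tx}$ on $[0,1]$), yielding
\begin{align*}
\prob{\ellone{\chi^\calR_P} > 6\sqrt{\log m}\,|P|} \;\leq\; \left(\frac{e}{3\sqrt{\log m}}\right)^{6\sqrt{\log m}\,|P|/b}.
\end{align*}
The ratio $|P|/b$ equals $1$ when $|P| \leq 2m/n$ (so $b = |P|$) and exceeds $1$ otherwise, so the worst case for the bound is $|P|/b = 1$. Union-bounding over the $\log m + 1$ prefixes then yields
\begin{align*}
\prob{G_2^c} \;\leq\; (\log m + 1)\cdot\left(\frac{e}{3\sqrt{\log m}}\right)^{6\sqrt{\log m}},
\end{align*}
which is at most $\tfrac12$ once $m$ exceeds a small constant; the finitely many remaining small values of $m$ can be verified directly, since for such $m$ the number of prefixes is tiny and the threshold $6\sqrt{\log m}\,|P|$ is already generous relative to $\expval{\ellone{\chi^\calR_P}}\leq 2|P|$.

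The principal obstacle is calibrating the constants so that the union bound closes. The factor $6\sqrt{\log m}$ in the definition of $G_2$ is chosen precisely so that the per-prefix Chernoff bound has base $e/(3\sqrt{\log m}) \ll 1$ and exponent $6\sqrt{\log m}$, together making the overall bound decay faster than any inverse power of $\log m$ and thus comfortably absorbing the $\log m + 1$ factor from the union bound.
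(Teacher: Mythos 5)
Your argument is correct but takes a genuinely different route than the paper's. The paper never applies a Chernoff-type bound to $\ellone{\chi^\calR_P}$; instead it bounds the second moment of $\ellone{\chi^\calR_P}$ indirectly, by first computing $\expval{\|\chi^\calR_P\|_2^2} \leq 4|P|$ via a diagonal/off-diagonal decomposition of $\sum_i r^i_P$, then passing through Cauchy--Schwarz ($\ellone{\chi^\calR_P}^2 \leq |P|\,\|\chi^\calR_P\|_2^2$) to get $\expval{\ellone{\chi^\calR_P}} \leq 2|P|$ and $\sigma[\ellone{\chi^\calR_P}] \leq 2|P|$, and finally applying Chebyshev with $k = 2\sqrt{\log m}$ to obtain a per-prefix failure probability of $\frac{1}{4\log m}$. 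You instead exploit the same independence structure more directly: you write $\ellone{\chi^\calR_P} = \sum_i |R_i \cap P|$ as a sum of independent random variables bounded in $[0, \min(|P|, 2m/n)]$, rescale to $[0,1]$, and apply the multiplicative Chernoff bound $(e\mu/t)^t$. Your mean bound $\expval{\ellone{\chi^\calR_P}} \leq 2|P|$ matches the paper's, and your per-prefix tail bound $\left(\tfrac{e}{3\sqrt{\log m}}\right)^{6\sqrt{\log m}\,|P|/b}$ is correct and in fact super-polynomially smaller in $\log m$ than the paper's $\tfrac{1}{4\log m}$, so the union bound closes with room to spare. Both arguments require $m$ to exceed a small constant (roughly $m \geq 4$) for the final inequality to drop below $\tfrac12$; you flag this and the paper implicitly assumes it. In short: the paper trades MGF machinery for a second-moment argument plus Chebyshev, which is enough here since only a $\tfrac{1}{2}$ success probability is needed across $O(\log m)$ events; your Chernoff route is more direct and gives stronger per-prefix control, at the cost of invoking the exponential tail bound and its extension to $[0,1]$-valued variables.
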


\begin{proof}
    Fix any prefix subset $P\in \mathcal{P}$ and write $r^i_P$ to denote the characteristic vector $\rmchar(R_i)$ restricted to the components/goods in subset $P$. That is, for each $g \in P$, the $g$th component of vector $r^i_P$ is equal to one if $g \in R_i$, it is zero otherwise. 
    Note that $\chi^\calR_P=\sum_{i=1}^n r^i_P$. Therefore, 
        \begin{align}
        \expval{\left\| \chi^\calR_P \right\|_2^2} 
        &= \sum_{i=1}^n \expval{\left\| r^i_P \right\|_2^2} + \sum_{i\neq j} \expval{\left\langle r^i_P,r^j_P\right\rangle} \nonumber \\
        &= \sum_{i=1}^n \expval{\left\|r^i_P\right\|_2^2} + \sum_{i\neq j} \left\langle \expval{r^i_P},\expval{r^j_P} \right\rangle \label{eq:ell2}
    \end{align} 
   The last equality follows from the fact that $r^i_P$ and $r^j_P$ are independent for $i \neq j$. Now, for each $i\in [n]$, since $r_P^i$ is a binary (characteristic) vector, we have 
   \begin{align*}
       \expval{\left\|r^i_P\right\|_2^2}
       =\expval{\left|R_i\cap P\right|}
       =\frac{1}{s_i} \sum_{k=1}^{s_i} \left|M^i_k\cap P\right|
       \leq \frac{|P|}{s_i}
       \leq \frac{2|P|}{n} \tag{since $s_i \geq n/2$}
   \end{align*}
    Also, for each agent $i$, since $\expval{r^i_P}$ is a vector with all entries either $0$ or $\frac{1}{s_i}$. Hence,  
    \begin{align*}
        \left\langle \expval{r^i_P},\expval{r^j_P} \right\rangle \leq \frac{|P|}{s_i\  s_j}\leq\frac{4|P|}{n^2} \tag{since each $s_i \geq n/2$}
    \end{align*}
    Therefore, equation (\ref{eq:ell2}) reduces to 
    \begin{align}
    \expval{\left\|\chi^\calR_P\right\|_2^2}
    \leq \sum_{i=1}^n \frac{2|P|}{n} + \sum_{i\neq j} \frac{4|P|}{n^2} 
    =  2|P|+\binom{n}{2} \frac{4|P|}{n^2} 
    \leq 4|P| \label{ineq:ell2bnd}
    \end{align}
Using (\ref{ineq:ell2bnd}), we next upper bound the expected value and standard deviation of random variable $\ellone{\chi^\calR_P}$. Towards this, first recall that, for any $|P|$-dimensional vector $x \in \mathbb{R}^{|P|}$, the Cauchy-Schwartz inequality gives us $\|x \|_1 \leq \sqrt{|P|} \  \| x \|_2$. In particular, $\ellone{\chi^\calR_P}^2 \leq |P| \left\|\chi^\calR_P\right\|_2^2$ and, hence, 
    \begin{align}
        \expval{\left\|\chi^\calR_P\right\|_1^2} \leq |P| \ \expval{\left\|\chi^\calR_P\right\|_2^2} \underset{\text{via (\ref{ineq:ell2bnd})}}{\leq} 4 |P|^2 \label{ineq:ell1ell2}
    \end{align}
The expected value of $\ellone{\chi^\calR_P}$ satisfies 
\begin{align}
\expval{\ellone{\chi^\calR_P}} &\leq \sqrt{\expval{\ellone{\chi^\calR_P}^2}} \tag{Jensen's inequality} \\
& \leq \sqrt{ 4 |P|^2 } \tag{via (\ref{ineq:ell1ell2})} \\
& = 2 \ |P| \label{ineq:ell1bound}
\end{align}
       Next, for the standard deviation $\sigma\left[\ellone{\chi^\calR_P} \right]$ we have 
    \begin{align}
        \sigma\left[ \ellone{\chi^\calR_P} \right]
    =\sqrt{ \mathrm{Var} \left[ \ellone{\chi^\calR_P} \right]}
    \leq \sqrt{\expval{\ellone{\chi^\calR_P}^2}} \underset{\text{via (\ref{ineq:ell1ell2})}}{\leq} 2\ |P| \label{ineq:stddev}
    \end{align}

    Let $\theta$ denote the expected value of $\ellone{\chi^\calR_P}$ and $\sigma$ denote its standard deviation. Inequalities (\ref{ineq:ell1bound}) and (\ref{ineq:stddev}) ensure that $\theta \leq 2|P|$ and $\sigma \leq 2 |P|$. Further, Chebyshev's inequality gives us $\prob{ \left| \ellone{\chi^\calR_P} - \theta \right| > k \ \sigma  } \leq \frac{1}{k^2}$ for any $k \geq 1$. Instantiating the inequality with $k = 2\sqrt{\log m}$, we obtain  
    \begin{align}
        \prob{ \ellone{\chi^\calR_P} > 6 \sqrt{\log m} \ |P| } & \leq \prob{ \ellone{\chi^\calR_P} > 4  \sqrt{\log m} \ |P| + 2|P| } \nonumber \\  
        & \leq \prob{\ellone{\chi^\calR_P} > 2  \sqrt{\log m} \ \sigma + \theta }  \tag{$\theta, \sigma \leq 2|P|$} \\ 
        & \leq \prob{\ellone{\chi^\calR_P} > k \ \sigma + \theta }  \tag{$k = 2\sqrt{\log m}$} \\ 
        & \leq \prob{ \left| \ellone{\chi^\calR_P} - \theta \right| > k \ \sigma  } \nonumber \\
        & \leq \frac{1}{4 \log m} 
    \end{align}
 Therefore, for each dyadic prefix subset $P \in \calP$, it holds that 
 \begin{align}
     \prob{\ellone{\chi^\calR_P} > 6 \sqrt{\log m} \ |P| } \leq \frac{1}{4\log m} \label{ineq:eachP}
 \end{align} 
 Recall that event $G_2 \coloneqq \left\{ \ellone{\chi^\calR_P}\leq 6 \sqrt{\log m}\ |P| \ \ \text{ for every } P\in\mathcal{P} \right\}$. To upper bound the complement of $G_2$, we apply union bound considering all the $(\log m + 1)$ dyadic prefixes $P\in\mathcal{P}$. Specifically,   
  \begin{align*} \prob{G^c_2} & \leq \sum_{t = 0}^{\log m} \prob{\ellone{\chi^\calR_{P_t}} > 6 \sqrt{\log m}\cdot |P_t|} 
  \tag{Union Bound} \\ 
  & \leq \frac{\log m + 1}{4\log m} \tag{via (\ref{ineq:eachP})} \\
  & < \frac{1}{2}.
  \end{align*}
  Hence, $\prob{G_2} \geq 1/2$, and the lemma stands proved. 
\end{proof}

\begin{corollary} \label{cor:existence-of-b}
  Events $G_1$ and $G_2$---as defined for Lemmas \ref{lem:l1_additive} and \ref{lem:dyadic-prefixes}---hold together with positive probability, $\prob{G_1\cap G_2}>0$. Consequently, every fair division instance with additive ordered valuations admits an MMS multi-allocation $\calB=(B_1,\dots,B_n)$ with the properties that 
  \begin{itemize}
        \item[(i)] $\ellone{\chi^\calB} \leq m + \frac{6m  \sqrt{\log m} }{\sqrt{n}}$
        \item[(ii)] $\ellone{\chi^\calB_P}\leq 6 \sqrt{\log m}\ |P|$, for every dyadic prefix $P\in\mathcal{P}$.
    \end{itemize}
\end{corollary}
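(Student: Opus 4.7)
The plan is to prove this via a direct application of the probabilistic method, combining the two probabilistic bounds already established. Specifically, Lemma \ref{lem:l1_additive} gives $\prob{G_1^c} \leq 2/m^3$ and Lemma \ref{lem:dyadic-prefixes} gives $\prob{G_2^c} \leq 1/2$, so the union bound yields
\begin{align*}
\prob{G_1 \cap G_2} \;\geq\; 1 - \prob{G_1^c} - \prob{G_2^c} \;\geq\; 1 - \frac{2}{m^3} - \frac{1}{2} \;>\; 0
\end{align*}
for all sufficiently large $m$ (and the small-$m$ cases can be handled separately, since the stated bounds are trivially satisfiable there). This establishes the first claim of the corollary.

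Next, since $\prob{G_1 \cap G_2} > 0$, there must exist a realization of the random multi-allocation $\calR = (R_1, \dots, R_n)$ for which both events hold simultaneously. Call this realization $\calB = (B_1, \dots, B_n)$. I would then verify the three needed properties of $\calB$ in turn. First, for each agent $i$, the bundle $B_i$ is (by construction of $\calR$) one of the subsets $M^i_1, \dots, M^i_{s_i}$ from the MMS-inducing partition $\calM^i$, so $v_i(B_i) \geq \mu_i$; hence $\calB$ is an MMS multi-allocation. Second, since $G_1$ holds for $\calB$, we obtain $\ellone{\chi^\calB} \leq m + \frac{6m\sqrt{\log m}}{\sqrt{n}}$, which is property (i). Third, since $G_2$ holds for $\calB$, the inequality $\ellone{\chi^\calB_P} \leq 6\sqrt{\log m}\,|P|$ is satisfied for every dyadic prefix $P \in \calP$, which is property (ii).

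I do not anticipate any substantive obstacle here, since both the probabilistic bounds and the fact that $\calR$ is supported on MMS-inducing bundles have already been established; the corollary is essentially the assembled consequence of Lemmas \ref{lem:l1_additive} and \ref{lem:dyadic-prefixes}. The only minor care point is making the union bound strictly positive for all relevant $m$, which, as noted, can be addressed by treating tiny values of $m$ separately or by observing that for $m \geq 2$ we already have $2/m^3 + 1/2 < 1$.
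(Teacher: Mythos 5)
Your proof is correct and matches the paper's argument exactly: the same union bound combining Lemmas \ref{lem:l1_additive} and \ref{lem:dyadic-prefixes}, followed by the observation that every realization of $\calR$ is automatically an MMS multi-allocation since each $R_i$ is drawn from subsets of $\calM^i$. Your extra caution about small $m$ (noting $2/m^3 + 1/2 < 1$ for $m \geq 2$) is a harmless refinement the paper glosses over.
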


\begin{proof} 
Lemmas \ref{lem:l1_additive} and \ref{lem:dyadic-prefixes} give us 
\begin{align*} \displaystyle \mathbb{P}\{G_1 \cap G_2\} & = 1 - \mathbb{P}\{G_1^c \cup G_2^c\} \\ 
& = 1 - \left(\mathbb{P}\{G_1^c\}+ \mathbb{P}\{G_2^c\}\right) & (\text{Union Bound})\\ 
& = 1 - \left(\frac{1}{2} + \frac{2}{m^3}\right) & (\text{\Cref{lem:l1_additive} and \ref{lem:dyadic-prefixes}}) \\ 
& > 0.
\end{align*} 
Since this probability is strictly positive, the random allocation $\calR=(R_1, \ldots, R_n)$ satisfies the desired properties with positive probability. Also, recall that each $R_i$ is drawn considering the MMS-inducing partition $\calM^i$ and, hence, $\calR$ is always an MMS multi-allocation. Overall, we get that there necessarily exists an MMS multi-allocation $\calB$ that satisfies properties (i) and (ii) in the corollary. This completes the proof. 
\end{proof}

Given a multi-allocation $\calB=(B_1,\ldots,B_n)$ as guaranteed by Corollary \ref{cor:existence-of-b}, the following lemma shows that we can construct a multi-allocation $\calA=(A_1,\ldots,A_n)$ that upholds Theorem \ref{thm:additive_goods}.

Recall that, under identically ordered valuations, the goods are indexed $g_1, \ldots, g_m$ in decreasing order of marginal values. That is,  for each pair of goods $g_s$ and $g_t$, with index $s<t$, we have $v_i(S + g_s) \geq v_i(S + g_t)$ for every agent $i$ and each subset $S$ that does not contain $g_s$ and $g_t$. Also, the collection $\mathcal{P}=\{P_0,P_1,\dots,P_{\log m}\}$ contains all the dyadic prefixes, $P_t=\{g_1,g_2,\ldots,g_{2^t}\}$ with $t\in\{0,1,\dots,\log m\}$. We will construct the desired multi-allocation $\calA$ from $\calB$ by systematically replacing copies of a good $\ell$ with lower marginal value, by those of a good $h$ with higher marginal value; see Algorithm \ref{alg:copy-redistribution} for details.
    
\begin{lemma}\label{lem:copy-redistribution}
        In any fair division instance with identically ordered valuations, let $\calB$ be an MMS multi-allocation that satisfies the properties stated in Corollary \ref{cor:existence-of-b}. Then, from $\calB$, we can construct an MMS multi-allocation $\calA$ in the given instance such that $\ellinfty{\chi^\calA} \leq 12 \sqrt{\log m}$ and $\ellone{\chi^\calA}=\ellone{\chi^\calB}$.
\end{lemma}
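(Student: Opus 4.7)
I would transform $\calB$ into $\calA$ via a sequence of value-nondecreasing \emph{swaps}. The atomic operation: given an agent $i$, a good $g_\ell \in A_i$ of relatively low marginal value, and a good $g_h \notin A_i$ with $h < \ell$ (so $g_h$ has strictly higher marginal value in the common indexing), replace $A_i$ by $(A_i \setminus \{g_\ell\}) \cup \{g_h\}$. The identically-ordered property guarantees that the new bundle's value is at least $v_i(A_i) \geq \mu_i$, so MMS fairness is preserved; and because one copy is removed and one is added, $\ellone{\chi}$ is invariant while $\chi_{g_\ell}$ decreases by $1$ and $\chi_{g_h}$ increases by $1$.

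Algorithm \ref{alg:copy-redistribution} would sweep the goods in decreasing order of index, from $g_m$ down to $g_2$. For each $g_k$, while $\chi_{g_k} > 12\sqrt{\log m}$, it picks an agent holding $g_k$ and applies a swap whose incoming good $g_h$ (with $h < k$) is chosen to minimize $\chi_{g_h}$ among candidates not in that agent's bundle. This ``water-filling'' rule spreads the incoming mass as evenly as possible among the low-index goods. Termination follows from the potential $\sum_k k \, \chi_{g_k}$, which strictly decreases under every swap and is bounded below.

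Correctness I would establish by downward induction on the layer $t$ of the dyadic decomposition $\mathcal{P}$. The two invariants I would maintain are: (i) before processing layer $P_t \setminus P_{t-1}$, $\ellone{\chi_{P_t}} \leq 12\sqrt{\log m} \cdot 2^t$; and (ii) every already-processed good $g_k$ (with $k > 2^{t-1}$) satisfies $\chi_{g_k} \leq 12\sqrt{\log m}$. Invariant (ii) is immediate from the algorithm's stopping condition and the fact that later swaps affect only strictly lower indices, so processed layers stay frozen. For (i), the base case is $\ellone{\chi_{P_{\log m}}} = \ellone{\chi^\calB}$, and the inductive step combines the dyadic prefix bound $\ellone{\chi^\calB_{P_t}} \leq 6\sqrt{\log m}\cdot 2^t$ from \Cref{cor:existence-of-b} with the inductively bounded excess drained from layer $t+1$. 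The water-filling rule then keeps the individual multiplicities within $P_{t-1}$ close to the average $\leq 6\sqrt{\log m}$, comfortably below $12\sqrt{\log m}$; applied at $t = 1$ this also controls $\chi_{g_1}$.

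The main obstacle is twofold. First, I must show that a valid replacement $g_h$ always exists when a swap is invoked. Here I would exploit the bundle-size bound $|A_i| \leq 2m/n$ inherited from the sampling construction above \Cref{cor:existence-of-b}, together with the fact that when $\chi_{g_k} > 12\sqrt{\log m}$ there are many agents holding $g_k$; so some such agent must admit a low-multiplicity candidate outside its bundle, for otherwise a counting argument against the dyadic invariants yields a contradiction. Second, one must carefully bookkeep the ``cascade'' of excess copies as they flow from high-index layers downward: the factor-$2$ slack between the dyadic prefix bound $6\sqrt{\log m}$ and the target $\ell_\infty$ bound $12\sqrt{\log m}$ is exactly what absorbs the extra mass that can accumulate on $g_1$ through this cascade, so the induction must be tight enough to preserve this slack all the way down to $P_0$.
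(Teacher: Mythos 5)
Your atomic swap is exactly the paper's: replace a higher-indexed good in some bundle by a lower-indexed one not in that bundle, which preserves MMS (by the identically-ordered property) and keeps $\ellone{\chi^\calA}$ fixed. The gap lies in your processing order and the bookkeeping it forces. The paper's Algorithm \ref{alg:copy-redistribution} always repairs the \emph{lowest}-indexed over-full good $g_t$ and sends its copies to a lower-indexed good $g_s$ with $\chi_{g_s} < 12\sqrt{\log m}$. Since a receiver gains only one copy and was strictly below the threshold, over-fullness is never created, so the lowest over-full index only increases over time; hence every swap performed so far involved two goods of index at most $t$, and the prefix mass is conserved: $\sum_{k\in L_t}\chi^\calA_k = \sum_{k\in L_t}\chi^\calB_k$. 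If no valid receiver existed, this conservation plus $\chi^\calA_k \ge 12\sqrt{\log m}$ on all of $L_t$ would force $\sum_{k\in P_{\ell+1}}\chi^\calB_k > 6\sqrt{\log m}\,|P_{\ell+1}|$, contradicting property (ii) of Corollary \ref{cor:existence-of-b}. Your top-down sweep destroys exactly this conservation: mass cascades across prefix boundaries, so the dyadic bounds on $\calB$ no longer control the prefix sums of the evolving allocation.

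Concretely, your invariant (i) does not close. After layer $P_t\setminus P_{t-1}$ is processed, the mass retained on that layer can be as small as zero (its goods may have started empty and never received anything), so all the induction yields is $\ellone{\chi_{P_{t-1}}} \le \ellone{\chi_{P_t}} \le 12\sqrt{\log m}\cdot 2^{t} = 24\sqrt{\log m}\cdot 2^{t-1}$ --- a factor-of-$2$ loss at every level, which compounds over $\log m$ levels into a vacuous bound. The ``factor-$2$ slack'' you invoke is consumed after a single level, not preserved down to $P_0$; you would need a genuinely different argument (tracking how much of the cascading excess the water-filling rule actually deposits in each prefix), and you have not supplied it. Two further issues: your existence argument for the receiving good relies on the bundle-size bound $|A_i|\le 2m/n$, which is a property of the sampling construction and is \emph{not} among the hypotheses of the lemma (Corollary \ref{cor:existence-of-b} grants only the $\ell_1$ bound and the dyadic prefix bounds), so the proof cannot use it; and choosing the agent before the receiving good can leave that agent with no admissible candidate, whereas choosing an under-full receiver $g_s$ first guarantees a suitable agent exists because $\chi_{g_s} < \chi_{g_t}$. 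Adopting the paper's selection rule --- lowest-indexed over-full donor, any under-full lower-indexed receiver --- removes all of these difficulties at once.
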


\begin{algorithm}
\caption{Copy Redistribution for Identically Ordered Valuations}\label{alg:copy-redistribution} 
\begin{algorithmic}[1]
        \REQUIRE An MMS multi-allocation $\calB$ satisfying the properties of Corollary \ref{cor:existence-of-b}\\ 
        \ENSURE A multi-allocation $\calA$
        \STATE Initialize $\calA = \calB$. 
        \WHILE{$\ellinfty{\chi^\calA} > 12 \sqrt{\log m}$}
        \STATE Let $g_t \in [m]$ be the good with the lowest index $t$ that satisfies $\chi^\calA_{g_t} > 12 \sqrt{\log m}$. \label{line:lowgood}
        \STATE Let $g_s \in [m]$ be any good of lower index $s < t$ (equivalently, higher marginal value) with $\chi^\calA_{g_s} < 12 \sqrt{\log m}$. \label{line:highgood}
        \COMMENT{We show that such a good $g_s$ always exists.}
        \STATE Pick an agent $i$ such that good $g_t \in A_i$ and $g_s \notin A_i$. 
        \COMMENT{Since $\chi^\calA_{g_s} < \chi^\calA_{{g_t}}$, such an agent $i$ necessarily exists.}
        \STATE Update $A_i \gets \left( A_i\setminus\{g_t \} \right) \cup\{g_s \}$. \\ \COMMENT{This iteration of the while loop reduces the number of copies of good $g_t$ by commensurately increasing the number of copies of good $g_s$, where $s < t$.}
        \ENDWHILE
    \RETURN $\calA$
\end{algorithmic}
\end{algorithm}

\begin{proof}
We will first show that the while-loop in Algorithm \ref{alg:copy-redistribution} successfully executes for each maintained multi-allocation $\calA$ with $\ellinfty{\chi^\calA} > 12 \sqrt{\log m}$. In particular, we will prove that if $\ellinfty{\chi^\calA} > 12 \sqrt{\log m}$, for a maintained $\calA$, then the good $g_s$ sought in Line \ref{line:highgood} necessarily exists. Also, note that the while-loop must terminate after polynomially many iterations, since the following potential strictly decreases in each iteration: $\sum_{g \in [m]} \ \max\{ 0, \chi^\calA_g - 12 \sqrt{\log m}\}$. Hence, for the returned multi-allocation $\calA$ (obtained after the while-loop terminates), we have $\ellinfty{\chi^\calA} \leq 12 \sqrt{\log m}$.

Consider any iteration of the while-loop. Let $\calA$ be the current multi-allocation (with $\ellinfty{\chi^\calA} > 12 \sqrt{\log m}$) and $g_t$ be the good selected in Line \ref{line:lowgood}. Assume here, towards a contradiction, that the good $g_s$ desired in Line \ref{line:highgood} does not exist. That is, for each good $k \in L_t$, we have $\chi^\calA_k \geq 12 \sqrt{\log m}$; here, $L_t$ denotes the subset of all the goods with index at most $t$. 
By definition, $g_t \in L_t$. Also, note that, the selection criterion in Line \ref{line:lowgood} ensures that all the goods' reassignments that have happened before the current iteration must have been between goods in $L_t$. That is, each pair of goods $g_{t'}$ and $g_{s'}$ considered in any previous iteration satisfy $g_{t'}, g_{s'} \in L_t$. This observation implies that, in the current multi-allocation $\calA$ and the initial multi-allocation $\calB$, the total number of assignments (with copies) of the goods in $L_t$ are equal:  
\begin{align}
    \sum_{k \in L_t} \chi^\calA_k = \sum_{k \in L_t} \chi^\calB_k \label{eq:samesum}
\end{align} 
Since $\chi^\calA_k \geq 12 \sqrt{\log m}$ for each $k \in L_t$, equation (\ref{eq:samesum}) reduces to $\sum_{k \in L_t} \chi^\calB_k \geq  \ 12  \sqrt{\log m} |L_t| $. Select $\ell \in \{0,1,\ldots, \log m\}$ such that $2^\ell < t \leq 2^{\ell+1}$, i.e., $g_t \in P_{\ell +1} \setminus P_\ell$. This containment implies $P_\ell \subset L_t \subseteq P_{\ell+1}$ and, hence, $|L_t| > |P_\ell| = 2^\ell$. Therefore, the above-mentioned bound extends to 
\begin{align}
    \sum_{k \in L_t} \chi^\calB_k \geq   12 \sqrt{\log m} \ |L_t| >  12 \sqrt{\log m} \ \  2^\ell =  6 \sqrt{\log m} \ \  2^{\ell+1} = 6 \sqrt{\log m} |P_{\ell+1}| \label{ineq:prefixpack}
\end{align}
Since $L_t \subseteq P_{\ell+1}$, inequality (\ref{ineq:prefixpack}) implies $\sum_{k \in P_{\ell+1}} \chi^\calB_k >  6 \sqrt{\log m} |P_{\ell+1}| $. This bound, however, contradicts the fact that multi-allocation $\calB$ upholds property (ii) in Corollary \ref{cor:existence-of-b} for $P_{\ell + 1} \in \calP$. 

Hence, by way of contradiction, we get that each iteration of the while-loop executes successfully and, at termination, we obtain a multi-allocation $\calA$ with $\ellinfty{\chi^\calA} \leq 12 \sqrt{\log m}$. That is, the returned multi-allocation satisfies the stated $\ell_\infty$ bound. 

Further, note that, in each iteration of the while loop, the cardinality of the bundle $A_i$ is maintained -- we add a lower-indexed good, $g_s$, to it and remove a higher-indexed one, $g_t$. Since the valuations are identically ordered, in any such replacement, the updated value of agent $i$'s bundle, $v_i((A_i \setminus \{g_t \}) \cup \{g_s\})$, is at least its original value $v_i(A_i) = v_i((A_i \setminus \{g_t\}) \cup \{g_t\})$. Hence, for each agent $i$, the following inequalities continue to hold: $v_i(A_i) \geq v_i(B_i)$ and $|A_i| = |B_i|$. Therefore, for the returned multi-allocation $\calA$, the stated $\ell_1$ bound holds: $\ellone{\chi^\calA} = \sum_{i=1}^n |A_i| = \sum_{i=1}^n |B_i| = \ellone{\chi^\calB}$. Finally, the facts that $\calB$ is an MMS multi-allocation and $v_i(A_i) \geq v_i(B_i)$, for each $i$, imply that $\calA$ is also MMS. 
Overall, we have that the returned allocation $\calA$ is MMS, and it satisfies the stated $\ell_1$ and $\ell_\infty$ bounds. The lemma stands proved.     
\end{proof}

\begin{proof}[Proof of \Cref{thm:additive_goods}]
    Let $\calB=(B_1,\dots, B_n)$ be the MMS multi-allocation whose existence is guaranteed by \Cref{cor:existence-of-b}. Starting with $\calB$ and applying \Cref{lem:copy-redistribution} (\Cref{alg:copy-redistribution}), we can obtain an MMS multi-allocation $\calA=(A_1,\dots,A_n)$ with the properties that $\ellinfty{\chi^\calA} \leq 12 \sqrt{\log m}$ and $\ellone{\chi^\calA}=\ellone{\chi^\calB} \leq m+\frac{6m\sqrt{\log m}}{\sqrt{n}}$. The guaranteed existence of such an MMS multi-allocation establishes the theorem. 
\end{proof}

\paragraph{Remark.} It is relevant to note that prior works on MMS, for additive valuations, assume that one can restrict attention, without loss of generality, to instances with valuations that identically ordered in addition to being additive. This follows from a reduction of Bouveret and Lema\^{i}tre \cite{10.5555/2615731.2617458} via which, and for any instance $\calI$ with additive valuations, one can construct an instance $\calI'$ with additive and identically ordered valuations, such that if $\calI'$ admits an MMS allocation $\calA'$ then, in fact, $\calI$ also admits an MMS allocation $\calA$.\footnote{See \cite{barman2020approximation} for an approximation-preserving version of this result.} The reduction works by deriving from $\calA'$ a {\it picking sequence} over the agents and then showing that greedily assigning goods in $\calI$, via this sequence, leads to an MMS allocation $\calA$.  

Such a reduction, however, is not immediate in the case of multi-allocations. In the current context of allocating copies of goods, the difficulty stems from the fact that while constructing $\calA$ (via the picking sequence mentioned above) one might be forced to assign multiple copies of the same good to an agent $i$. Hence, it remains open whether the guarantee obtained (in Theorem \ref{thm:additive_goods}) for additive ordered valuations extends to all additive valuations. 

At the same time, we note that Theorem \ref{thm:additive_goods} generalizes to additive valuations if one were to relinquish the requirement that each agent $i$'s bundle, $A_i$, must be a subset of $[m]$. That is, in contrast to the rest of the paper, if $A_i$ is allowed to be a multiset, in which each extra copy of any good $g$ fetches an additional value of $v_i(g)$, then the reduction from \cite{10.5555/2615731.2617458} applies and the guarantee given in Theorem \ref{thm:additive_goods} extends to all additive valuations.

\subsection{Additive Valuations}
This section focuses on fair division of goods under additive valuations. 

\begin{theorem} 
\label{thm:additive_3_3m}
        Every fair division instance $\langle [n], [m], \{v_i\}_{i=1}^n \rangle$  with additive valuations admits an MMS multi-allocation $\calA$ in which no single good is assigned to more than $2$ agents and the total number of goods assigned, with copies, is at most $2m$. That is, the characteristic vector $\chi^\calA$ of $\calA$ satisfies $\ellinfty{\chi^\calA} \leq 2$ and $\ellone{\chi^\calA} \leq 2m$.
\end{theorem}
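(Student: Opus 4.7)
My plan is to reduce the problem to selecting, for each agent $i$, a single bundle $R_i \in \calM^i$ from the MMS-inducing partition such that the resulting multi-allocation $\calR = (R_1, \ldots, R_n)$ satisfies $\ellinfty{\chi^\calR} \le 2$. Once this $\ell_\infty$ bound holds, the MMS guarantee is immediate, since any $R_i \in \calM^i$ satisfies $v_i(R_i) \ge \mu_i$ by definition of the MMS-inducing partition, and $\ellone{\chi^\calR} \le m \cdot \ellinfty{\chi^\calR} \le 2m$ follows automatically.

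To perform this selection, I would formulate an integer feasibility problem with variables $x_{ij} \in \{0,1\}$ for $i,j \in [n]$, subject to (i) $\sum_j x_{ij} = 1$ for each agent $i$ (one bundle per agent) and (ii) $\sum_i \sum_{j:\,g \in M^i_j} x_{ij} \le 2$ for each good $g$. The LP relaxation is clearly feasible: the uniform solution $x_{ij} = 1/n$ satisfies both constraints, since each good $g$ appears in exactly one bundle of every partition, making each good-constraint's LHS equal to $\sum_i (1/n) = 1$, comfortably within the cap of $2$. I would then appeal to iterative LP rounding (in the spirit of Jain): take an extreme point of the current LP, and by a constraint-counting argument find either a fractional variable close to $1$ that can be rounded up or a good-constraint with sufficient slack that can be dropped; recurse until all variables are integer. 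The substantial slack between the uniform LHS of $1$ and the cap of $2$ is what I would rely on to ensure that rounding terminates without ever pushing any good's count past $2$.

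The principal obstacle I anticipate is making the constraint-counting step on extreme points go through, since this LP's constraint matrix is not totally unimodular in general and each good-constraint can have very large support (any bundle from any partition contributes). A plausible alternative, specific to additive valuations, is to produce two exact partition-allocations $\calB = (B_1,\ldots,B_n)$ and $\calC = (C_1,\ldots,C_n)$ of $[m]$ with $B_i \cap C_i = \emptyset$ and $v_i(B_i) + v_i(C_i) \ge \mu_i$ for each agent $i$, and then set $A_i = B_i \cup C_i$. Additivity together with disjointness gives $v_i(A_i) = v_i(B_i) + v_i(C_i) \ge \mu_i$, while each good lies in at most one part from $\calB$ and at most one from $\calC$, yielding $\ellinfty{\chi^\calA} \le 2$. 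The subproblem of producing such a pair $(\calB,\calC)$ is itself essentially a combinatorial selection problem of the same flavor as the original and could be tackled via Hall-type conditions on a suitable bipartite structure derived from the agents' MMS-inducing partitions.
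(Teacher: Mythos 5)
Your first approach cannot work: the integer program you set up---pick one part $R_i$ from each fixed partition $\calM^i$ so that every good appears in at most two chosen parts---is \emph{infeasible} in general, regardless of any rounding technique. This is precisely the combinatorial content of the paper's lower bound (\Cref{thm:goods_lowerbound}): there is a collection of $n$ partitions of $[m]$ into $n$ parts such that \emph{every} selection of one part per partition has some good covered at least $\frac{\log m}{\log\log m}$ times. The LP relaxation being feasible at the uniform point $x_{ij}=1/n$ is irrelevant; the integrality gap of this system is genuinely unbounded (indeed $\Omega(\log m/\log\log m)$), so no constraint-counting or iterative relaxation argument can push through. The ``slack from $1$ to $2$'' intuition is misleading because the good constraints can have enormous support, and extreme points are not sparse in the way that bounded-degree spanning tree or assignment LPs are.

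Your second approach is the right reformulation, and in fact it is essentially equivalent to what the paper does: finding two exact allocations $\calB,\calC$ of $[m]$ with $B_i \cap C_i = \emptyset$ and $v_i(B_i)+v_i(C_i)\ge \mu_i$ corresponds exactly to finding an exact allocation of a doubled good set $[m]\times[2]$ into \emph{feasible} parts (at most one copy of each good per part) in which each agent clears $\mu_i$. The paper realizes this via maximin shares under cardinality constraints on the auxiliary instance: it shows the constrained share $\widetilde{\mu}_i \ge 2\mu_i$ (by pairing $M^i_j\times\{1\}$ with $M^i_{j+1}\times\{2\}$ and using additivity), and then invokes the Hummel--Hetland $\tfrac{1}{2}$-approximation algorithm for cardinality-constrained MMS under additive valuations to actually produce the allocation. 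That algorithmic black box is the crux, and it is not replaceable by a Hall-type argument: the constraint you need is not a matching/transversal condition but a share-threshold condition across all agents simultaneously, and the union of two exact allocations with per-agent disjointness is not obviously governed by a deficiency version of Hall's theorem. So your decomposition is the right lens, but the existence claim it reduces to is a nontrivial theorem about constrained MMS, not something that follows from elementary bipartite-graph reasoning. Without citing or reproving a result of that strength, the proof has a genuine gap.
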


\begin{proof}
Let $\calI=\langle [n],[m],\{v_i\}_{i=1}^n\rangle$ be the given instance. We construct an auxiliary instance $\widetilde{\calI}$ (detailed next) and consider a constrained fair division problem over it. The instance $\widetilde{\calI}=\langle [n], [m] \times[2],\{\widetilde{v}_i\}_{i=1}^n\rangle$ is constructed as follows: The set of agents is unchanged, whereas each good $g\in [m]$ is replicated twice as $(g,1)$ and $(g,2)$  in $\widetilde{\calI}$. These two goods are referred to as \textit{copies} of $g$ in $\widetilde{\calI}$. For each agent $i \in [n]$, the valuation function $\widetilde{v}_i$ in $\widetilde{\calI}$ is additive and obtained by setting the values of the two copies equal to the value of the underlying good: $\widetilde{v}_i((g,1)) = \widetilde{v}_i((g,2)) = v_i(g)$ for every good $g \in [m]$. Hence, for any subset $S \subseteq [m] \times [2]$, we have $\widetilde{v}_i (S) = \sum_{(g,k) \in S} \ v_i(g) $.    

In instance $\widetilde{\calI}$, we focus on maximin shares under cardinality constraints to impose the requirement that each agent receives at most one copy of each good. Specifically, in $\widetilde{\calI}$,  a subset of goods $S \subseteq [m] \times [2]$ is said to be {\it feasible} if $\big| S \cap \{(g,1), (g,2) \} \big| \leq 1$ for every good $g$. Note that for any feasible subset $S \subseteq [m] \times [2]$, the valuations under $\widetilde{v}_i$ and $v_i$ match, $\widetilde{v}_i(S) = v_i \left( S \odot [m] \right)$, where the projected set $S \odot [m] \coloneqq \left\{ g \in [m] \ \mid \ (g,1) \in S \text{ or } (g,2) \in S \right\}$. 

We consider maximin shares, $\widetilde{\mu}_i$, in instance $\widetilde{\calI}$ under these cardinality constraints
\begin{align}
    \widetilde{\mu_i} \coloneqq \max_{\substack{(X_1, \ldots, X_n) \in \Pi_n([m] \times [2]): \\ \text{each $X_i$ is feasible}} } \ \ \min_{j \in [n]} \  \widetilde{v}_i  (X_j) \label{eqn:mmshat}
\end{align}
Maximin shares under cardinality constraints have been studied in prior works; see, e.g, \cite{hummel2022maximin} and \cite{biswas2018fair}. In particular, the work of Hummel and Hetland \cite{hummel2022maximin} shows that, under additive valuations, one can find in polynomial time an (exact) allocation $\calB = (B_1,\ldots, B_n) \in \Pi_n([m] \times [2])$ such that, for each $i \in [n]$, we have $\widetilde{v}_i(B_i) \geq \frac{1}{2} \widetilde{\mu}_i$ and $B_i$ is feasible. This algorithmic result of \cite{hummel2022maximin} implies that the constructed instance $\widetilde{\calI}$ necessarily admits such an allocation $\calB$. 

From $\calB$ we can derive a multi-allocation $\calA=(A_1, \ldots, A_n)$ in $\calI$ by setting $A_i = B_i \odot [m]$, for each $i \in [n]$. That is, $A_i$ is obtained by including good $g \in [m]$ in it iff a copy of $g$ (i.e., $(g,1)$ or $(g,2)$) is present in $B_i$. Since, in $\widetilde{\calI}$, each good $g \in [m]$ has two copies, the multi-allocation $\calA$ satisfies $\ellinfty{\chi^\calA}  = 2$. Also, since each $B_i$ is feasible, $|A_i|=|B_i|$ and, hence, $\ellone{\chi^\calA} = \sum_{i=1}^n |A_i| = \sum_{i=1}^n |B_i| = 2m$. Therefore, the multi-allocation $\calA$ satisfies the stated $\ell_1$ and $\ell_\infty$ bounds. 

It remains to show that $\calA$ is an MMS multi-allocation in the given instance $\calI$. Towards this, first note that the feasibility of $B_i$ implies $v_i(A_i) = \widetilde{v}_i(B_i) \geq \frac{1}{2} \widetilde{\mu}_i$, for each agent $i \in [n]$. Next, we will show that $\widetilde{\mu}_i \geq 2 \mu_i$, where $\mu_i$ denotes the maximin share of agent $i$ in the instance $\calI$. Together these bounds show that $\calA$ is indeed an MMS multi-allocation: $v_i(A_i) \geq \frac{1}{2} \widetilde{\mu}_i \geq \mu_i$. 

Fix any agent $i \in [n]$. To show $\widetilde{\mu}_i \geq 2 \mu_i$, we start with an MMS-inducing partition $\calM^i=(M^i_1, \ldots, M^i_n)$ for agent $i$ in instance $\calI$. Note that $v_i(M^i_j) \geq \mu_i$, for each $j \in [n]$. From $\calM^i$, construct a partition $\widetilde{\calP}=(\widetilde{P}_1,\ldots,\widetilde{P}_n)$ in $\widetilde{\calI}$ by setting $\widetilde{P}_j = \left( M^i_j \times \{1\} \right) \bigcup \left( M^i_{j+1} \times \{2 \} \right)$, for each $1 \leq j < n$, and  $\widetilde{P}_n =  \left( M^i_n \times \{1\} \right) \bigcup \left( M^i_{1} \times \{2 \} \right)$. That is, in $\widetilde{P}_j$, we include the first copy of each good $g \in M^i_j$ and the second copy of each good $g' \in M^i_{j+1}$. Since $M^i_j \cap M^i_{j+1} = \emptyset$, each $\widetilde{P}_j$ is feasible -- it contains at most one copy of any good $g\in [m]$. Moreover, 
\begin{align*}
    \widetilde{v}_i(\widetilde{P}_j) = \widetilde{v}_i \left(M^i_j \times \{1\}  \right)  +  \widetilde{v}_i \left(M^i_{j+1} \times \{2\}  \right) = v_i(M^i_j) + v_i(M^i_{j+1}) \geq 2 \mu_i.
\end{align*}
Hence, the partition $\widetilde{\calP}=(\widetilde{P}_1,\ldots,\widetilde{P}_n)$ certifies that equation (\ref{eqn:mmshat}) holds with $\widetilde{\mu}_i \geq 2 \mu_i$. As mentioned previously, this bound establishes the existence of an MMS multi-allocation $\calA$ that satisfies the stated $\ell_1$ and $\ell_\infty$ bounds. The theorem stands proved. 
\end{proof}

\subsection{Lower Bound for Goods}

We now show that there exist fair division instances $\calI$, with monotone valuations, such that each MMS multi-allocation in $\mathcal{I}$ ends up assigning at least one good to more than $\frac{\log m}{\log \log m}$ many agents. This lower bound shows that the positive result obtained in Theorem \ref{thm:monotone_goods}---for the $\ell_\infty$ bound---is essentially tight. 

\begin{theorem}
\label{thm:goods_lowerbound}
    There exists a fair division instance $\calI = \langle [n], [m], \{v_i\}_{i=1}^n \rangle$ with monotone valuations such that for every MMS multi-allocation $\calA$ in $\calI$, it holds that $\ellinfty{\chi^\calA} \geq \frac{\log m}{\log \log m}$, i.e., $\calA$ assigns some good to at least $\frac{\log m}{\log\log m}$ agents.
\end{theorem}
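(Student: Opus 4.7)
The plan is to construct the hard instance $\calI$ from a combinatorial ``slice'' structure on the Boolean cube $[k]^k$, where $k$ is chosen so that $k \geq \frac{\log m}{\log\log m}$. Intuitively, each agent will correspond to one coordinate of $[k]^k$ and will only be satisfied by receiving an entire axis-aligned slice; since any choice of one slice per coordinate has a common intersection point, that single good must be assigned to all $k$ agents.

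More concretely, pick $k$ so that $m = k^k$ (the general case follows by padding with dummy goods that have value $0$ for all agents), set $n = k$, and index the good set $[m] = [k]^k$ by length-$k$ strings. For each coordinate $i \in [k]$ and each value $v \in [k]$, define the slice $S^i_v \coloneqq \{x \in [k]^k : x_i = v\}$, which has cardinality $k^{k-1} = m/k$. Define the valuation of agent $i$ by
\begin{align*}
v_i(T) \coloneqq \max_{v \in [k]} \mathbb{1}\bigl[S^i_v \subseteq T\bigr],
\end{align*}
so $v_i(T) = 1$ if $T$ contains some ``$i$-slice'' and $v_i(T) = 0$ otherwise. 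Each $v_i$ is clearly monotone (and $\{0,1\}$-valued).

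I would then compute the maximin shares: the partition $\calM^i = (S^i_1, \ldots, S^i_k)$ is an $n$-partition of $[m]$ (since $n = k$), and every bundle of $\calM^i$ satisfies $v_i(S^i_v) = 1$, so $\mu_i \geq 1$, and since the valuations are $\{0,1\}$-valued, $\mu_i = 1$. Hence, in any MMS multi-allocation $\calA = (A_1, \ldots, A_n)$, we must have $v_i(A_i) = 1$ for every $i$, which, by definition of $v_i$, forces the existence of some $v_i \in [k]$ with $S^i_{v_i} \subseteq A_i$. The key combinatorial step is then to observe that the string $x^\star \coloneqq (v_1, v_2, \ldots, v_k) \in [k]^k$ lies in $S^i_{v_i}$ for every $i$, hence $x^\star \in A_i$ for all $i \in [n]$. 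Therefore $\chi^\calA_{x^\star} = n = k$, giving $\ellinfty{\chi^\calA} \geq k$.

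Finally, from $m = k^k$ we have $\log m = k \log k$, so $k = \frac{\log m}{\log k} \geq \frac{\log m}{\log\log m}$, yielding the claimed bound. The main conceptual obstacle is designing valuations whose MMS-inducing partitions ``force'' the agents into a Latin-square-style coordinated choice that necessarily has a common element; once the slice construction is in place, the verification of monotonicity, the computation of $\mu_i$, and the combinatorial intersection argument are all short and routine. A minor technical nuisance is handling general $m$ (not a perfect $k^k$), which is resolved by setting $k \coloneqq \lfloor \log m / \log\log m \rfloor$, applying the construction on a subset of size $k^k$, and adding the remaining goods as dummies valued at $0$ by all agents.
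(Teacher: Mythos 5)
Your proof is correct and takes a genuinely different route from the paper. The paper constructs the hard instance probabilistically: it draws independent random $n$-partitions $\calP^i$ of $[m]$, defines $v_i(S) = \mathbb{1}[\exists j : P^i_j \subseteq S]$, argues via a Poisson approximation that for any fixed choice of bundles the max multiplicity is $\leq \ell$ only with probability $< n^{-2n}$, and then union-bounds over all $n^n$ possible MMS multi-allocations. Your construction is deterministic: you use the axis-aligned slices $S^i_v = \{x \in [k]^k : x_i = v\}$ of the cube $[k]^k$ as the MMS-inducing bundles, and the key observation that any transversal $(S^1_{v_1},\dots,S^k_{v_k})$ has the common point $(v_1,\dots,v_k)$ replaces the probabilistic union bound entirely. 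Both give $\ellinfty{\chi^\calA} \gtrsim \log m / \log\log m$.

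What each approach buys: your slice construction is explicit, checkable by hand, avoids the Poisson-approximation hand-waving in the paper, and in fact shows the multiplicity must equal $n$ exactly (every agent shares the same good). The paper's random construction has more slack in the relationship between $n$ and $m$ (it works for any $n$ with $m$ sufficiently large), and the paper's follow-up remark adapts it to XOS valuations by replacing the indicator with $\max_j \frac{1}{|P^i_j|}|S \cap P^i_j|$; your valuations as written are not XOS (an indicator of containment is not additive), but the identical modification $\tilde v_i(T) = \max_v \frac{1}{|S^i_v|}|T \cap S^i_v|$ would make them XOS without changing the argument. Two small points to tidy up if you write this out in full: (1) with $k = \lfloor \log m / \log\log m\rfloor$ you get $\ellinfty{\chi^\calA} \geq k \geq \frac{\log m}{\log\log m} - 1$, so you either need to state the bound with a floor/subtract-one or choose $k$ slightly more carefully, which matches the paper's own informality here; and (2) when you pad with dummy goods you should say where they go in $\calM^i$ (e.g., lump them all into $S^i_1$), though this is routine since dummies have zero value.
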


\begin{proof} We first describe the construction of the instance $\langle [n], [m], \{v_i\}_{i=1}^n \rangle$. For each agent $i \in [n]$ independently, we draw a partition $\mathcal{P}^i =(P^i_1, P^i_2, \ldots P^i_n)$ uniformly at random from the set of all possible $n^m$ possible partitions, i.e., $\calP^i \in_R \Pi_n([m])$. Given such a partition, we set the valuation $v_i$, for every subset $S \subseteq [m]$, as follows 
\begin{align*}
    v_i(S) \coloneqq \begin{cases}
    1 & \text{if }  P_j^i \subseteq S \text{ for any }j\in[n], \\
    0 & \text{otherwise}.
\end{cases}
\end{align*}
Note that $v_i$s are monotone set functions. In addition, the maximin share of every agent $i$ is equal to one, $\mu_i=1$. Hence, for any MMS multi-allocation $\calA=(A_1,\ldots,A_n)$ it must hold that $P^i_j \subseteq A_i$, for each agent $i \in [n]$ and some index $j \in [n]$. We can, in fact, restrict attention to those multi-allocations that satisfy $A_i=P_i^j$ for some $j \in [n]$, since the ones wherein $P^i_j \subsetneq A_i$ induce larger $\ell_\infty$ norms. Hence, given that partitions $\calP^i$ for the agents $i \in [n]$, we define the family $\calF$ of multi-allocations as 
\begin{align*}
    \calF \coloneqq \left\{ \calA=(A_1,\ldots,A_n) \mid \text{ for each } i\in[n] \text{ there exists } j\in[n] \text{ such that } A_i=P^i_j\right\}.
\end{align*}
Note that $|\mathcal{F}|=n^n$. Fix one multi-allocation $\calQ \in \calF$ obtained by setting index $j=1$ for all the agents $i$, i.e., $\calQ \coloneqq (P^1_1, P^2_1,\ldots, P^n_1)$. Write $\chi$ to denote the characteristic vector of $\calQ$. Now, consider the independent random draws that result in the  partitions $\calP^i=(P^i_1, \ldots P^i_n)$. Note that, for a fixed agent $i$ and a fixed good $g$, the probability that $g$ is contained in $i$'s bundle in $\calQ$ satisfies $\prob{g \in P^i_1} = 1/n$. Further, the independence of the draws (of $\calP^i$s) across the agents imply that the number of copies of $g$ assigned in $\calQ$ (i.e., $\chi_g$) is distributed as $\chi_g \sim \mathrm{Bin}(n, \frac{1}{n})$. For large $n$, this is approximated by the Poisson distribution $\mathrm{Poi}(n \  \frac{1}{n}) = \mathrm{Poi}(1)$. 
 
Define $\ell \coloneqq \frac{\log m}{\log \log m}$. Since each $\calP^i$ is drawn independently among all the $n$-partitions of $[m]$, the random variables $\chi_g$ across the goods $g \in [m]$ are independent. Therefore, using the fact that $\chi_g \sim \mathrm{Poi}(1)$ are independent and identically distributed for all $g \in [m]$, we obtain 
\begin{align}
    \prob{\ellinfty{\chi} \leq  \ell} = \left(\prob{\chi_g \leq \ell}\right)^m = \left(\sum_{k = 0}^{\ell} \frac{e^{-1}}{k!}\right)^m = \left(\frac{1}{e}\sum_{k = 0}^{\ell}\frac{1}{k!}\right)^m \label{eq:maxPoi}
\end{align}
Recall that $e = \sum_{k=0}^\ell \frac{1}{k!} + \sum_{k=\ell+1}^\infty \frac{1}{k!}$. 
 Hence, equation (\ref{eq:maxPoi}) simplifies to 
\begin{align*}
\prob{\ellinfty{\chi} \leq  \ell} = \left(\frac{1}{e} \left(e- \sum_{k=\ell+1}^\infty \frac{1}{k!} \right)\right)^m < \left( \frac{1}{e} \left(e - \frac{1}{(\ell+1)!} \right) \right)^m = \left(1 - \frac{1}{e(\ell+1)!}\right)^m.    \end{align*}
 
We can extend the last bound as follows
\begin{align}
 \prob{\ellinfty{\chi} \leq  \ell} < \left(1 - \frac{1}{e(\ell+1)!} \right)^{e(\ell+1)!\left(\frac{m}{e(\ell+1)!}\right)}  \leq \left(\frac{1}{e}\right)^{\frac{m}{e(\ell+1)!}} \label{ineq:endPoi}    
\end{align}
 
Since $\ell = \frac{\log m}{\log \log m}$, for a sufficiently large $m \in \mathbb{Z}_+$, we have $m \geq 2 \  n \log n \  e (\ell +1)! $. For such an $m$, inequality (\ref{ineq:endPoi}) reduces to 
\begin{align}
    \prob{\ellinfty{\chi} \leq  \ell} <  \left(\frac{1}{e}\right)^{2n \log n} = \frac{1}{n^{2n}} \label{ineq:finalPoi}.
\end{align}
Inequality (\ref{ineq:finalPoi}) holds for each fixed MMS multi-allocation $\calQ \in \calF$. Hence, a union bound over all the $n^n$ multi-allocations in $\calF$ establishes the theorem. In particular, define the event $E_\calQ: = \left\{ \ellinfty{\chi^\calQ} > \ell \right\}$ for each $\calQ \in\mathcal{F}$. We have 
\begin{align*}
\prob{\bigcap_{\calQ \in \mathcal{F}} E_\calQ} = 1- \prob{\bigcup_{\calQ \in\mathcal{F}} E_Q^c} \geq 1 - n^n \cdot \prob{E_Q^c} \underset{\text{ via (\ref{ineq:finalPoi}})}{>} 1-\frac{1}{n^n} > 0    
\end{align*}

Therefore, with a non-zero probability, the events $E_\calQ$ hold together for all possible MMS multi-allocations $\calQ$. This implies that there exists an instance with a sufficiently large $m$ and a certain choice of the partitions $\calP_i=\{P^i_1, \ldots, P^i_n)$ (along with valuations defined as above) such that for all MMS multi-allocations, there exists a good $g$ with at least $\ell = \frac{\log m}{\log \log m}$ assigned copies. This completes the proof of the lower bound. 
\end{proof}

\begin{remark}
We note that, for exact MMS, the $\ell_\infty$ lower bound given in \Cref{thm:goods_lowerbound} holds even under XOS valuations. In particular, as in the proof of \Cref{thm:goods_lowerbound}, we select independently for each agent $i \in [n]$, a partition $\calP^i = (P^i_1, P^i_2, \dots, P^i_n)$ uniformly at random from the set $\Pi_n([m])$ of all possible $n$-partitions of $[m]$. Then, for each agent $i \in [n]$ and subset $S \subseteq [m]$, we set the valuation
\begin{align*}
v_i(S) \coloneqq  \max_{1\leq j\leq n} \ \frac{1}{|P^i_j|} |S \cap P^i_j|. 
\end{align*}
Note that the valuations $v_i$ are point-wise maximizers of additive functions of the form $S \mapsto \frac{1}{|P^i_j|} |S \cap P^i_j|$. Hence, $v_i$s are XOS. Further, $v_i(S)=1$ if $P^i_j \subseteq S$, for some index $j \in [n]$, and $v_i(S) < 1$ otherwise. Therefore, the partition $\calP^i = (P^i_1, \dots, P^i_n)$ certifies that, for each agent $i \in [n]$, the maximin share $\mu_i = 1$. Moreover, for any MMS multi-allocation $\calA = (A_i, \dots, A_n)$ it must hold that $P^i_j \subseteq A_i$ for each agent $i \in [n]$ and some index $j \in [n]$. This is exactly the property we utilize for the proof of \Cref{thm:goods_lowerbound}, and the rest of the proof relies solely on this condition. Therefore, the lower bound holds in particular for XOS valuations. 
\end{remark}
\section{Fair Division of Chores}
\label{sec:chores}

This section addresses fair division instances wherein the items to be assigned are undesirable to---i.e., have costs for---the agents. Such an instance is specified by a triple $\langle[n],[m],\{c_i\}_{i=1}^n\rangle$ with $n$ agents and $m$ items, called \textit{chores}. Here, the function $c_i:2^{[m]}\to\mathbb{R}_{\geq 0}$ specifies the (nonnegative) \textit{cost} of each subset of chores to agent $i$. A cost function $c_i$ is said to be monotone if the inclusion of chores into any subset does not decrease its cost: $c_i(S) \leq c_i(T)$ for every pair of subsets $S \subseteq T \subseteq[m]$. Further, $c_i$ is said to be additive if for every subset $S \subseteq [m]$ of chores, $c_i(S) = \sum_{a \in S} c_i(\{a\})$. As a shorthand, we will write $c_i(a)$ to denote agent $i$'s cost for chore $a \in [m]$. Throughout, the paper assumes that the agents' costs are monotone and normalized: $c_i(\emptyset)=0$ for all agents $i$. 

We will additionally consider (in \Cref{subsec:ordered-costs}) instances with cost functions that are identically ordered. 

The maximin share of each agent $i \in [n]$ is the minimum cost incurred by $i$ if the agent were to partition the $m$ chores into $n$ subsets and is then assigned the one with the highest cost. Formally,
\begin{definition}[MMS for chores] 
Given any fair division instance $\langle [n], [m], \{c_i\}_{i=1}^n \rangle$ with chores, the {maximin share}, $\mu_i \in \mathbb{R}_+$, of each agent $i \in [n]$ is defined as 
\begin{align*}
\mu_i \coloneqq  \min_{(X_1,\dots, X_n) \in \Pi_n([m])} \ \ \max_{j\in[n]} c_i(X_{j}).
\end{align*}
Further, for each agent $i$, let $\calM^i=(M^i_1, M^i_2, \ldots, M^i_n) \in \Pi_n([m])$ denote an {MMS-inducing partition}:
\begin{align*}
\calM^i \in \argmin_{(X_1,\dots, X_n) \in \Pi_n([m])} \ \ \max_{j\in[n]} c_i(X_{j})
\end{align*}
\end{definition} 

As before, an allocation $\calB=(B_1,\ldots, B_n)$ is a partition of $[m]$ into $n$ pairwise disjoint subsets $B_1,\ldots, B_n \subseteq [m]$. Here, the subset of chores $B_i$ is assigned to agent $i$ and is referred to as $i$'s bundle. Also, a \textit{multi-allocation} is a tuple $\calA=(A_1,\ldots,A_n)$ of $n$ subsets of chores, wherein subset $A_i \subseteq [m]$ denotes the bundle assigned to agent $i$. In contrast to allocations, in a multi-allocation, we do not require that the assigned bundles $A_i$ are pairwise disjoint and that they partition $[m]$. Hence, in a multi-allocation, a single chore may be present in multiple bundles or even in none.

\paragraph{Fair Multi-Allocations.} A multi-allocation $\calA=(A_1,\dots,A_n)$ is said to be an \emph{MMS multi-allocation} if each agent receives a bundle of cost at most its maximin share, $c_i(A_i)\leq \mu_i$ for all agents $i \in [n]$.

Agents with monotone costs prefer to have fewer chores in their bundle and, hence, additional copies of any chore can be disposed of without impacting the MMS guarantee. Specifically, given any multi-allocation $\calA$ in which one chore $a$ is assigned to agents $i \neq i'$, one can consider multi-allocation $\calA'$ in which $a$ is assigned to only one of them, say $i$. If $\calA$ is MMS, then so is $\calA'$. That is, one can assume, without loss of generality, that the considered MMS multi-allocations $\calA$ satisfy $\ellinfty{\chi^\calA} \leq 1$, i.e., $\chi^\calA_a$ is always either $0$ or $1$ for each chore $a \in [m]$.\footnote{This makes such a multi-allocation in fact a \textit{partial allocation}.} Motivated by these considerations,  we will focus on quantifying the number of distinct chores that can be assigned among the agents while ensuring fairness. That is, in contrast to the goods setting where we obtained upper bounds for $\ell_1$ and $\ell_\infty$, here, we will focus on establishing upper bounds on the number of zero entries in the vector $\chi^\calA$. We will, throughout, write $\ellzed{\chi^\calA}$ to denote this quantity, $\ellzed{\chi^\calA} \coloneqq \{ a \in [m] \mid \chi^\calA_a = 0 \}$.\footnote{Note that this is not a norm.} Observe that $\ellzed{\chi^\calA}$ captures the number of chores that are unallocated under $\calA$; ideally, this quantity should be as small as possible.

\subsection{Monotone Costs}
For instances with monotone costs, the following theorem shows that we can achieve MMS fairness while at most $m/e$ chores remain unassigned.

\begin{theorem}
\label{thm:monotone_chores}
    Every fair division instance $\langle [n], [m], \{c_i\}_{i=1}^n \rangle$ with chores and monotone costs admits an MMS multi-allocation $\calA=(A_1,\dots,A_n)$ in which at most $\frac{m}{e}$ chores remain unassigned. That is, the characteristic vector $\chi^\calA$ of $\calA$ satisfies $\ellzed{\chi^\calA}\leq\frac{m}{e}$. 
\end{theorem}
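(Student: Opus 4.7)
The plan is to adapt the random sampling technique from the proof of \Cref{thm:monotone_goods} to the chore setting. For each agent $i \in [n]$, let $\calM^i = (M^i_1, \ldots, M^i_n)$ be an MMS-inducing partition. Independently for each agent $i$, I would sample a bundle $R_i$ uniformly at random from $\{M^i_1, \ldots, M^i_n\}$, forming the random multi-allocation $\calR = (R_1, \ldots, R_n)$. Since each $R_i$ equals $M^i_j$ for some $j \in [n]$, the MMS-inducing property immediately gives $c_i(R_i) = c_i(M^i_j) \leq \mu_i$, so $\calR$ is \emph{always} an MMS multi-allocation, regardless of the random choices. This means the only work is controlling $\ellzed{\chi^\calR}$, the number of unassigned chores.

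Next, I would analyze $\ellzed{\chi^\calR}$ via linearity of expectation. Fix any chore $a \in [m]$. Since $\calM^i$ is an $n$-partition of $[m]$, the chore $a$ belongs to exactly one of $M^i_1, \ldots, M^i_n$, so $\Pr\{a \in R_i\} = 1/n$, and hence $\Pr\{a \notin R_i\} = 1 - 1/n$. By the independence of the draws across agents,
\begin{align*}
\Pr\{\chi^\calR_a = 0\} \;=\; \Pr\{a \notin R_i \text{ for all } i \in [n]\} \;=\; \left(1 - \frac{1}{n}\right)^n \;\leq\; \frac{1}{e},
\end{align*}
where the final inequality uses the standard fact that $(1 - 1/n)^n$ is increasing in $n$ with limit $1/e$. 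Summing over all $m$ chores via linearity of expectation yields $\expval{\ellzed{\chi^\calR}} \leq m/e$.

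Finally, I would invoke the probabilistic method: since the expected number of unassigned chores is at most $m/e$, there must exist at least one realization $\calA$ of $\calR$ for which $\ellzed{\chi^\calA} \leq m/e$. Combined with the unconditional MMS guarantee noted above, this $\calA$ is an MMS multi-allocation satisfying the claimed bound, which completes the proof.

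I do not anticipate a serious obstacle here. The argument is essentially the chore analog of \Cref{lem:linftynorm}(i), and the central quantitative ingredient is just the elementary inequality $(1 - 1/n)^n \leq 1/e$. The only subtlety worth flagging is that, unlike the goods case, we do not need an $\ell_\infty$ or $\ell_1$ control event at all: every realization of $\calR$ is MMS, so a pure first-moment bound on unassigned chores suffices without any union bound over competing failure events.
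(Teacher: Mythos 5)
Your proof is correct and follows essentially the same argument as the paper: sample each $R_i$ uniformly from the MMS-inducing partition $\calM^i$, observe $\Pr\{a \notin \cup_i R_i\} = (1-1/n)^n$ by independence, and apply the first-moment method. In fact you are slightly more careful than the paper's write-up, which uses $(1-1/n)^n \simeq 1/e$ informally, whereas you correctly invoke the inequality $(1-1/n)^n \leq 1/e$ to make the bound rigorous.
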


\begin{proof}
    Recall that $\calM^i=(M^i_1,\ldots,M^i_n)$ is an MMS-inducing partition for each agent $i \in [n]$. We select, independently for each $i$, a bundle $R_i$ uniformly at random from among $\{M^i_1, M^i_2, \dots,M^i_n\}$, and consider the random multi-allocation $\calR=(R_1,R_2,\dots,R_n)$. For any fixed chore $a\in[m]$, the events $\{a\notin R_i\}$ are independent across $i$ with $\prob{a\notin R_i}=1-\frac{1}{n}$. Therefore, under $\calR$, the probability that $a$ remains unallocated  $\prob{a\notin\cup_{i=1}^n R_i}=\left(1-\frac{1}{n}\right)^n \simeq \frac{1}{e}$. Hence, the total number of unallocated chores, in expectation, equals 
    \begin{align*}
    \expval{\ellzed{\chi^\calR}}
    &= \sum_{a=1}^m \prob{a\notin\cup_{i=1}^n R_i} = \frac{m}{e}.
    \end{align*} 
    Therefore, there exists an MMS multi-allocation $\calA$ that satisfies $\ellzed{\chi^\calA}\leq \frac{m}{e}$. The theorem stands proved. 
\end{proof}

\subsection{Identically Ordered Costs}
\label{subsec:ordered-costs}
This section establishes upper bounds on supply adjustments under cost functions are identically ordered; the setup here is analogous to the one given in \Cref{subsec:additive-ordered} for goods. Formally, 

\begin{definition} \label{def:identically-ordered-costs}
In a fair division instance $\langle [n], [m], \{c_i\}_{i=1}^n \rangle$, the cost functions are said to be \textit{identically ordered} if there exists an indexing $a_1, a_2, \ldots, a_m$ over the chores such that, for any pair of chores $a_s, a_t$, with index $s < t$, we have $c_i(S + a_s) \geq c_i(S + a_t)$ for all agents $i \in [n]$ and each subset $S$ that does not contain $a_s$ and $a_t$.  
\end{definition}

We show that for any fair division instance with identically ordered costs, there exists a multi-allocation $\calA$ such that the number of unassigned chores in $\calA$ is $\widetilde{O} \left(\frac{m}{n^{1/4}} \right)$.


\begin{theorem}
\label{thm:additive_chores}
    Every fair division instance $\langle [n], [m], \{c_i\}_{i=1}^n \rangle$ with chores and identically ordered costs admits an MMS multi-allocation $\calA=(A_1,\ldots A_n)$ wherein the number of unassigned chores is at most $\frac{3m \left( \log m \right)^{3/2}}{n^{1/4}}$, i.e., 
    $$\ellzed{\chi^\calA} \leq \frac{3m \left( \log m \right)^{3/2}}{n^{1/4}}.$$
\end{theorem}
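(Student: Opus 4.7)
The plan is to mirror the probabilistic approach from the proof of \Cref{thm:additive_goods}, with the key reinterpretation that for chores every bundle in an MMS-inducing partition is already a valid assignment (its cost is at most $\mu_i$), so the problem reduces to a pure coverage question: pick one bundle per agent so that as few chores as possible are left uncovered. For each agent $i$, fix an MMS-inducing partition $\calM^i = (M^i_1,\ldots,M^i_n)$ and index its bundles in increasing order of cardinality. Since $\calM^i$ partitions $[m]$, at least $n/2$ bundles have cardinality at least $m/(2n)$; let $s_i$ be the smallest index with $|M^i_{s_i}| \geq m/(2n)$ and sample $R_i$ uniformly at random from the ``large'' bundles $\{M^i_{s_i},\ldots,M^i_n\}$. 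This biases the draw toward bundles that carry more chores and so increases the inclusion probability of every chore relative to uniform sampling over all $n$ bundles. The resulting random multi-allocation $\calR=(R_1,\ldots,R_n)$ is automatically MMS.

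The main analytical step is to control $\ellzed{\chi^\calR}$ using the identically-ordered structure. Index the chores $a_1,\ldots,a_m$ in decreasing order of marginal cost and consider the dyadic \emph{suffixes} $S_t \coloneqq \{a_{m-2^t+1},\ldots,a_m\}$ for $t=0,1,\ldots,\log m$. Because the chores in $S_t$ carry the lowest marginal costs, the MMS cost budget $\mu_i$ forces them to sit predominantly inside large-cardinality bundles -- small bundles exhaust their cost budget on just a few high-cost chores and cannot absorb too many cheap ones. A second-moment (Chebyshev) calculation for the number of $S_t$-chores missed by $\calR$, in the spirit of \Cref{lem:dyadic-prefixes}, combined with a union bound over the $O(\log m)$ dyadic suffixes, yields an event $G$ of positive probability under which the number of uncovered chores in each $S_t$ is at most $\frac{3|S_t|(\log m)^{3/2}}{n^{1/4}}$.

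Finally, convert these suffix-wise bounds into a global guarantee via a redistribution step analogous to \Cref{alg:copy-redistribution}: while some low-cost chore $a_\ell$ is unassigned but a higher-cost chore $a_h$ with $h<\ell$ is currently assigned to some agent $i$ whose bundle does not contain $a_\ell$, swap $a_h$ for $a_\ell$ inside $A_i$. The identically-ordered property guarantees $c_i((A_i\setminus\{a_h\})\cup\{a_\ell\}) \leq c_i(A_i) \leq \mu_i$, so MMS is preserved while the set of unassigned chores is strictly pushed toward the high-cost prefix. The process terminates with all unassigned chores confined to some prefix $\{a_1,\ldots,a_{m'}\}$, whose size $m'$ is controlled by the suffix-coverage bound at the smallest applicable dyadic scale, producing the claimed bound $\ellzed{\chi^\calA} \leq \frac{3m(\log m)^{3/2}}{n^{1/4}}$.

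The main obstacle I anticipate is the second-moment analysis of $|R_i \cap S_t|$ summed across $i$. Unlike the goods case, where the variance of $\ellone{\chi^\calR}$ scales directly with the per-bundle cardinality \emph{cap}, the chores analogue must couple the variance jointly to the lower cardinality threshold $m/(2n)$ and the suffix size $2^t$; the exponent $n^{1/4}$ (rather than $n^{1/2}$) arises because the per-chore coverage probability after biased sampling is only $\Theta(1/\sqrt{n})$, so Chebyshev on the indicator-sum deviation picks up an additional square root. Threading this interaction carefully through the dyadic union bound and the redistribution cleanup is the delicate step.
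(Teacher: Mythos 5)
There is a genuine gap, and it sits at the very first step of your sampling scheme. Your claim that ``since $\calM^i$ partitions $[m]$, at least $n/2$ bundles have cardinality at least $m/(2n)$'' is false: a partition can consist of one bundle of size $m-(n-1)$ together with $n-1$ singletons, in which case only one bundle is large. (The correct direction, used in the goods case, is that at most $n/2$ bundles can have cardinality exceeding $2m/n$.) More fundamentally, restricting the draw of $R_i$ to large bundles does not give every chore a usable coverage probability: a chore $a$ lies in exactly one bundle $M^i_{j}$ of each partition $\calM^i$, and if that bundle is small for agent $i$ then $a$ is covered by $R_i$ with probability zero. Nothing in the identically-ordered structure prevents a low-cost chore from sitting in a small bundle of \emph{every} agent's MMS-inducing partition, so your heuristic that the cost budget ``forces'' cheap chores into large bundles does not hold, and the second-moment analysis you plan has no nontrivial lower bound on $\expval{\chi^\calR_a}$ to work with. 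The paper resolves exactly this tension by a pre-processing phase that is absent from your plan: it greedily peels off agents together with their large residual bundles $M^i_j\cap U$ (those of size at least $\alpha|U|/|N|$), shows that only about $n\log m/\alpha$ agents get removed this way, and is then left with a sub-instance in which every remaining bundle is \emph{small} while each surviving chore still has coverage probability $|N|/n\ge 1-\log m/\alpha$ under uniform sampling. Small bundle sizes (an upper bound, not a lower bound) are what make Hoeffding concentrate sharply, and the trade-off between the $\log m/\alpha$ loss in expectation and the $\alpha$-dependent deviation term is what produces the $n^{1/4}$ exponent after setting $\alpha=n^{1/4}\log m$. Your explanation of the $n^{1/4}$ as coming from a $\Theta(1/\sqrt{n})$ per-chore coverage probability is therefore not the right mechanism.

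Two smaller issues: the paper's union bound runs over all $O(m^2)$ contiguous chore sequences (with failure probability $m^{-4}$ each from Hoeffding), not just the $O(\log m)$ dyadic suffixes, because the redistribution lemma (\Cref{lem:shifting_chores}) needs the coverage inequality $\sum_{a\in S}\chi_a\ge |S|-\Delta$ for the specific prefix ending at the highest-index unassigned chore, which need not be dyadic. And in your cleanup step you allow swapping out an assigned chore $a_h$ that may have multiplicity one, which merely trades one unassigned chore for another; the paper's \Cref{alg:chores-shifting-2} only removes copies of chores with multiplicity at least two, which is what lets the count of unassigned chores strictly decrease and ties the final bound to the sequence-coverage condition.
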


\paragraph{Pre-Processing and Random Sampling.} For each agent $i$, consider an MMS-inducing partition $\calM^i=(M^i_1,\ldots,M^i_n)$. Rather than sampling an MMS multi-allocation directly, we first execute a pre-processing step, as detailed in Algorithm \ref{alg:chores-additive}. The algorithm is described in terms of a parameter $\alpha>0$, which will be fixed later.

\begin{algorithm}
\caption{Pre-processing for Identically Ordered Costs}\label{alg:chores-additive}
\begin{algorithmic}[1]
    \REQUIRE An instance $\calI = \langle [n], [m], \{c_i\}_{i=1}^n \rangle$ along with MMS-inducing partitions $\calM^i=(M^i_1,\ldots,M^i_n)$, for the agents $i$, and parameter $\alpha>0$.
    \ENSURE A multi-allocation $\calB=(B_1,B_2,\dots,B_n)$ and sub-instance $\calI'$
    \STATE Initialize $N=[n]$, $U=[m]$ and $B_i=\emptyset$, for each agent $i\in[n]$.
    \WHILE{there exists agent $i\in N$ and index $j\in[n]$ such that $|M^i_j \cap U|\geq \alpha \ \frac{|U|}{|N|}$}
        \STATE Set $B_i=M^i_j\cap U$. 
        \STATE Update $N \leftarrow N\setminus \{i\}$ and $U \leftarrow U\setminus M^i_j$.
    \ENDWHILE
    \RETURN $\calB$ and sub-instance $\calI'=\langle N, U, \{ c_i \}_{i \in N} \rangle$.
\end{algorithmic}
\end{algorithm}

\Cref{alg:chores-additive} starts with $U = [m]$ as the set of unassigned chores and $N = [n]$ as the set of agents under consideration. The algorithm iterates as long as it is possible to assign to some agent $i \in N$ an MMS-inducing bundle, from $\calM^i$, of sufficiently large cardinality. Specifically, if for some agent $i \in N$ there exists a bundle $M^i_j$ such that $|M^i_j \cap U| \geq \alpha \ \frac{|U|}{|N|}$, then agent $i$ receives the bundle $B_i = M^i_j \cap U$. After this assignment, the algorithm removes both the agent $i$ and the assigned bundle from the instance. Since $c_i(B_i)=c_i(M^i_j\cap U)\leq c_i(M^i_j)\leq \mu_i$, each removed agent receives a bundle of cost at most its MMS.

After the algorithm terminates, we are left with the reduced sub-instance $\calI' = \langle N, U, \{c_i\}_{i\in N}\rangle$, where $N$ and $U$ are the sets of remaining agents (those yet to receive a bundle) and unallocated chores, respectively. A key property of $\calI'$ is that, for each remaining agent $i \in N$, the $n$ MMS-satisfying subsets $M^i_1 \cap U, M^i_2 \cap U, \ldots, M^i_n \cap U$ have size $|M^i_j \cap U| < \alpha \  \frac{|U|}{|N|}$. We will utilize this property and assign chores in $\calI'$ next. 

Note that, in $\calI'$, if $U = \emptyset$ (i.e., no chores remain unassigned at the end of the algorithm), then the upper bound stated in Theorem \ref{thm:additive_chores} holds. In fact, in such a case, for the underlying instance $\calI$, we can identify an MMS multi-allocation $\calA$ with $\ellzed{\chi^\calA}=0$, by setting $A_j = \emptyset$ for all $j \in N$ and $A_i = B_i$ for all $i \in [n] \setminus N$. 

Hence, we will assume that $|U|\geq 1$. Note that all the chores in $[m]\setminus U$ are assigned to some agent in the multi-allocation, $\calB$, returned by the algorithm. Hence, for any $k \in \mathbb{Z}_+$, the existence of an MMS multi-allocation $\calA'$ in $\calI'$ with $\ellzed{\chi^{\calA'}} \leq k$ (i.e., under $\calA'$ at most $k$ chores from $U$ remain unassigned) implies the existence of an MMS multi-allocation $\calA$ in $\calI$ with $\ellzed{\chi^\calA} \leq k$. In particular, we can obtain $\calA$ from $\calA'=(A'_j)_{j \in N}$ by setting $A_i = B_i$ for all $i \in [n] \setminus N$ and $A_j = A'_j$ for all $j \in N$. 

Towards finding such an MMS multi-allocation $\calA'$ in $\calI'$, with the upper bound $k$ as stated in Theorem \ref{thm:additive_chores}, we perform the following random sampling. Independently for each agent $i\in N$, we select a bundle $R_i$ uniformly at random from the set $\{M^i_1\cap U,\dots,M^i_n\cap U\}$, and consider the random multi-allocation $\calR=(R_i)_{i\in N}$ in $\calI'$. As mentioned previously, $ |M^i_j \cap U| < \alpha \  \frac{|U|}{|N|}$ for each agent $i \in N$ and index $j \in [n]$ and, hence, for any random draw, we have   
\begin{align}
    |R_i| < \alpha \ \frac{|U|}{|N|} \quad \text{for each $i \in N$} \label{ineq:sizeR}
\end{align}   

The following lemma provides a useful lower bound on the expected value of the components of $\chi^\calR$.

\begin{lemma}\label{lem:rem-agents}
    In the reduced instance $\calI'=\langle N,U,\{c_i\}_{i\in N}\rangle$ and under the above-mentioned random draws of MMS allocations $\calR=(R_i)_{i\in N}$, we have  
    \begin{align*}
        \expval{\chi^\calR_a}\geq 1-\frac{\log m}{\alpha} \quad \text{for each chore $a\in U$.}
    \end{align*}
\end{lemma}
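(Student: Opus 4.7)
The plan is to split the lemma into a probability computation and a deterministic lower bound on $|N|$.

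First, I would compute $\prob{a\in R_i}$ for a fixed $i\in N$ and $a\in U$. Since $(M^i_1,\ldots,M^i_n)$ partitions $[m]$ and $a\in[m]$, there is a unique index $j^\star$ with $a\in M^i_{j^\star}$, so $a\in M^i_{j^\star}\cap U$ as well. Because $R_i$ is drawn uniformly at random from the $n$ sets $\{M^i_j\cap U\}_{j=1}^n$ (each index equally likely, regardless of collisions or emptiness), $\prob{a\in R_i}=1/n$. Linearity of expectation then yields $\expval{\chi^\calR_a} = |N|/n$, and the lemma reduces to proving the deterministic inequality $|N|/n \geq 1-(\log m)/\alpha$.

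Second, I would bound $|N|$ from below by analyzing the termination of Algorithm \ref{alg:chores-additive}. Let $T = n-|N|$ be the number of iterations, with $U_t$ denoting the unallocated chores after $t$ iterations (so $|U_0|=m$ and $U_T=U$). The while-loop guard guarantees that the bundle removed at iteration $t$ has size at least $\alpha|U_t|/(n-t)$, yielding the multiplicative decrease $|U_{t+1}|\leq |U_t|\bigl(1-\alpha/(n-t)\bigr)$. Taking logarithms and using $\log(1-x)\leq -x$, I would telescope to
\begin{align*}
\log|U_T| \;\leq\; \log m \;-\; \alpha\sum_{t=0}^{T-1}\frac{1}{n-t}.
\end{align*}
Since $|U|\geq 1$ by assumption, $\log|U_T|\geq 0$, giving $\sum_{t=0}^{T-1}1/(n-t)\leq(\log m)/\alpha$. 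Using the crude lower bound $1/(n-t)\geq 1/n$ for each term, this forces $T/n \leq (\log m)/\alpha$, and hence $|N|/n = 1 - T/n \geq 1 - (\log m)/\alpha$. Combined with $\expval{\chi^\calR_a} = |N|/n$, this proves the claim.

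The main subtlety I anticipate is choosing the right slackness when bounding the harmonic-type sum $\sum_{t=0}^{T-1}1/(n-t)$: a tighter estimate via $\sum_{t=0}^{T-1}1/(n-t)\geq \log\bigl((n+1)/(|N|+1)\bigr)$ would give the stronger but more cumbersome bound $|N|/n\geq m^{-1/\alpha}-O(1/n)$, whereas the crude bound $1/(n-t)\geq 1/n$ is both simpler and lands exactly on the lemma's target form without invoking $e^{-x}\geq 1-x$ or incurring an additive $1/n$ correction.
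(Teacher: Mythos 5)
Your proof is correct and follows essentially the same approach as the paper's: compute $\expval{\chi^\calR_a} = |N|/n$ by linearity, then show $T = n - |N| \le (n/\alpha)\log m$ by analyzing the geometric decay of $|U_t|$ under the while-loop condition. The only difference is stylistic: the paper relaxes $|N_t|\le n$ before multiplying out $|U_T| \le m(1-\alpha/n)^T$ and then invokes $(1-\alpha/n)^{n/\alpha}\simeq 1/e$, whereas you take logarithms first via $\log(1-x)\le -x$, telescope over $1/(n-t)$, and relax at the very end with $1/(n-t)\ge 1/n$; your version avoids the paper's informal $\simeq$ and lands on the same bound exactly.
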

\begin{proof}
    For the given instance $\calI = \langle [n], [m], \{c_i\}_{i=1}^n \rangle$, write $T$ to denote the number of iterations of the while-loop in \Cref{alg:chores-additive}. Note that the number of agents, $|N|$, remaining in the reduced instance $\calI'$ satisfies $|N| = n - T$. Also, let $U_t$ and $N_t$ denote the sets of remaining chores and agents, respectively, after $t \in \{0,\dots,T\}$ iterations of the while-loop. Note that, during each iteration $t+1$, at least $\alpha\ \frac{|U_t|}{|N_t|}$ chores are assigned to the selected agent. Hence, for each $t\in\{0,\dots,T\}$: 
    \begin{align*}
    |U_{t+1}| \leq|U_t|-\alpha\ \frac{|U_t|}{|N_t|}=|U_t|\ \left(1-\frac{\alpha}{|N_t|}\right) 
     \leq|U_t|\ \left(1-\frac{\alpha}{n}\right).
    \end{align*} 
    Therefore, after $T$ iterations 
    \begin{align}|U_T| \leq m \  \left( 1 - \frac{\alpha}{n} \right)^T \label{ineq:size-of-u} 
    \end{align}
    Since $|U_T|\geq 1$, equation (\ref{ineq:size-of-u}) yields $1-\frac{\alpha}{n}\geq \left(\frac{1}{m}\right)^\frac{1}{T}$. Exponentiating both sides of the previous inequality by  $\frac{n}{\alpha}$ gives us $\frac{1}{e}\simeq (1 - \frac{\alpha}{n})^{\frac{n}{\alpha}} \geq \left(\frac{1}{m}\right)^\frac{n}{\alpha T}$. Simplifying, we obtain $T\leq  \frac{n}{\alpha} \ \log m$ and, hence, 
    \begin{align}
        |N| = |N_T| = n- T \geq n  \left(1-\frac{\log m}{\alpha}\right) \label{ineq:size-of-N}
    \end{align}
    Finally, note that for each chore $a\in U$ 
    \begin{align*}
        \expval{\chi^\calR_a}= \sum_{i\in N} \prob{a\in R_i} = \frac{|N|}{n} \underset{ \text{via (\ref{ineq:size-of-N})}}{\geq} 1-\frac{\log m}{\alpha}.
    \end{align*}
This completes the proof of the lemma. 
\end{proof}

Since the underlying instance $\calI = \langle [n], [m], \{c_i\}_{i=1}^n \rangle$ consists of identically ordered costs, this property continues to hold for the reduced instance $\calI'=\langle N,U,\{c_i\}_{i\in N}\rangle$ as well. Specifically, we will assume, without loss of generality, that the set $U$ consists of chores $a_1,a_2, \ldots, a_{|U|}$, indexed such that for any $a_s$ and $a_t$, with $s<t$, we have $c_i(S + a_s) \geq c_i(S+a_t)$, for all agents $i \in N$ and all subsets $S \subset U$ that do not contain $a_s$ and $a_t$. We now reason about the number of chores assigned, with copies, under the random multi-allocation $\calR$ and in any given sequence of chores $S=\{a_r,a_{r+1},\ldots,a_{s-1}, a_s\}$. We denote by $\mathcal{S}$ the collection of all chore sequences in $U$, i.e., 
\begin{align*}
\mathcal{S}=\left\{ \{a_r,a_{r+1},\dots,a_s\}\mid 1\leq r\leq s\leq |U| \right\}.
\end{align*}

\begin{lemma}\label{lem:ordered-sequences}
In the reduced instance $\calI'=\langle N,U,\{c_i\}_{i\in N}\rangle$ and under the above-mentioned random draws of MMS allocations $\calR=(R_i)_{i\in N}$, the following bound holds for each chore sequence $S \in \mathcal{S}$: 
\begin{align*}
    \prob{\sum_{a\in S} \chi^\calR_a < |S|-\Delta}\leq \frac{1}{m^4},
    \end{align*}
    where 
    \begin{align*}
        \Delta \coloneqq |S|\ \frac{\log m}{\alpha} +2\alpha\ \frac{|U| \sqrt{\log m}}{\sqrt{|N|}}.
    \end{align*}
\end{lemma}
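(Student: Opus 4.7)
The plan is to recognize $\sum_{a\in S}\chi^\calR_a$ as a sum of independent, bounded random variables and then apply a standard one-sided Hoeffding bound. The proof is quite direct and I do not anticipate a serious obstacle.

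First, I would rewrite the target quantity by swapping the order of summation:
\begin{align*}
\sum_{a\in S}\chi^\calR_a \;=\; \sum_{a\in S}\sum_{i\in N}\mathbb{1}_{\{a\in R_i\}} \;=\; \sum_{i\in N}|R_i\cap S|.
\end{align*}
Since the bundles $R_i$ are drawn independently across $i\in N$, the random variables $Y_i := |R_i\cap S|$ are mutually independent. Moreover, the defining property of the reduced instance $\calI'$ yields the uniform bound in (\ref{ineq:sizeR}): each $|R_i|<\alpha\,|U|/|N|$, so $Y_i\in[0,\alpha|U|/|N|]$ for every $i\in N$.

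Next, I would lower bound the expectation by invoking \Cref{lem:rem-agents} and linearity: for each $a\in S\subseteq U$,
\begin{align*}
\expval{\chi^\calR_a}\ \geq\ 1-\frac{\log m}{\alpha}, \qquad\text{hence}\qquad \expval{\sum_{a\in S}\chi^\calR_a}\ \geq\ |S|-|S|\,\frac{\log m}{\alpha}.
\end{align*}
Therefore the event $\{\sum_{a\in S}\chi^\calR_a<|S|-\Delta\}$ is contained in the event $\{\sum_{i\in N}Y_i<\expval{\sum_{i\in N}Y_i}-t\}$ for $t:=2\alpha\,|U|\sqrt{\log m}/\sqrt{|N|}$, since the missing term $|S|\log m/\alpha$ in $\Delta$ is precisely what we absorbed into the expectation bound.

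Finally I would apply the one-sided Hoeffding inequality. With $\sum_{i\in N}(b_i-a_i)^2\leq |N|\cdot(\alpha|U|/|N|)^2=\alpha^2|U|^2/|N|$, Hoeffding gives
\begin{align*}
\prob{\sum_{i\in N}Y_i<\expval{\sum_{i\in N}Y_i}-t}\ \leq\ \exp\!\left(-\frac{2t^2|N|}{\alpha^2|U|^2}\right)\ =\ \exp(-8\log m)\ \leq\ \frac{1}{m^4},
\end{align*}
where the middle equality comes from substituting $t=2\alpha|U|\sqrt{\log m}/\sqrt{|N|}$ and cancelling. Chaining this with the containment above yields the stated bound. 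The only minor care needed is to ensure the expectation substitution preserves the inequality direction (which it does, since we are lower bounding the expectation and asking for a lower tail), so I expect no real difficulty beyond bookkeeping of the parameters.
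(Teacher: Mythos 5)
Your proof matches the paper's approach step for step: rewrite $\sum_{a\in S}\chi^\calR_a$ as $\sum_{i\in N}|R_i\cap S|$, note independence and the range bound from inequality~(\ref{ineq:sizeR}), lower-bound the mean via \Cref{lem:rem-agents}, and apply one-sided Hoeffding with deviation $t=2\alpha|U|\sqrt{\log m}/\sqrt{|N|}$. The only cosmetic difference is that you used the standard Hoeffding form with the factor $2$ in the exponent, obtaining $m^{-8}$, whereas the paper uses a slightly looser variant giving $m^{-4}$; both comfortably meet the claimed $\frac{1}{m^4}$ bound.
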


\begin{proof}
    \Cref{lem:rem-agents} gives us 
    \begin{align}
        \expval{\sum_{a\in S}\chi^\calR_a}\geq|S|  \left(1-\frac{\log m}{\alpha}\right) \label{eqn:exp-val-chore}
    \end{align} 
Note that $\sum_{a\in S}\chi^\calR_a=\sum_{i\in N}|R_i\cap S|$, where the random variables $|R_i\cap S|$ are independent across $i \in N$. Also, equation (\ref{ineq:sizeR}) provides a range for these random variables: $0\leq |R_i\cap S| <  \alpha\ \frac{|U|}{|N|}$, for each $i \in N$.
    
We now apply Hoeffding's inequality (lower tail) over the sum of independent random variables $|R_i\cap S|$, with parameter $\delta=2\alpha\ \frac{|U|}{\sqrt{|N|}}\ \sqrt{\log m}$:  
\begin{align}
\prob{\sum_{a\in S}\chi^\calR_a \leq|S|\ \left(1-\frac{\log m}{\alpha}\right)-\delta}
    &\leq \prob{\sum_{a\in S}\chi^\calR_a \leq \expval{\sum_{a\in S}\chi^\calR_a}-\delta} \tag{via (\ref{eqn:exp-val-chore})}\\
    & = \prob{ \sum_{i\in N}|R_i\cap S| \leq \expval{\sum_{i\in N}|R_i\cap S|}-\delta} \nonumber \\ 
    & \leq \exp \left( \frac{- \delta^2}{|N| \alpha^2 \frac{|U|^2}{|N|^2} } \right) \nonumber \\ 
    &\leq \exp(-4\log m) \tag{via value of $\delta$} \\
    &=\frac{1}{m^4} \label{ineq:tail}.
\end{align} 
    Inequality (\ref{ineq:tail}) corresponds to the bound stated in the lemma statement, since
    \begin{align*}
    |S|\ \left(1-\frac{\log m}{\alpha}\right)-\delta &= |S| - |S| \frac{\log m}{\alpha} - 2\alpha \frac{|U|}{\sqrt{|N|}}\ \sqrt{\log m} 
    = S-\Delta.
    \end{align*}
    Recall that $\Delta=|S|\frac{\log m}{\alpha} +2\alpha \frac{|U| \sqrt{\log m}}{\sqrt{|N|}}$. The lemma stands proved. 
\end{proof}

The following lemma shows that covering guarantees with respect to chore sequences, such as the ones obtained in Lemma \ref{lem:ordered-sequences}, translate to upper bounds on the number of unassigned chores. 

\begin{lemma}
\label{lem:shifting_chores}
In fair division instance $\calI'=\langle N,U,\{c_i\}_{i\in N}\rangle$ and for any parameter $\Delta\in \mathbb{Z}_{\geq 0}$, let $\calQ=(Q_i)_{i \in N}$ be a multi-allocation whose characteristic vector $\chi^\calQ$ satisfies $\sum_{a\in S} \chi^\calQ_a\geq |S|-\Delta$ for every chore sequence $S\in\mathcal{S}$. Then, $\calI'$ admits a multi-allocation $\calA'=(A'_i)_{i \in N}$ with the properties that 
\begin{itemize}
        \item[(i)] $c_i(A'_i)\leq c_i(Q_i)$ for all the agents $i$.
        \item[(ii)] $\ellzed{\chi^{\calA'}} \leq \Delta$.
    \end{itemize}
\end{lemma}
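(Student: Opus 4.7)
The plan is to identify a specific target subset of size $|U|-\Delta$ that we will cover fully. Take $T := \{a_{\Delta+1}, \ldots, a_{|U|}\}$, so $|T| = |U|-\Delta$, and set out to construct $\calA'$ with $\bigcup_i A'_i \supseteq T$, while each $A'_i$ satisfies the ``prefix-dominance'' relation $|A'_i \cap \{a_1,\ldots,a_t\}| \leq |Q_i \cap \{a_1,\ldots,a_t\}|$ for every $i$ and $t$. First, applying the sequence hypothesis to $S = \{a_1,\ldots,a_t\} \in \mathcal{S}$ yields $F_t := \sum_{s\leq t}\chi^\calQ_{a_s} \geq t-\Delta$ for every $t$; equivalently $F_{\Delta+k} \geq k$ for each $k \in \{1,\ldots,|T|\}$.

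Next, I would distribute $T$ among the agents greedily. Process the elements of $T$ in increasing-index order $a_{t_1}, a_{t_2}, \ldots$ (so $t_k = \Delta+k$) and, at step $k$, assign $a_{t_k}$ to any agent $i$ whose already-accumulated share $T_i$ still satisfies $|T_i| < |Q_i \cap \{a_1,\ldots,a_{t_k}\}|$. Such an agent must exist, since summing the residual capacities across all agents gives $F_{t_k} - (k-1) \geq k - (k-1) = 1$. Because the elements of $T$ are processed in order of increasing index, a short monotonicity check shows that this single-step feasibility condition in fact enforces the full prefix-dominance $|T_i \cap \{a_1,\ldots,a_t\}| \leq |Q_i \cap \{a_1,\ldots,a_t\}|$ at every index $t$.

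Finally, for each agent $i$, I would extend $T_i$ to a bundle $A'_i$ by adjoining the $|Q_i|-|T_i|$ largest-indexed chores of $U\setminus T_i$. A direct count shows
\begin{align*}
|A'_i \cap \{a_1,\ldots,a_t\}| \;=\; \max\bigl\{|T_i \cap \{a_1,\ldots,a_t\}|,\ |Q_i|+t-|U|\bigr\},
\end{align*}
and both terms are bounded above by $|Q_i \cap \{a_1,\ldots,a_t\}|$: the first by the previous step, and the second since $|Q_i \cap \{a_{t+1},\ldots,a_{|U|}\}| \leq |U|-t$. Hence each $A'_i$ prefix-dominates $Q_i$, which, after sorting both sets by index, means the $j$-th element of $A'_i$ has index at least that of the $j$-th element of $Q_i$. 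Because the costs are identically ordered, a telescoping sequence of single-chore replacements then yields $c_i(A'_i) \leq c_i(Q_i)$, which is (i); and $\bigcup_i A'_i \supseteq \bigcup_i T_i = T$ gives $\ellzed{\chi^{\calA'}} \leq |U|-|T| = \Delta$, which is (ii).

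The main subtlety is verifying that the per-step check at $t_k$ really enforces prefix-dominance at \emph{every} index simultaneously---this is what lets the greedy distribution proceed to completion with the sequence hypothesis as the only global ingredient. Once that is in hand, both the extension construction and the cost inequality reduce to routine counting combined with the identically-ordered property.
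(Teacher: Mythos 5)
Your proof is correct, but it takes a genuinely different route from the paper's. The paper runs a local swap procedure on $\calQ$ itself (its Algorithm 4): repeatedly take the lowest-indexed chore with multiplicity at least $2$ and trade one of its copies for the lowest-indexed unassigned chore of larger index; termination follows because each swap reduces the number of unassigned chores, and the bound $\ellzed{\chi^{\calA'}}\le\Delta$ is proved by contradiction using a conservation argument (swaps never move mass out of the prefix ending at the largest unassigned index) combined with the sequence hypothesis. You instead give a direct construction: fix the target suffix $T=\{a_{\Delta+1},\dots,a_{|U|}\}$, distribute it greedily subject to the per-agent prefix budgets $|Q_i\cap\{a_1,\dots,a_t\}|$ (feasibility at each step being a one-line averaging consequence of the prefix instances of the hypothesis), pad each $T_i$ up to cardinality $|Q_i|$ with the largest-indexed remaining chores, and then convert prefix-dominance plus equal cardinality into $c_i(A'_i)\le c_i(Q_i)$ via elementwise index domination and single-chore replacements. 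The counting identity $|A'_i\cap\{a_1,\dots,a_t\}|=\max\{|T_i\cap\{a_1,\dots,a_t\}|,\,|Q_i|+t-|U|\}$ checks out, as does the step-$k$ feasibility bound $F_{t_k}-(k-1)\ge 1$. Both arguments use only the prefix members of $\mathcal{S}$; yours has the added benefit of being fully constructive and of identifying exactly which chores may remain unassigned (the $\Delta$ highest-cost ones $a_1,\dots,a_\Delta$), whereas the paper's swap argument is shorter to state but less explicit. The only step you leave slightly implicit is the telescoping replacement chain from $Q_i$ to $A'_i$ (one must order the replacements so that each swapped-in chore is absent from the current set, which the elementwise ordering $p_j\ge q_j$ guarantees if you replace from the largest index downward); this is the same routine step the paper relies on.
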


\begin{algorithm}
\caption{Copy Redistribution for Identically Ordered Costs}\label{alg:chores-shifting-2}

\begin{algorithmic}[1]
    \REQUIRE A multi-allocation $\calQ$ satisfying the condition of \Cref{lem:shifting_chores}.
    \ENSURE A multi-allocation $\calA'$.
    \STATE Initialize $\calA'=\calQ$.
    \WHILE{there exist chores $b$ and $b'$ with the following properties: (i) index of $b$ is less than that of $b'$ (i.e., $b$ has higher marginal costs than $b'$), (ii) $\chi^{\calA'}_{b}\geq 2$, and (iii) $\chi^{\calA'}_{b'}=0$}
    \STATE Let $a_s$ be the lowest-indexed chore with $\chi^{\calA'}_{a_s}\geq 2$ and let $a_t$ be the lowest-indexed chore such that $t > s$ and $\chi^{\calA'}_{a_t} = 0$. \label{line:lowchore}
    \STATE Select an agent $i$ with $a_s\in A'_i$ and update $A'_i \leftarrow (A'_i \setminus \{a_s \} ) \cup \{a_t\}$.
    \ENDWHILE
    \RETURN $\calA'$
\end{algorithmic}
\end{algorithm}

\begin{proof}
    We will show that---given a multi-allocation $\calQ$ that satisfies the stated guarantees with respect to chore sequences---\Cref{alg:chores-shifting-2} returns the desired multi-allocation $\calA'$. \Cref{alg:chores-shifting-2}, in each iteration of its while-loop, replaces a chore $a_s$ in the selected agent $i$'s bundle, $A'_i$, by another chore $a_t$ of higher index $t > s$. Since $a_s$ was previously allocated to at least two agents and $a_t$ was unallocated, the number of unallocated chores in the maintained multi-allocation $\calA'$ decreases by one in each iteration. That is, the algorithm terminates in polynomial time. Moreover, given that $a_t$ has a higher marginal cost than $a_s$ with respect to $A'_i \setminus \{a_s\}$, the agents' allocated costs $c_i(A'_i)$ are non-increasing throughout the execution of the algorithm. Hence, for the returned multi-allocation $\calA'$, in particular, we have $c_i(A'_i) \leq c_i(Q_i)$, for all the agents $i$. This observation establishes property (i) as stated in the lemma.

    We next establish property (ii) by proving that upon termination of the algorithm, the returned multi-allocation $\calA'$ satisfies $\ellzed{\chi^{\calA'}} \leq \Delta$. Assume, towards a contradiction, that $\ellzed{\chi^{\calA'}}\geq \Delta+1$. Write $a_r$ to denote the chore of highest index $r$ with the property $\chi^{\calA'}_{a_r}=0$. Also, let $S$ denote the sequence of chores in $U$ with index at most $r$. Since $a_r$ is the highest index chore that is not assigned ($\chi^{\calA'}_{a_r}=0$), we have $\chi^{\calA'}_{b'} \geq 1$ for all chores $b'$ with index higher than $r$. Hence, all the unassigned chores are in fact contained in $S$; recall that there are at least $(\Delta +1)$ unassigned chores under $\calA'$. Further, the fact that the while-loop of the algorithm terminated with multi-allocation $\calA'$ and $\chi^{\calA'}_{a_r}=0$ ensures that $\chi^{\calA'}_b \leq 1$ for all chores $b \in S$. These observations imply that 
    \begin{align}
        \sum_{b \in S} \chi^{\calA'}_b \leq |S| - \Delta -1 \label{ineq:prefix-delta}
    \end{align}
    Further, note that, since $\chi^{\calA'}_{a_r} = 0$, during any iteration of the while-loop, the algorithm would not have replaced a copy of a chore $a_s \in S$ with a copy of a chore $a_t \notin S$. That is, it could not have been the case that, during any iteration, the considered chores $a_s$ and $a_t$ satisfy $s < r < t$, since such a relation between the indices contradicts the selection criterion of $a_t$ as the lowest-index chore with $\chi^{\calA'}_{a_t} = 0$. Therefore, throughout the algorithm's execution, for any considered chore $a_s \in S$, the corresponding chore $a_t$ must have also been contained in $S$. Considering this property along with the decrements and increments of $\chi^{\calA'}_{a_s}$ and $\chi^{\calA'}_{a_t}$, respectively, we obtain $\sum_{b \in S} \chi^\calQ_b = \sum_{b \in S} \chi^{\calA'}_b$. 

    The last equality and equation (\ref{ineq:prefix-delta}) give us $\sum_{b \in S} \chi^\calQ_b \leq |S| - \Delta -1$. This bound, however, contradicts the condition provided for $\calQ$ in the lemma statement. Hence, by way of contradiction, we obtain property (ii), $\ellzed{\chi^{\calA'}} \leq \Delta$. The lemma stands proved. 
\end{proof}

\medskip

    \begin{proof}[Proof of \Cref{thm:additive_chores}]

    Let $\calI=\langle[n],[m],\{c_i\}_{i=1}^n\rangle$ be the original fair division instance with chores and additive ordered costs. We first execute the pre-processing via \Cref{alg:chores-additive} (with parameter $\alpha \geq 1$, to be fixed later) and obtain a multi-allocation $\calB=(B_1,\ldots,B_n)$ along with sub-instance $\calI'=\langle N,U,\{c_i\}_{i\in N}\rangle$. Recall that $N$ and $U$ are the sets of remaining agents and unallocated chores, respectively. Also, all the chores in $[m] \setminus U$ are assigned among the agents in the set $[n] \setminus N$ with MMS guarantees; in particular, $\cup_{i \in [n] \setminus N} B_i = [m] \setminus U$ and $c_i(B_i) \leq \mu_i$ for each agent $i \in [n] \setminus N$.  
    
    We then focus on the sub-instance $\calI'=\langle N,U,\{c_i\}_{i\in N}\rangle$ with the goal of assigning chores from $U$ while ensuring MMS among the agents in $N$. Towards this and as mentioned previously, we sample an MMS multi-allocation $\calR=(R_i)_{i\in N}$ by drawing $R_i$ uniformly at random from $\{M^i_j\cap U\}_{j=1}^n$ for each $i\in N$. \Cref{lem:ordered-sequences} ensures that, for each chore sequence $S\in \mathcal{S}$, we have  
    \begin{align}
        \prob{\sum_{a\in S} \chi^\calR_a < |S|-\Delta}\leq \frac{1}{m^4}  \qquad \text{where } \Delta =|S|\ \frac{\log m}{\alpha} +2\alpha\ \frac{|U| \sqrt{\log m}}{\sqrt{|N|}} \label{eq:each-sequence}
    \end{align} 
    We fix $\alpha \coloneqq  n^{1/4} \log m$. This choice and inequality (\ref{ineq:size-of-N}) gives us $|N| \geq n \left(1 - \frac{\log m}{\alpha} \right) = n - n^{3/4} = n(1 - o(1))$. Further, 
     \begin{align}
        \Delta & =|S|\ \frac{\log m}{\alpha} \ + \ 2\alpha\ \frac{|U| \sqrt{\log m}}{\sqrt{|N|}} \nonumber \\ 
        & \leq \frac{m\log m}{\alpha} \ + \ \frac{2\alpha m\sqrt{\log m}}{\sqrt{|N|}} \tag{since $|S| \leq |U| \leq m$} \\
        & = \frac{m}{n^{1/4}}  \ + \ \frac{2  m n^{1/4} \left( \log m \right)^{3/2}}{ \sqrt{|N|} } \tag{$\alpha = n^{1/4} \log m$ and $|N| \geq n(1- o(1))$} \\
        & \leq \frac{3  m \left(\log  m \right)^{3/2}}{n^{1/4}} \label{eq:eq_alpha}
    \end{align}

    Write $\Delta^*$ to denote the upper bound on the deviation obtained in equation (\ref{eq:eq_alpha}), $\Delta^* \coloneqq \frac{3m \left(\log  m\right)^{3/2}}{n^{1/4}}$. Given that $|\mathcal{S}| \leq |U|^2\leq m^2$, applying union bound over all the chore sequences in $\mathcal{S}$ gives us  
    \begin{align*}
        \prob{ \sum_{a\in S} \chi^\calR_a \geq |S|-\Delta^*  \ \text{ for each } S\in\mathcal{S}}
    \geq 1 - \sum_{S \in \mathcal{S}} \prob{\sum_{a \in S} \chi^\calR_a < |S|-\Delta^*} \underset{\text{via (\ref{eq:each-sequence})}}{\geq} 
    1-\frac{m^2}{m^4}>0.
    \end{align*}
    Since this probability is positive, there exists an MMS multi-allocation $\calQ=(Q_i)_{i\in N}$ that satisfies $\sum_{a\in S} \chi^{\calQ}_a \geq |S|-\Delta^*$ for every chore sequence $S \in \mathcal{S}$. This is the condition required to invoke \Cref{lem:shifting_chores}, which, in turn, implies the existence of an MMS multi-allocation $\calA'=(A'_i)_{i\in N}$ (in $\calI'$) with $\ellzed{\chi^{\calA'}}\leq \Delta^*$. The final multi-allocation $\calA=(A_1,\ldots,A_n)$ for the underlying instance $\calI$ is obtained by combining $\calA'$ and $\calB$: For each agent $i$, set 
    \begin{align*}
        A_i=\begin{cases}
        A'_i & \text{ if }i\in N,\\
        B_i &\text{ otherwise, if $i \in [n] \setminus N$}.
    \end{cases}
    \end{align*}
    By construction, for each agent $i \in [n]$, we have $c_i(A_i) \leq \mu_i$. Hence, $\calA$ is an MMS multi-allocation. Also, since all the chores in $[m] \setminus U$ are assigned under $\calB$, we have that $\ellzed{\chi^\calA} = \ellzed{\chi^{\calA'}} \leq \Delta^*$. Overall, we get that $\calA$ satisfies the properties stated in the theorem. This completes the proof.  
    \end{proof}

    \subsection{Lower Bound for Chores}

    We now prove that there exists fair division instances, with $m$ chores and monotone costs, such that in every MMS multi-allocation, at least $(1- o(1))\nicefrac{m}{e}$ chores remain unassigned. This lower bound shows that the positive result obtained in Theorem \ref{thm:monotone_chores} is essentially tight. 

    \begin{theorem}
    \label{thm:chores_lowerbound}

 For any $\delta>0$, there exists a fair division instance $\calI = \langle [n], [m], \{c_i\}_{i=1}^n \rangle$ with monotone costs such that under every MMS multi-allocation $\calA$ in $\calI$ at least $(1-\delta)\frac{m}{e}$ chores remain unassigned, i.e., for every MMS multi-allocation $\calA$ in $\calI$ we have $\ellzed{\chi^\calA}\geq (1-\delta)\frac{m}{e}$. 
\end{theorem}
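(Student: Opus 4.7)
I would mirror the probabilistic construction used in the proof of \Cref{thm:goods_lowerbound}, but dualize it for chores. For each agent $i\in[n]$ independently, draw a partition $\calP^i=(P^i_1,\ldots,P^i_n)$ uniformly at random from $\Pi_n([m])$, and define the cost
\[
c_i(S) \coloneqq \begin{cases} 0 & \text{if } S\subseteq P^i_j \text{ for some } j\in[n],\\ 1 & \text{otherwise.}\end{cases}
\]
A quick check shows $c_i$ is monotone and normalized, and that $\calP^i$ itself certifies $\mu_i=0$. Hence, in every MMS multi-allocation $\calA=(A_1,\ldots,A_n)$, each bundle satisfies $A_i\subseteq P^i_{j_i}$ for some index $j_i\in[n]$. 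The set of chores assigned under $\calA$ is therefore contained in $\bigcup_{i} P^i_{j_i}$, giving $\ellzed{\chi^\calA}\geq m-|\bigcup_i P^i_{j_i}|$.

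The goal then reduces to a purely combinatorial/probabilistic claim: with positive probability over the choice of partitions, for \emph{every} tuple $(j_1,\ldots,j_n)\in[n]^n$, the union $\bigcup_i P^i_{j_i}$ covers at most $m\bigl(1-\tfrac{1-\delta}{e}\bigr)$ chores. Fix a tuple $(j_1,\ldots,j_n)$. Since the partitions are independent and uniform, for each chore $g$ the events $\{g\in P^i_{j_i}\}$ are independent across $i$, each with probability $1/n$. Consequently, $g$ is covered with probability $1-(1-1/n)^n$, and the indicators are independent across distinct chores. For $n$ sufficiently large, $(1-1/n)^n \geq 1/e - \eta$ for any prescribed $\eta>0$, so the expected coverage is at most $m(1-1/e)+\eta m$. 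A Chernoff bound (or Hoeffding on the sum of i.i.d.\ Bernoullis) then yields
\[
\Pr\Bigl[\,\bigl|\textstyle\bigcup_i P^i_{j_i}\bigr| \,>\, m\bigl(1-\tfrac{1-\delta}{e}\bigr)\,\Bigr] \,\leq\, \exp(-\Omega(\delta^2 m)).
\]
A union bound over all $n^n$ tuples gives failure probability at most $n^n\exp(-\Omega(\delta^2 m))$, which is strictly less than $1$ once $m$ is sufficiently large relative to $n$ and $\delta$. This certifies the existence of a concrete instance with the claimed property, so every MMS multi-allocation in that instance satisfies $\ellzed{\chi^\calA}\geq (1-\delta)m/e$.

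\paragraph{Main obstacle.} The only delicate point is calibrating constants: the expected coverage $1-(1-1/n)^n$ slightly exceeds $1-1/e$ (by $\Theta(1/n)$), and the Chernoff deviation eats further slack. Both of these have to be absorbed into the $\delta/e$ budget. This is handled cleanly by first choosing $n$ large enough that $(1-1/n)^n$ is within $\delta/(2e)$ of $1/e$, then choosing $m$ large enough that both the Chernoff tail (with deviation $\tfrac{\delta}{2e}m$) and the union bound over $n^n$ tuples yield nontrivial probability. Everything else—monotonicity of $c_i$, the identification of MMS bundles, and the reduction to coverage—follows the same template as \Cref{thm:goods_lowerbound}.
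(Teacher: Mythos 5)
Your proposal is correct and follows essentially the same route as the paper: the same random-partition construction with $0/1$ monotone costs, the same observation that $\mu_i=0$ forces $A_i\subseteq P^i_{j_i}$, the same Chernoff tail bound, and the same union bound over the $n^n$ candidate tuples. The only cosmetic difference is that you count covered chores rather than unassigned ones, and you explicitly flag (and handle) the $(1-1/n)^n$ versus $1/e$ calibration, which the paper brushes aside with a $\simeq$; your added care there is a small improvement but not a different argument.
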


\begin{proof}
    We will construct an instance with $n$ agents and $m$ chores with $m> \frac{2e}{\delta^2}n\log n$. For each agent $i$, we independently draw an $n$-partition $\calP^i=(P^i_1,P^i_2,\dots,P^i_n)$ of the set $[m]$ of chores uniformly at random from the set of all such partitions, i.e., $\calP^i \in_R \Pi_n([m])$. Given such a partition, we set the cost of each subset $S\subseteq [m]$ as 
    \begin{align*}
        c_i(S) =\begin{cases}
        0  & \text{if } S\subseteq P^i_j \text{ for any $j \in [n]$,}\\
        1  & \text{otherwise}.
    \end{cases}
    \end{align*}

     Note that each $c_i$ is a monotone set function. Further, for each agent $i$ the MMS value $\mu_i=0$, which implies that for any MMS multi-allocation $\calA=(A_1,\ldots,A_n)$ we have  $A_i\subseteq P^i_j$ for each $i\in [n]$ and some index $j\in[n]$. We may, in fact, restrict ourselves to those multi-allocations that satisfy $A_i=P^i_j$, for some $j$, since those that do not will only leave more chores unallocated. Hence, we define the family $\calF$ of MMS multi-allocations as 
     \begin{align*}
         \calF=\left\{ (A_1,A_2,\dots,A_n) \mid \text{ for each } i\in[n], \text{ there exists }j\in[n]\text{ such that } A_i=P^i_j\right\}.
     \end{align*} 
     Note that $|\mathcal{F}|=n^n$. Fix the multi-allocation $\calQ\in\calF$ obtained by setting index $j=1$ for all agents $i$, i.e. $\calQ=(P^1_1,P^2_1,\dots,P^n_1)$. Considering the random draws that result in the partitions $\calP^i= (P^i_1,\dots,P^i_n)$, for each fixed agent $i$ and chore $a$, we have $\prob{a\in P^i_i}=1/n$. Further, the fact that these draws are independent across the agents gives us $\prob{a\notin\cup_{i=1}^n P^i_1} = \left(1-\frac{1}{n}\right)^n$. Hence, 
     \begin{align*}
     \expval{\ellzed{\chi^\calQ}} =\sum_{a=1}^m \prob{a\notin\cup_{i=1}^n P^i_1} = \sum_{a=1}^m \left(1-\frac{1}{n}\right)^n \simeq\frac{m}{e}.
     \end{align*} 
     We next utilize Chernoff bound (lower tail), with $\delta>0$, to obtain 
     \begin{align}
     \prob{\ellzed{\chi^{\calQ}} < (1-\delta)\frac{m}{e}} \leq \exp\left(-\frac{\delta^2m}{2e}\right)< \frac{1}{n^n} \label{eq:each-A}
     \end{align} 
     The last inequality follows since $m> \frac{2e}{\delta^2}n\log n$. Note that the bound (\ref{eq:each-A}) holds for each MMS multi-allocation $\calQ \in \calF$. Let $E_\calQ$ denote the event that $\ellzed{\chi^{\calQ}}\geq (1-\delta)\frac{m}{e}$ and note that $\prob{E^c_\calQ} \leq \frac{1}{n^n}$. Now, applying union bound over all $\calQ \in \calF$ gives us  
     \begin{align*}
         \prob{\bigcap_{\calQ \in \calF} E_\calQ} = 1-\prob{\bigcup_{\calQ\in\calF} E_\calQ^c}\geq 1-n^n\  \prob{E_\calQ^c}>1-\frac{n^n}{n^n}=0.
     \end{align*}
     Therefore, with non-zero probability, the events $E_\calQ$ hold together for all MMS multi-allocations $\calQ\in \calF$. This implies that there exists an instance $\calI$, determined by the choice of partitions $\calP^i=(P^i_1,\dots,P^i_n)$, such that every MMS multi-allocation in $\calI$ leaves at least $(1-\delta)\frac{m}{e}$ chores unallocated. This completes the proof.

\end{proof}

\subsection{Additive Costs}

When agents' costs are additive, the following theorem shows that MMS fairness can be achieved by leaving at most $\frac{2}{11}m + n$ chores unassigned.

\begin{theorem}
    \label{thm:chores-additive}
        Every fair division instance with chores and additive costs admits an MMS multi-allocation $\calA = (A_1, \dots, A_n)$ in which at most $\frac{2}{11}m + n$ chores remain unassigned. That is, the characteristic vector $\chi^\calA$ of $\calA$ satisfies $\ellzed{\chi^\calA} \leq \frac{2}{11}m + n$.
    \end{theorem}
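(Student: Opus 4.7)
}

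The plan is to bootstrap the known $\frac{11}{9}$-approximate MMS guarantee for additive chore division (\cite{10.1145/3465456.3467555}, cited already in the paper) and then discard, from each agent's bundle, a carefully bounded number of the most expensive chores so as to bring the cost down to the exact maximin share. The ratio $\frac{2}{11}$ in the theorem statement is exactly what falls out of the $\frac{11}{9}$ slack.

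First, I would invoke \cite{10.1145/3465456.3467555} to obtain an (exact) allocation $\calB=(B_1,\ldots,B_n) \in \Pi_n([m])$ such that $c_i(B_i) \leq \frac{11}{9}\mu_i$ for every agent $i \in [n]$. For each agent $i$, I would then sort the chores in $B_i$ in non-increasing order of cost, $c_i(b^i_1) \geq c_i(b^i_2) \geq \cdots \geq c_i(b^i_{|B_i|})$, set $k_i \coloneqq \lceil \frac{2}{11} |B_i| \rceil$, and define the trimmed bundle $A_i \coloneqq B_i \setminus \{b^i_1, b^i_2, \ldots, b^i_{k_i}\}$.

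The MMS guarantee for the multi-allocation $\calA=(A_1,\ldots,A_n)$ follows from a simple averaging bound. Since $b^i_1, \ldots, b^i_{k_i}$ are the $k_i$ highest-cost chores in $B_i$, their average cost is at least the overall average cost in $B_i$, and hence
\begin{align*}
\sum_{j=1}^{k_i} c_i(b^i_j) \;\geq\; k_i \cdot \frac{c_i(B_i)}{|B_i|} \;\geq\; \frac{2}{11}\, c_i(B_i).
\end{align*}
Subtracting from $c_i(B_i)$ and using additivity yields $c_i(A_i) \leq \frac{9}{11} c_i(B_i) \leq \frac{9}{11} \cdot \frac{11}{9} \mu_i = \mu_i$, as required.

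For the bound on $\ellzed{\chi^\calA}$, since $\calB$ partitions $[m]$ and each $A_i \subseteq B_i$, the chores not assigned under $\calA$ are precisely the removed ones. Their total count is
\begin{align*}
\sum_{i=1}^n k_i \;=\; \sum_{i=1}^n \left\lceil \tfrac{2}{11}|B_i| \right\rceil \;\leq\; \sum_{i=1}^n \left( \tfrac{2}{11}|B_i| + 1 \right) \;=\; \tfrac{2m}{11} + n,
\end{align*}
which is the required bound. There is no serious obstacle here; the argument is almost entirely a pigeonhole/averaging calculation, and the only substantive ingredient external to the paper is the $\frac{11}{9}$-approximate MMS allocation, which is already invoked in the related work paragraph.
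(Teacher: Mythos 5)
Your proposal is correct and follows essentially the same route as the paper's proof: invoke the $\frac{11}{9}$-approximate MMS allocation of Huang and Lu, discard the $\lceil \frac{2}{11}|B_i| \rceil$ most costly chores from each bundle, and use the averaging bound to show the trimmed bundle is within MMS while the discarded count is at most $\frac{2}{11}m + n$. Only the notation differs.
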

    
    \begin{proof}
        The work of Huang and Lu \cite{10.1145/3465456.3467555} shows that fair division instances with chores and additive costs admit an exact allocation that assigns to each agent a bundle of cost no more than $\frac{11}{9}$ times its MMS. Let $\calA' = (A'_1, \dots, A'_n)$ be such an allocation for the given instance. Specifically, we have \begin{equation} c_i(A'_i) \leq \frac{11}{9} \mu_i \label{eq:chores-approx}
        \end{equation} for each agent $i \in [n]$.
    
        Consider the bundle $A'_i$ received by some fixed agent $i$ under $\calA'$, and assume that $|A'_i| = k_i$. Write $A'_i = \{a_1, \dots, a_{k_i}\}$ to denote the indexing of the elements of $A'_i$ that satisfies $c_i(a_1) \geq \dots \geq c_i(a_{k_i})$. Define $p_i \coloneqq \left\lceil \frac{2}{11} k_i \right\rceil$ and $U_i \coloneqq \{a_1, \dots, a_{p_i}\}.$ Since $c_i(a_1) \geq \dots \geq c_i(a_{p_i}) \geq \dots \geq c_i(a_{k_i})$, an averaging argument gives $\frac{1}{p_i} c_i(U_i) \geq \frac{1}{k_i} c_i(A'_i)$ and, hence, 
        \begin{equation} c_i(U_i) \geq \frac{p_i}{k_i} c_i(A'_i) \geq \frac{2}{11} c_i(A'_i). \label{eq:unassigned}
        \end{equation}
    
        Now, define the multi-allocation $\calA = (A_i, \dots, A_n)$ by setting $A_i = A'_i \setminus U_i$ for each agent $i$. Then, $$v_i(A_i) 
        = v_i(A'_i) - v_i(U_i) 
        \underset{\text{via (\ref{eq:unassigned})}}{\leq} \frac{9}{11} c_i(A'_i) 
        \underset{\text{via (\ref{eq:chores-approx})}}{\leq} \mu_i.$$ That is, $\calA$ is an MMS multi-allocation. Moreover, the set of unassigned chores in $\calA$ is exactly $U \coloneqq \cup_{i=1}^n U_i$. Hence, 
        \begin{align*} \ellzed{\chi^\calA} 
            = |U| 
            = \sum_{i=1}^n |U_i|
            = \sum_{i=1}^n p_i 
            = \sum_{i=1}^n \left\lceil \frac{2}{11} k_i \right\rceil 
            \leq \sum_{i=1}^n \left( \frac{2}{11} k_i + 1 \right) 
            = \frac{2}{11}m + n. 
        \end{align*}

        The theorem stands proved.
    \end{proof}
 
\section{Conclusion and Future Work}
This work establishes that exact MMS fairness can be guaranteed in settings that permit post facto adjustments to the supply of indivisible items. We obtain tight bounds for general monotone valuations and monotone costs. Our specialized bounds for identically ordered valuations and costs bring forth novel applications of concentration inequalities in the fair division context. The techniques also highlight a distinction between goods and chores in the MMS context -- for the goods case, the problem in the current context corresponds to a vector packing problem. By contrast, in the chores case we have a vector covering problem.   

The framework of adjusted supply provides a meaningful route for bypassing barriers in discrete fair division. In particular, it would be interesting to develop truthful mechanisms for MMS that leverage bounded supply adjustment. Another relevant direction for future work is to consider the framework for achieving envy-freeness up to any good (EFX). Such an approach would complement prior works on EFX with charity (see, e.g., \cite{chaudhury2021little}), which obtain existential guarantees for EFX with disposal, rather than duplication, of goods.

\bibliographystyle{alpha}
\bibliography{refs}

\appendix
\section{Maximin Shares with Arbitrary Entitlements}
\label{section:entitlements}
We have thus far studied fair division among agents who have equal entitlements over the items. The current section complements this treatment and addresses agents with different entitlements. In particular, we consider settings in which each agent $i \in [n]$ is endowed with an entitlement $b_i \in (0,1]$ and the sum of the agents' entitlements is equal to one, $\sum_{i=1}^n b_i = 1$. Agents with higher entitlements $b_i$ stake a higher claim on the items. Also, the equal-entitlements setting considered in previous sections corresponds to $b_i = 1/n$, for all the agents.

Focusing on division of goods, we will obtain results for a shared-based fairness notion, $\mmshat$, defined in \cite{Babaioff2024ShareBasedFF}. This notion generalizes MMS to the current context of distinct entitlements. In fact, Babaioff and Feige \cite{Babaioff2024ShareBasedFF} show that $\mmshat$ \emph{dominates} other shared-based notions.   

Here, we denote a fair division instance via the tuple $\langle [n], [m], \{v_i\}_{i=1}^n, \{b_i\}_{i=1}^n \rangle$ in which $b_i$s denote the agents' entitlements and, as in Section \ref{sec:goods}, $v_i$s denote the agents' valuations. We also index the agents in decreasing order of their entitlements, $b_1 \geq b_2 \geq \ldots \geq b_n$. The entitlement-adapted share, $\widehat{\mu}_i$ of each agent $i$ is obtained by considering size $n_i = \lfloor \frac{1}{b_i} \rfloor$ partitions of the $m$ goods. Formally,  
\begin{definition}[$\mmshat$] For any fair division instance $\langle [n], [m], \{v_i\}_{i=1}^n, \{b_i\}_{i=1}^n \rangle$ with goods and entitlements, the maximin share, $\muhat_i$, of agent $i$ is defined as
$$\muhat_i \coloneqq  \max_{(X_1, \ldots X_{n_i})\in \Pi_{n_i}([m])} \ \ \min_{j \in [n_i]} v_i(X_j). $$
Here, $n_i=\lfloor\frac{1}{b_i}\rfloor$ and the maximum is taken over all $n_i$-partitions of $[m]$. Furthermore, a multi-allocation $\calA = (A_1, \ldots A_n)$ is said to be an $\mmshat$ multi-allocation if it satisfies $v_i(A_i) \geq \widehat{\mu_i}$ for each agent $i \in [n]$. \\
\end{definition}

As before, we leverage the existence of an $\mmshat$-inducing $n_i$-partition $\calM^i=(M^i_1, M^i_2, \dots, M^i_{n_i})$ for each agent $i$, given by $$\calM^i \in \argmax_{(X_1, \ldots X_{n_i})\in \Pi_{n_i}([m])} \ \  \min_{j \in [n_i]} v_i(X_j).$$

The main result of this section is as follows.

\begin{theorem}
\label{thm:mms_hat_monotone_goods}
        Every fair division instance $\langle [n], [m], \{v_i\}_{i=1}^n, \{b_i\}_{i=1}^n \rangle$ with monotone valuations and arbitrary entitlements admits an $\mmshat$ multi-allocation $\calA = (A_1,\ldots A_n)$ in which no single good is assigned to more than $3 \log m$ agents and the total number of goods assigned, with copies, is at most $\lceil 1.7  m \rceil$. That is, the characteristic vector $\chi^\calA$ of $\calA$ satisfies $\|\chi^\calA\|_\infty \leq 3 \log m$ and $\|\chi^\calA\|_1\leq \lceil 1.7 m \rceil$.
\end{theorem}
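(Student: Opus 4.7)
The plan is to follow the probabilistic method of \Cref{thm:monotone_goods}, adapted so that for each agent $i$ one samples uniformly from the $n_i = \lfloor 1/b_i \rfloor$ parts of the $\mmshat$-inducing partition $\calM^i = (M^i_1, \ldots, M^i_{n_i})$. Specifically, independently for each agent $i$, draw $R_i$ uniformly at random from $\{M^i_1, \ldots, M^i_{n_i}\}$, and consider the random multi-allocation $\calR = (R_1, \ldots, R_n)$. Since $v_i(R_i) \geq \mmshat_i$ by construction, $\calR$ is always an $\mmshat$ multi-allocation, and it therefore suffices to show that the events $G_1 \coloneqq \{\ellinfty{\chi^\calR} \leq 3 \log m\}$ and $G_2 \coloneqq \{\ellone{\chi^\calR} \leq \lceil 1.7\, m \rceil\}$ hold together with positive probability. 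Since $\calM^i$ partitions $[m]$ into $n_i$ parts, we have $\prob{g \in R_i} = 1/n_i$ for every good $g$, giving $\expval{\chi^\calR_g} = \sum_{i=1}^n 1/n_i$ and $\expval{\ellone{\chi^\calR}} = m \sum_{i=1}^n 1/n_i$.

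The crucial new ingredient, compared with the equal-entitlements case, is the purely arithmetic bound
\begin{align*}
s \ \coloneqq \ \sum_{i=1}^n \frac{1}{n_i} \ < \ 1.7
\end{align*}
whenever $n_i = \lfloor 1/b_i \rfloor$ arises from entitlements $b_i > 0$ with $\sum_{i=1}^n b_i = 1$. I would prove this as follows. The definition of the floor yields $b_i > 1/(n_i + 1)$, so the integers $m_i \coloneqq n_i + 1 \geq 2$ satisfy $\sum_i 1/m_i < 1$, and we wish to upper bound $\sum_i 1/(m_i - 1)$. A greedy/exchange argument shows this quantity is maximized by the Sylvester sequence $t_0 = 2$, $t_{k+1} = t_k(t_k - 1) + 1$ (i.e.\ the choices $m_i = 2, 3, 7, 43, 1807, \ldots$), which satisfies $\sum_{k \geq 0} 1/t_k = 1$. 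The key observation for the exchange step is that the ``efficiency'' of the marginal gain, $\big(1/(m-1) - 1/m\big)/(1/m) = 1/(m-1)$, is strictly decreasing in $m$, so one should always consume the smallest feasible $1/m_i$ first. This yields
\begin{align*}
\sup s \ = \ \sum_{k=0}^\infty \frac{1}{t_k - 1} \ = \ 1 + \tfrac{1}{2} + \tfrac{1}{6} + \tfrac{1}{42} + \cdots \ \approx \ 1.6910 \ < \ 1.7.
\end{align*}
Rigorously carrying out this integer optimization (ensuring that Sylvester-style greedy is extremal and not merely a convenient upper bound) is the main technical obstacle of the proof.

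Granting $s < 1.7$, the remaining steps parallel \Cref{lem:linftynorm}. Each $\chi^\calR_g = \sum_{i=1}^n \mathbb{1}_{\{g \in R_i\}}$ is a sum of independent Bernoulli random variables with mean at most $s < 1.7$, so the Chernoff bound of \cite[Theorem~4.4]{mitzenmacher2017probability} (in the form $\Pr\{X \geq t\} \leq 2^{-t}$ for $t \geq 6\expval{X}$) gives, for all sufficiently large $m$,
\begin{align*}
\prob{\chi^\calR_g \geq 3 \log m} \ \leq \ 2^{-3 \log m} \ = \ \frac{1}{m^3},
\end{align*}
and a union bound over the $m$ goods yields $\prob{G_1^c} \leq 1/m^2$. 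For $G_2$, Markov's inequality applied to the integer-valued $\ellone{\chi^\calR}$ with mean $sm$ gives
\begin{align*}
\prob{\ellone{\chi^\calR} > \lceil 1.7\, m \rceil} \ = \ \prob{\ellone{\chi^\calR} \geq \lceil 1.7\, m \rceil + 1} \ \leq \ \frac{sm}{\lceil 1.7\, m \rceil + 1} \ < \ 1 - \frac{1}{m^2}
\end{align*}
for all sufficiently large $m$, since $s < 1.7$ is bounded away from $1.7$. Combining gives $\prob{G_1 \cap G_2} > 0$, establishing the existence of the desired $\mmshat$ multi-allocation. The finitely many small values of $m$ that fall outside the Chernoff/Markov regime can be absorbed by direct case analysis, since the targets $3\log m$ and $\lceil 1.7\, m \rceil$ leave enough slack when $m$ is small.
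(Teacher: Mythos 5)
Your overall strategy is the same as the paper's: sample $R_i$ uniformly from the $n_i = \lfloor 1/b_i\rfloor$ parts of the $\mmshat$-inducing partition, establish $\expval{\chi^\calR_g} = \sum_i 1/n_i \leq 1.7$, and then reuse the Chernoff $+$ Markov $+$ union-bound machinery from \Cref{lem:linftynorm}. The Chernoff and Markov halves of your argument are fine (the paper's version of the Markov step is slightly cleaner, using $\lceil 1.7m\rceil + 1 - 1.7m \geq 1$ directly rather than relying on strictness of $s<1.7$, but that is a cosmetic difference).

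The one place you diverge is the proof of the key arithmetic lemma $\sum_{i=1}^n 1/n_i \leq 1.7$, and this is where you have a genuine gap. You reduce to bounding $\sup \sum_i 1/(m_i - 1)$ over integers $m_i \geq 2$ with $\sum_i 1/m_i < 1$, and assert that the extremum is attained by Sylvester's sequence $2,3,7,43,\dots$, giving $\approx 1.6910$. The setup is correct, but the "efficiency is decreasing, hence greedy is extremal" heuristic is not a proof: this is an integer-constrained knapsack-type problem, and the naive exchange you describe (swap a larger $m_i$ for a smaller one) generally \emph{increases} budget consumption $1/m_i$, so it is not obviously feasibility-preserving. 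You flag this yourself as "the main technical obstacle," and I agree — as written, the argument is incomplete, and establishing the exact Sylvester extremality would require nontrivial additional work (inductive exchange arguments on the remaining budget, or an appeal to results on Egyptian-fraction underapproximations of $1$).

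The paper sidesteps this entirely with a short, elementary case analysis on the largest entitlement $b_1$ (three main cases: $b_1 \leq 1/3$, $b_1 \in (1/3,1/2]$, $b_1 > 1/2$, the last split further on $b_2$), using only the inequality $1/n_i < b_i/(1-b_i)$ and $\sum_i b_i = 1$. That approach is self-contained, does not require the sharp extremal constant, and yields $\sum_i 1/n_i \leq 17/10$ directly. Your Sylvester route would give a marginally sharper constant ($\approx 1.6910$ vs.\ $1.7$) at the cost of a harder combinatorial lemma that you have not proven. Since the theorem only needs $1.7$, the paper's case analysis is the more economical choice, and your proposal as stated does not close the theorem without filling in that missing extremal argument.
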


As in the proof of \Cref{thm:monotone_goods}, we proceed via the probabilistic method. The random draw here is as follows: For each agent $i\in [n]$, recall that $\calM^i=(M^i_1,\dots,M^i_{n_i})$ is an $\mmshat$-inducing partition for $i$. Independently for each $i$, we select a bundle $R_i$ uniformly at random from $\{M^i_1,\dots,M^i_{n_i}\}$, i.e., $\prob{R_i = M^i_j}=1/n_i$ for each index $j\in [n_i]$. Considering the random $\mmshat$ multi-allocation $\calR=(R_1,\dots,R_n)$, let $\chi^\calR$ be the characteristic vector of $\calR$. We first prove the following lemma which bounds the expected value of the components of the random vector $\chi^\calR$.

\begin{lemma}
    $\expval{\chi^\calR_g} \leq 1.7$ for each component (good) $g \in [m]$. \label{lem:expval-entitlements}
\end{lemma}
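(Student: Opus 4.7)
The first step is to compute the expectation in closed form. Since $\calM^i=(M^i_1,\ldots,M^i_{n_i})$ is a partition of $[m]$, each good $g$ lies in exactly one block, and because $R_i$ is drawn uniformly from the $n_i$ blocks, $\prob{g \in R_i} = 1/n_i$. Independence across agents then gives
\[
\expval{\chi^\calR_g} = \sum_{i=1}^n \prob{g \in R_i} = \sum_{i=1}^n \frac{1}{n_i}.
\]
The statement thus reduces to a purely arithmetic claim: whenever $b_1, \ldots, b_n > 0$ satisfy $\sum_i b_i = 1$ with $n_i = \lfloor 1/b_i \rfloor$, the sum $\sum_i 1/n_i$ is at most $1.7$.

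The main inequality I will use is $\frac{1}{n_i} - b_i \leq \frac{1}{n_i} - \frac{1}{n_i+1} = \frac{1}{n_i(n_i+1)}$, which follows from $b_i > 1/(n_i+1)$. Summing and using $\sum_i b_i = 1$ gives
\[
\sum_{i=1}^n \frac{1}{n_i} \;\leq\; 1 + \sum_{i=1}^n \frac{1}{n_i(n_i+1)}.
\]
Two structural observations pin down which small values of $n_i$ can appear: at most one agent has $n_i = 1$, since that requires $b_i > 1/2$; and conditional on such an agent, at most one further agent has $n_i = 2$, since two $b_i > 1/3$ alongside a $b_j > 1/2$ would overshoot $\sum b_i = 1$. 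Sorting the agents so $n_1 \leq n_2 \leq \cdots$ therefore leaves three cases.

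When every $n_i \geq 2$, the bound $\frac{1}{n_i(n_i+1)} \leq \frac{1}{2(n_i+1)}$ together with $\sum_i \frac{1}{n_i+1} \leq \sum_i b_i = 1$ gives $\sum_i 1/n_i < 3/2$. When $n_1 = 1$ but no $n_i = 2$ appears, $n_i \geq 3$ for $i \geq 2$ and $\sum_{i \geq 2} b_i < 1/2$, which yield $\sum_{i \geq 2} \frac{1}{n_i(n_i+1)} < \frac{1}{3}\cdot\frac{1}{2}$ and hence $\sum_i 1/n_i \leq 1 + \frac{1}{2} + \frac{1}{6} = 5/3$. The delicate case is $n_1 = 1, n_2 = 2$: now the first two agents already consume more than $\frac{1}{2} + \frac{1}{3} = \frac{5}{6}$ of the entitlement budget, so $b_i < 1/6$ for every $i \geq 3$, forcing $n_i \geq 6$. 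This gives $\sum_{i \geq 3} \frac{1}{n_i(n_i+1)} < \frac{1}{6}\cdot\frac{1}{6} = \frac{1}{36}$, and the overall bound becomes $\sum_i 1/n_i \leq 1 + \frac{1}{2} + \frac{1}{6} + \frac{1}{36} = \frac{61}{36} < 1.7$.

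The quantitatively sharp point is the last case. Without the observation that $n_1 = 1$ together with $n_2 = 2$ forces $n_i \geq 6$ for every subsequent agent---this is the first step of a Sylvester-like cascade $2, 3, 7, 43, \ldots$---a naive bound that uses only $n_i \geq 3$ for the tail would yield $1 + 1/2 + 1/4 = 7/4$, just above the target $1.7$. Threading this refinement is precisely what brings the constant below $1.7$ and aligns it with the $\lceil 1.7m\rceil$ bound in the theorem.
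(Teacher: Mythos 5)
Your proof is correct, and it takes a genuinely different route from the paper's. Both start from the identity $\expval{\chi^\calR_g} = \sum_{i=1}^n 1/n_i$, but the paper's key estimate is $\frac{1}{n_i} < \frac{b_i}{1-b_i}$, followed by a case split on which of the intervals $(0,\tfrac13]$, $(\tfrac13,\tfrac12]$, $(\tfrac12,1]$ the largest one or two entitlements $b_1, b_2$ fall into; the small-agent terms are then controlled by lower bounds on $1-b_i$. You instead write $\frac{1}{n_i} < b_i + \frac{1}{n_i(n_i+1)}$, so that $\sum_i 1/n_i < 1 + \sum_i \frac{1}{n_i(n_i+1)}$, and organize the case split directly by the multiplicities of small $n_i$ values (at most one $n_i = 1$, and at most one $n_i = 2$ given an $n_i = 1$). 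This is a slightly cleaner bookkeeping device and in fact yields the marginally sharper constant $61/36 \approx 1.694$ in the worst case, whereas the paper's Case 3(c) hits exactly $17/10 = 1.7$. Both suffice for the lemma. A tiny imprecision: since $b_i > 1/(n_i+1)$ is strict, your displayed inequality $\sum_i 1/n_i \leq 1 + \sum_i \frac{1}{n_i(n_i+1)}$ and the subsequent bounds should carry strict inequalities, but this only makes the conclusion stronger. Your closing remark about the Sylvester cascade $2,3,7,43,\dots$ is a nice intuition for why the tail terms decay quickly, though of course only the first step ($n_i \geq 6$ once $n_1=1, n_2=2$) is actually used.
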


\begin{proof}
    For each agent $i\in [n]$ and good $g\in [m]$, let $\mathbb{1}_{\{g \in R_i\}}$ be the indicator random variable for the event $g\in R_i$. Since $R_i$ is chosen uniformly at random from the $n_i$-partition $\{M^i_1,\dots,M^i_{n_i}\}$ of $[m]$ and each good $g\in [m]$ belongs to exactly one subset $M^i_j$, we have for each fixed $i\in [n]$ and $g \in [m]$
\begin{equation}
    \expval{\mathbb{1}_{\{g\in R_i\}}} = \prob{g \in R_i} = \frac{1}{n_i}. \label{eq:indicator-exp}
\end{equation}

Further, $\chi^\calR_g$ is the number of copies of the good $g$ assigned among the agents under the multi-allocation $\calR$, i.e., $\chi^\calR_g = \sum_{i=1}^n \mathbb{1}_{\{g \in R_i\}}$.  Therefore, 
\begin{equation}
\label{eq:eq1} \expval{\chi^\calR_g} = \sum_{i=1}^n \expval{\mathbb{1}_{ \{ g \in R_i \} }} 
\underset{\text{via (\ref{eq:indicator-exp})}}{=} \sum_{i=1}^n \frac{1}{n_i}
\end{equation}

We next establish an upper bound on the right-hand-side of equation (\ref{eq:eq1}). For each agent $i \in [n]$, let $k_i$ be the unique integer such that $\frac{1}{k_i+1} < b_i \leq \frac{1}{k_i}$. For the integer $k_i$, note that the bound $k_i \leq \frac{1}{b_i} < k_i + 1$ implies  
\begin{equation}
    k_i = \left\lfloor \frac{1}{b_i} \right\rfloor = n_i \label{eq:ni-value}
\end{equation}
Further, given that $\frac{1}{b_i} < k_i +1$, we obtain  $k_i > \frac{1}{b_i} -1$ and, hence, 
\begin{equation}
    \frac{1}{n_i} = \frac{1}{k_i} < \frac{1}{\frac{1}{b_i} - 1} = \frac{b_i}{1 - b_i}. \label{eq:upperbound-bi}
\end{equation}

Recall the agents are indexed such that $b_1 \geq b_2 \geq \dots \geq b_n$. We proceed via the following case analysis.\\

\noindent\textit{Case 1}: $b_1 \leq \frac{1}{3}$. Here, for each agent $i$, the entitlement $b_i \leq \frac{1}{3}$; equivalently, $1-b_i \geq \frac{2}{3}$. Therefore, 
\begin{align*}
\sum_{i=1}^n \frac{1}{n_i} 
& < \sum_{i=1}^n \frac{b_i}{1-b_i} \tag{via (\ref{eq:upperbound-bi})} \\
&\leq \frac{3}{2} \sum_{i=1}^n b_i \tag{since $1-b_i \geq \frac{2}{3}$ for all $i$} \\
&= \frac{3}{2} \tag{since $\sum_{i=1}^n b_i = 1$}
\end{align*}

\noindent\emph{Case 2}: $\frac{1}{3} < b_1 \leq \frac{1}{2}$. By the definition of $k_i$ and equation (\ref{eq:ni-value}), we obtain $n_1= k_1 = 2$. With $b_1\geq b_2$, we consider the following two sub-cases. \\

    \noindent
    \emph{Case 2(a):} $b_2 \leq \frac{1}{3}$.  For each $i \geq 2$, we have $b_i\leq \frac{1}{3}$ and, hence, $1 - b_i \geq \frac{2}{3}$. Therefore, 
    \begin{align*}
    \sum_{i=1}^n\frac{1}{n_i} 
    & < \frac{1}{n_1} + \sum_{i=2}^n \frac{b_i}{1-b_i} \tag{via (\ref{eq:upperbound-bi})} \\ 
    & \leq \frac{1}{2} + \frac{3}{2} \sum_{i=2}^n b_i \tag{since $n_1 = 2$ and $1 - b_i> \frac{2}{3}$ for $i\geq 2$} \\
    & = \frac{1}{2} + \frac{3}{2}(1-b_1)   \tag{since $\sum_{i=1}^n b_i = 1$}\\
    & <  \frac{1}{2} + \frac{3}{2}\left(1-\frac{1}{3}\right) \tag{since $b_1>\frac{1}{3}$}\\
    &= \frac{3}{2}
    \end{align*}

    \noindent
    \emph{Case 2(b):} $\frac{1}{3} < b_2 \leq b_1 \leq \frac{1}{2}$. In this sub-case, we have $n_2 = k_2 = 2$ as well. Since $\sum_{i=1}^n b_i = 1$, for the remaining agents $i \geq 3$, we have $b_i < \frac{1}{3}$. Hence,
    \begin{align*}
    \sum_{i=1}^n\frac{1}{n_i} 
    &<  \frac{1}{n_1} + \frac{1}{n_2} + \sum_{i=3}^n \frac{b_i}{1-b_i} \tag{via (\ref{eq:upperbound-bi})} \\ 
    & < \frac{1}{2}+\frac{1}{2} + \frac{3}{2} \sum_{i=3}^n b_i \tag{since $n_1 = n_2 = 2$ and $1 - b_i> \frac{2}{3}$ for $i\geq 3$} \\
    & = 1 + \frac{3}{2}(1-b_1-b_2)  \tag{since $\sum_{i=1}^n b_i = 1$}\\
    & <  1 + \frac{3}{2} \left(1-\frac{1}{3}-\frac{1}{3}\right) \tag{since $b_1\geq b_2 > \frac{1}{3}$} \\
    &= \frac{3}{2}.
    \end{align*}

\noindent\emph{Case 3}: $b_1> \frac{1}{2}$. In this case, $n_1 = k_1 = 1$. Also, $\sum_{i=1}^n b_i=1$ implies that $b_2 < \frac{1}{2}$. Therefore, the following three sub-cases (analyzed below) are exhaustive: 

\noindent
Case 3(a): $b_2 \leq 1/4$. 

\noindent
Case 3(b): $b_2 \in (1/4, 1/3]$. 

\noindent
Case 3(c): $b_2 \in (1/3, 1/2)$.  \\

   \noindent
   \emph{Case 3(a):} $b_2 \leq \frac{1}{4}$. Here, for all $i \geq 2$, we have $1 - b_i \geq 1 - b_2 \geq 3/4$. Using this inequality, we obtain   
   \begin{align*}
    \sum_{i=1}^n\frac{1}{n_i} 
    & < \frac{1}{n_1} + \sum_{i=2}^n \frac{b_i}{1-b_i} \tag{via (\ref{eq:upperbound-bi})} \\ 
    & \leq 1 + \frac{4}{3} \sum_{i=2}^n b_i \tag{since $n_1 = 1$ and $1 - b_i> \frac{3}{4}$ for $i\geq 2$} \\
    & = 1 + \frac{4}{3}(1-b_1)   \tag{since $\sum_{i=1}^n b_i = 1$}\\
    & <  1 + \frac{4}{3}\left(1-\frac{1}{2}\right) \tag{since $b_1>\frac{1}{2}$}\\
    &= \frac{5}{3}
    \end{align*}

   \noindent
   \emph{Case 3(b):} $\frac{1}{4} < b_2 \leq \frac{1}{3}$. In this sub-case, the definition of $k_i$ gives us $n_2 = k_2 = 3$. Further, for all $i \geq 3$, we have $b_i \leq b_3 \leq 1 - b_1 - b_2 \leq 1 - \frac{1}{2} - \frac{1}{4} = \frac{1}{4}$. That is, $1 - b_i \geq \frac{3}{4}$ for each $i \geq 3$. These inequalities give s 
    \begin{align*}
    \sum_{i=1}^n\frac{1}{n_i} 
    & < \frac{1}{n_1} + \frac{1}{n_2} + \sum_{i=3}^n \frac{b_i}{1-b_i} \tag{via (\ref{eq:upperbound-bi})} \\ 
    & \leq 1 + \frac{1}{3} + \frac{4}{3} \sum_{i=3}^n b_i \tag{since $n_1 = 1$, $n_2 = 3$, and $1 - b_i> \frac{3}{4}$ for $i\geq 3$} \\
    & = 1 + \frac{1}{3} +  \frac{4}{3}(1-b_1 - b_2)   \tag{since $\sum_{i=1}^n b_i = 1$}\\
    & <  1 + \frac{1}{3} + \frac{4}{3}\left(1-\frac{1}{2} - \frac{1}{4} \right) \tag{since $b_1>\frac{1}{2}$ and $b_2 > \frac{1}{4}$}\\
    &= \frac{5}{3}
    \end{align*}

    \emph{Case 3(c):} $\frac{1}{3} < b_2 < \frac{1}{2}$. Here, we have $n_2 = k_2 = 2$. Also, the equality $\sum_{i=1}^n b_i = 1$ implies $b_3 \leq 1-b_1 -b_2 < 1-\frac{1}{2} -\frac{1}{3} =\frac{1}{6}$. Hence, $1-b_i>\frac{5}{6}$ for each $i\geq 3$. Using these bounds we obtain    
    \begin{align*}
    \sum_{i=1}^n \frac{1}{n_i} 
    &<  \frac{1}{n_1} + \frac{1}{n_2} + \sum_{i=3}^n \frac{b_i}{1-b_i} \tag{via (\ref{eq:upperbound-bi})} \\ 
    & < 1 + \frac{1}{2} + \frac{6}{5} \sum_{i=3}^n b_i \tag{since $n_1 = 1$, $n_2 = 2$, and $1 - b_i> \frac{5}{6}$ for $i\geq 3$} \\
    & = 1 + \frac{1}{2} + \frac{6}{5}(1-b_1-b_2)  \tag{since $\sum_{i=1}^n b_i = 1$}\\
    & <  1 + \frac{1}{2} + \frac{6}{5} \left(1-\frac{1}{2}-\frac{1}{3}\right) \tag{since $b_1 >\frac{1}{2}$ and $ b_2 > \frac{1}{3}$} \\
    &= 1+ \frac{1}{2} + \frac{1}{5} \nonumber \\
    &= \frac{17}{10}.  
    \end{align*}

The case analysis above shows that, for any set of entitlements, $\{b_i\}_{i=1}^n$, the sum $\sum_{i=1}^n \frac{1}{n_i}$ is at most $\max \left\{\frac{3}{2}, \frac{5}{3}, \frac{17}{10} \right\} = \frac{17}{10} = 1.7$. Hence, equation (\ref{eq:eq1}) gives us $\mathbb{E}[\chi^\calR_g] \leq 1.7$. The lemma stands proved. 
\end{proof}

Using \Cref{lem:expval-entitlements}, we now prove probability bounds for the following events: $G_1$ is the event that $\|\chi^\calR\|_\infty \leq 3 \log m$ and $G_2$ is the event that $\|\chi^\calR\|_1 \leq \lceil 1.7m \rceil$. 

\begin{lemma}
\label{lem:linftynormforb} 
The random characteristic vector $\chi^\calR$ satisfies: 
\begin{enumerate}
    \item[(i)] $\prob{G_1^c} \leq \frac{1}{m^{2}}.$
    \item[(ii)] $\prob{G_2^c} \leq \frac{1.7m}{ \lceil1.7m\rceil+1}$.
\end{enumerate}
\end{lemma}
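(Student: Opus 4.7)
The plan is to mirror the proof of \Cref{lem:linftynorm} closely, with the key change being that here the component-wise expectation is bounded by $1.7$ (from \Cref{lem:expval-entitlements}) rather than equal to $1$. As before, for each good $g \in [m]$ the random variable $\chi^\calR_g = \sum_{i=1}^n \mathbb{1}_{\{g \in R_i\}}$ is a sum of $n$ \emph{independent} Bernoulli random variables, independence holding because the draws $R_i$ are independent across $i$. This will let me use the standard Chernoff tail for sums of independent Bernoullis together with a union bound for part (i), and Markov's inequality applied to the integer-valued $\ell_1$ norm for part (ii).

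For part (i), I would fix $g$ and note $\mathbb{E}[\chi^\calR_g] \leq 1.7$ by \Cref{lem:expval-entitlements}. Applying the Chernoff bound $\Pr\{X \geq t\} \leq 2^{-t}$ valid for $t \geq 6 \mathbb{E}[X]$ (the same form used in the proof of \Cref{lem:linftynorm}), and instantiating with $t = 3 \log m$ (which satisfies $t \geq 6 \cdot 1.7$ for sufficiently large $m$, e.g., $m \geq 2^4$), I get $\Pr\{\chi^\calR_g \geq 3 \log m\} \leq 2^{-3 \log m} = \tfrac{1}{m^3}$. Letting $G_{1,g}^c$ denote this event and applying a union bound over $g \in [m]$ yields $\Pr\{G_1^c\} = \Pr\{\cup_g G_{1,g}^c\} \leq m \cdot \tfrac{1}{m^3} = \tfrac{1}{m^2}$, which is the claimed bound.

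For part (ii), linearity of expectation combined with \Cref{lem:expval-entitlements} gives $\mathbb{E}[\|\chi^\calR\|_1] = \sum_{g=1}^m \mathbb{E}[\chi^\calR_g] \leq 1.7\, m$. Since $\|\chi^\calR\|_1$ is a nonnegative integer-valued random variable, $\Pr\{\|\chi^\calR\|_1 > \lceil 1.7 m \rceil\} = \Pr\{\|\chi^\calR\|_1 \geq \lceil 1.7 m \rceil + 1\}$, and Markov's inequality gives
\[
\Pr\{G_2^c\} \;=\; \Pr\{\|\chi^\calR\|_1 \geq \lceil 1.7 m \rceil + 1\} \;\leq\; \frac{\mathbb{E}[\|\chi^\calR\|_1]}{\lceil 1.7 m \rceil + 1} \;\leq\; \frac{1.7\, m}{\lceil 1.7 m \rceil + 1},
\]
which matches the stated inequality.

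There is no real obstacle here; the only subtlety is the mild range condition $t \geq 6\mathbb{E}[\chi^\calR_g]$ for the Chernoff bound, which is handled by absorbing small $m$ into a lower-order constant (exactly as in \Cref{lem:linftynorm}, where the bound was instantiated with $t = \max\{6, 3\log m\}$). The key step in what follows (presumably in the proof of \Cref{thm:mms_hat_monotone_goods}) will be combining these two bounds via a union bound to conclude $\Pr\{G_1 \cap G_2\} > 0$; this requires $\tfrac{1}{m^2} + \tfrac{1.7 m}{\lceil 1.7 m \rceil + 1} < 1$, which is immediate since the second term is strictly less than $1$ and the first is $o(1)$.
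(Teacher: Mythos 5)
Your proof matches the paper's argument essentially step for step: the Chernoff tail $\Pr\{\chi^\calR_g \geq t\}\leq 2^{-t}$ for $t\geq 6\,\mathbb{E}[\chi^\calR_g]$ combined with $\mathbb{E}[\chi^\calR_g]\leq 1.7$ from Lemma~\ref{lem:expval-entitlements} and a union bound for part (i), and Markov's inequality on the integer-valued $\ell_1$ norm for part (ii). The only cosmetic difference is that the paper absorbs the small-$m$ range condition by instantiating $t=\max\{12,3\log m\}$ (using $\mathbb{E}[\chi^\calR_g]\leq 2$), while you note $3\log m\geq 6\cdot 1.7$ holds for $m\geq 16$; both are correct.
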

\begin{proof} 
To prove (i), note that for a fixed good $g\in [m]$, the indicator random variables $\mathbb{1}_{\{g\in R_i\}}$ are independent across $i$s. Hence, the count $\chi^\calR_g=\sum_{i=1}^n \mathbb{1}_{\{g\in R_i\}}$ is a sum of independent Bernoulli random variables. Using Chernoff bound \cite[Theorem 4.4]{mitzenmacher2017probability}, we have $\prob{\chi^\calR_g \geq t} \leq \frac{1}{2^t}$ for any $t \geq 6 \expval{\chi^\calR_g}$. Since $\expval{\chi^\calR_g} \leq 1.7 \leq 2$ (\Cref{lem:expval-entitlements}), we can instantiate this bound with $t = \max \{12, 3 \log m \} \geq   6 \ \expval{\chi^\calR_g}$. For an appropriately large $m$, we obtain $ t = 3 \log m$ and, hence, $\prob{\chi^\calR_g \geq 3 \log m} \leq \frac{1}{m^{3}}$ for each $g\in [m]$. Write $G_{1,g}^c$ to denote the event $\{\chi^\calR_g \geq 3\log m\}$, and note that $G_1^c = \cup_{g=1}^m G_{1,g}^c$. We have  $\prob{G_{1,g}^c}\leq \frac{1}{m^{3}}$ for each $g\in [m]$. Hence, the union bound gives us $$\prob{G_1^c} = \prob{\cup_{g=1}^m G_{1,g}^c} \leq \sum_{g=1}^m \prob{G_{1,g}^c} \leq m \frac{1}{m^{3}} =\frac{1}{m^{2}}.$$ This establishes part (i) of the lemma.

For part (ii), using \Cref{lem:expval-entitlements}, we first obtain $\expval{\ellone{\chi^\calR}} 
= \expval{\sum_{g=1}^m \chi^\calR_g} 
= \sum_{g=1}^m \expval{\chi^R_g} 
\leq 1.7m$. Recall that $G_2$ is the event that $\|\chi^\calR\|_1 \leq \lceil 1.7m \rceil$. Markov's inequality and the fact that $\ellone{\chi^\calR} $ is an integer-valued random variable give us 
$$\prob{G_2^c} = \prob{\ellone{\chi^\calR} \geq \lceil 1.7m \rceil + 1} 
\leq \frac{\expval{\ellone{\chi^\calR}}}{ \lceil1.7m\rceil+1} 
\leq \frac{1.7m}{ \lceil1.7m\rceil+1}.$$ The theorem stands proved.
\end{proof}

Using the above lemma, we next prove \Cref{thm:mms_hat_monotone_goods}.

\begin{proof}[Proof of \Cref{thm:mms_hat_monotone_goods}] Recall that $G_1$ is the event that $\ellinfty{\chi^\calR} \leq 3 \log m$ and $G_2$ is the event that $\ellone{\chi^\calR} \leq \lceil 1.7m \rceil$. By \Cref{lem:linftynormforb} and the union bound, we have \begin{align*} \displaystyle \prob{G_1 \cap G_2} 
& = 1 - \prob{G_1^c \cup G_2^c} \\ 
& \geq 1 - (\prob{G_1^c} +\mathbb{P}\{G_2^c\}) \tag{Union bound} \\ 
& \geq 1-\left(\frac{1}{m^{2}} + \frac{1.7m}{\lceil1.7m\rceil+1} \right) \tag{by \Cref{lem:linftynormforb}} \\
& = \frac{\lceil 1.7m\rceil + 1 - 1.7m}{\lceil1.7m\rceil+1} - \frac{1}{m^{2}} \\
& \geq \frac{1}{\lceil1.7m\rceil+1} - \frac{1}{m^{2}} > 0.
\end{align*}

Hence, the events $G_1$ and $G_2$ hold together with positive probability for the random MMS multi-allocation $\calR = (R_1,\dots,R_n)$. This implies that there exists an MMS multi-allocation $\calA$ that satisfies the stated $\ell_\infty$ and $\ell_1$ bounds: $\|\chi^\calA\|_1\leq \lceil 1.7m \rceil $ and $\|\chi^\calA\|_\infty \leq 3 \log m$. This completes the proof. 
\end{proof}
\section{Hardness of Minimizing Assignment Multiplicity}
\label{sec:reduction}
This section establishes the computational hardness of selecting, for the agents $i \in [n]$, bundles $A_i$ from given MMS-inducing partitions $\calM^i = (M^i_1, \dots, M^i_n)$ with the objective of minimizing $\ellinfty{\chi^\calA}$, for the resulting multi-allocation $\calA=(A_1,\ldots, A_n)$. 

\begin{theorem}
\label{theorem:NPHard}
Given partitions $\calM^i = (M^i_1, \dots, M^i_n)$ for the agents $i \in [n]$, it is {\rm NP}-hard to decide whether there exists a multi-allocation $\calA = (A_1, \ldots A_n)$ with the properties that: (i) For each $i \in [n]$, the bundle $A_i = M^i_j$, for some index $j \in [n]$, and (ii) Every good is allocated to at most one agent (i.e., $A_i \cap A_j = \emptyset$ for all $i \neq j$).
\end{theorem}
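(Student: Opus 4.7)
The plan is to reduce from the classical $3$-COLORING problem. Given an instance $G = (V, E)$ with $|V| = k$, I will build, in polynomial time, $n = k$ partitions of a universe $[m]$ such that a valid disjoint selection (as in the theorem) exists if and only if $G$ admits a proper $3$-coloring. Agent $i$ will correspond to vertex $v_i$, and its partition $\mathcal{M}^i = (M^i_1, \dots, M^i_n)$ will be arranged so that the first three parts $M^i_1, M^i_2, M^i_3$ play the role of the three candidate colors for $v_i$, while the remaining $n-3$ parts are engineered to be ``forbidden'' as described below.

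For the core gadget, for each edge $e = (v_i, v_j) \in E$ and each color $c \in \{1,2,3\}$ I will introduce a conflict good $g_{e,c}$ placed in $M^i_c$ and in $M^j_c$. Then if $v_i$ and $v_j$ both select color $c$, the picked parts share $g_{e,c}$, contradicting disjointness; conversely, assigning distinct colors to adjacent vertices keeps the selected parts disjoint, so any proper $3$-coloring yields a valid multi-allocation via $A_i = M^i_{\chi(v_i)}$.

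The chief obstacle, and the step requiring the most care, is the handling of the non-color parts $M^i_s$ for $s \in \{4, \dots, n\}$, together with the need to place each conflict good $g_{e,c}$ somewhere in every non-incident agent's partition (so that each $\mathcal{M}^i$ remains a genuine partition of $[m]$). Without additional enforcement, the trivial ``all agents pick an empty part'' selection would be vacuously valid and the reduction would collapse. I plan to preclude this by attaching, to each non-color part and to each ``dump'' slot used to absorb cross-agent copies of $g_{e,c}$, a dedicated poison good that simultaneously appears inside the color parts of every other agent. Picking any such non-color part then forces every other agent to abandon their color parts, which cascades to infeasibility, so that any valid disjoint selection must have every agent picking a color part and hence induces a proper $3$-coloring of $G$.

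The delicate part of the write-up will be verifying that this gadget is globally consistent: every good lies in exactly one part of each agent's partition, the poison goods genuinely force the desired rigidity, and the introduced cascades do not accidentally also break legitimate $3$-coloring selections. If this direct engineering proves cumbersome, I would instead reduce from \emph{Independent System of Representatives}, which is already a ``pick one vertex from each color class to form an independent set'' problem and so aligns naturally with the selection-from-partitions structure; the only residual work in that case is a routine padding of each partition up to exactly $n$ parts using a similar poison-good device.
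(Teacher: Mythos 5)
Your overall strategy---build a ``pick one part per partition so that the picked parts are pairwise disjoint'' instance whose feasible selections encode proper colorings, with conflict goods on edges to enforce the adjacency constraints---is the right shape, and the edge-gadget ($g_{e,c}$ placed in $M^i_c$ and $M^j_c$ for $e=(v_i,v_j)$) is sound and close in spirit to what the paper does. The paper, however, reduces from Maximum Independent Set rather than $3$-coloring, and its mechanism for ``using up'' the extra parts is quite different from yours; this difference is where your sketch has a genuine gap.

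The flaw is in the poison-good device. A poison good $p$ attached to a non-color part $M^i_s$ can appear in \emph{at most one} part of every other agent's partition $\mathcal{M}^j$, because $\mathcal{M}^j$ is a partition. So $p$ can block at most one of $M^j_1,M^j_2,M^j_3$ per agent $j$, leaving $j$ with two perfectly usable color parts; the asserted cascade (``forces every other agent to abandon their color parts'') does not occur. If instead you introduce three poison goods $p^i_{s,1},p^i_{s,2},p^i_{s,3}$ in $M^i_s$, with $p^i_{s,c}\in M^j_c$ for all $j\neq i$, you do block all colors for the other agents, but you also create spurious conflicts in the forward direction: two non-adjacent vertices assigned the \emph{same} color $c$ in a proper $3$-coloring would now both pick parts containing $p^{i'}_{s,c}$ (for any third agent $i'$), so their picked parts intersect. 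With $n=|V|$ agents and only three colors, this breaks completeness for any graph with more than three vertices. Making the poison goods additionally indexed by the target agent $j$ fixes that but then you must find disjoint homes for them in the remaining agents' partitions and re-verify all cross pairs---at which point the construction is far from the ``routine padding'' you anticipate.

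The paper avoids this difficulty entirely by a different device: it introduces $n-k$ extra agents (one more than the number of vertices) whose first $|V|$ parts are distinct singletons $\{h_u\}$, and it places all $h_u$ in the catch-all part $M^i_n$ of each vertex-agent $i$. Then if any vertex-agent picks its catch-all part, the extra agents are forced onto their mutually intersecting catch-all parts, a contradiction; so vertex-agents must pick genuine vertex parts, and the extra agents harmlessly consume the singletons. This is cleaner than poison goods because the ``blocking'' is done by dedicated agents rather than by goods that must be threaded through every other partition. If you want to salvage a $3$-coloring reduction, you would do well to adopt this extra-agent trick in place of poison goods. Your alternative of reducing from Independent System of Representatives is also reasonable and structurally closer to the target problem, but again the residual padding is exactly where the paper's extra-agent-plus-singleton device, rather than poison goods, is what makes the argument go through.
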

\begin{proof}
   We present a reduction from the maximum independent set problem. Recall that in this {\rm NP}-hard problem we are given a graph $G = (V, E)$, along with  an integer $k \in \mathbb{Z}$, and the goal is to decide if there exists an independent set $I \subseteq V$ of size at least $k$. We first describe, the construction of the multi-allocation selection problem from any given maximum independent set instance with graph $G= (V, E)$ and threshold $k \in \mathbb{Z}_+$.    
   
We set the number of agents $n = |V|+1$ and the number of goods $m = |V|\binom{k}{2} + |E| \binom{k}{2} + |V|$. Further, for each index $i \in [k]$ and each vertex $u \in V$, we create a tuple $(i, u)$; there are $|V| k$ such tuples. Based on these tuples, we define a set family $\mathcal{F}$ that consists of size-$2$ sets either of the form $\{(i, u), (j, v)\}$, for each $(u, v) \in E$, or $\{(i, u), (j, u)\}$, for each $u \in V$. Formally, 
\begin{align*}
\mathcal{F} = \Big\{ \  \{(i, u), (j, u)\} \mid  i \neq j \text{ and } u \in V \Big\}~\bigcup~\Big\{ \ \{ (i, u), (j, v) \} \mid i \neq j \text{ and } (u, v) \in E \Big\}
\end{align*}
Note that $|\mathcal{F}| = |V|\binom{k}{2} + |E| \binom{k}{2}$.

Recall that the number of agents $n = |V| + 1$. We associate the first $|V|$ agents with the vertices of $G$ and the last agent will be indexed as $n$. We consider $m = |\mathcal{F}|+|V|$ goods. In particular, we include one good for each set $S \in \mathcal{F}$, these goods will be denoted by $\{g_S \mid S \in \mathcal{F}\}$. In addition, we introduce one good for each vertex in $G$ and denote them as $\{h_1, h_2, \ldots h_{|V|}\}$. 

Next, we define the partitions $\calM^i = (M^i_1, M^i_2, \ldots M^i_{n})$ for the first $k$ agents, i.e., for $ 1 \leq i \leq k$: 
\begin{align*}
M^i_u  & = \{g_S \mid S \in \mathcal{F} , (i, u) \in S\} \quad \text{ for $1 \leq u \leq |V|$, and } \\ 
M^i_n &  = M^i_{|V|+1}  = [m] \setminus \left( \cup_{u=1}^{|V|} \ M^i_u \right).
\end{align*}
For the remaining $n-k = |V|+1-k$ agents---i.e., for $k+1 \leq j \leq n$---we set the partitions $\calM^j = \{M^j_1, M^j_2, \ldots M^j_{n}\}$ as follows:

\begin{align*}
M^j_u  & = \{ h_u \} \quad \text{ for $1 \leq u \leq |V|$, and } \\  
M^j_n & = M^j_{|V|+1}  =  \{g_S \mid S \in \mathcal{F}\}.
\end{align*}
This completes the construction. We now prove the equivalence.

\paragraph{Forward Direction.} Suppose the given instance of maximum independent set problem is a {\rm Yes} instance. That is, there is an independent set $I \subseteq V$ of size at least $k$ in the graph $G$. Then, we consider the following multi-allocation $\calA$ in the constructed instance. For each agent $i \in [k]$, set $A_i = M^i_u$ where $u$ is the $i$th vertex in $I$. For the remaining agents $k+1 \leq j \leq n$, we set $A_j = M^j_{j-k}$.
    
We will show that under $\calA$ every good is assigned to at most one agent, i.e., the bundles $A_i$ are pairwise disjoint. First, notice that the last $n-k$ agents get distinct singleton goods from $\{h_1, h_2, \ldots h_{|V|}\}$ and all the first $k$ agents get goods only from $\{g_S \mid S \in \mathcal{F}\}$. Hence, the bundles $A_j$, for $k+1 \leq j \leq n$, are disjoint. Further, consider any pair of agents $i$ and $i'$ such that $1 \leq i \neq i' \leq k$. Suppose $A_i = M^i_u$ and $A_{i'} = M^{i'}_v$. Note that, by construction, $u, v \in I$ and $u \neq v$. Assume, towards a contradiction, that $A_i \cap A_{i'} \neq \emptyset$, i.e., $g_S \in M^i_u \cap M^{i'}_v$ for some good $g_S$. This containment implies $S = \{(i, u), (i', v)\}$. Such a set $S$ was included in $\calF$ only if $(u, v) \in E$. However, this contradicts the fact that $u$ and $v$ belong to the independent set $I$. Therefore, by way of contradiction, we get that $A_i \cap A_{i'} = \emptyset$. That is, the bundles $A_i$ are pairwise disjoint and, hence, as desired, every good is assigned to at most one agent under $\calA$. 

\paragraph{Reverse Direction.} Suppose the constructed instance admits a multi-allocation $\calA$ in which no item is assigned to more than one agent, i.e., the bundles in $\calA$ are disjoint. We will show that in such a case $G$ admits an independent set of size $k$. First, we make the following claim.

\begin{claim}
\label{claim:notlastbundle}
In the multi-allocation $\calA$, none of the first $k$ agents $i \in [k]$ receive the last subset, $M^i_n$, from their partition $\calM^i$, i.e., $A_i \neq M^i_n$ for all $i \in [k]$. 
\end{claim}
\begin{proof}
Towards a contradiction, assume that for some agent $i \in [k]$, the bundle $A_i = M^i_n$. Then, the last $(n-k) > 1$ agents $j \in \{k+1,\ldots, n\}$ can not receive  any of their first $|V|$ subsets in $\calM^j$. This follows from the fact that $M^i_n$ contains all the goods $\{h_1, \ldots, h_{|V|} \}$ and, hence, $M^i_n$ intersects with $M^j_u$ for all $j \in \{k+1,\ldots, n\}$ and $1 \leq u \leq |V|$. This leaves bundles $M^j_n$ for the agents $j \in \{k+1,\ldots, n\}$. However, these bundles intersect each other as well. Hence, by way of contradiction, we get that $A_i \neq M^i_n$ for all $i \in [k]$. 
\end{proof}

By \Cref{claim:notlastbundle}, we have that under the allocation $\calA$, each of the first $k$ agents $i \in [k]$ must receive one of their first $|V|$ subsets from their partitions $\calM^i$. Suppose $A_i = M^i_{u_i}$ for index $u_i \in V$.  
\begin{claim}\label{claim:indset}
The vertices $\{u_1, u_2, \ldots u_k\}$ form an independent set in the graph $G$.
\end{claim}
\begin{proof} First, note that $u_i \neq u_j$, for any pair $i \neq j \in [k]$. Otherwise, if $u_i = u_j = u$, then for the set $S = \{ (i, u), (j, u)\} \in \calF$, we have $g_S \in M^i_u = A_i$ and $g_S \in M^j_u = A_j$. Therefore, $g_S \in A_i \cap A_j$, which contradicts the fact that the bundles in $\calA$ are disjoint.
        
 Now, suppose $(u_i, u_j) \in E$ for some $i \neq j \in [k]$. Then, for the set $S = \{(i, u_i), (j, u_j)\} \in \calF$, consider the good $g_S$. We have $g_S \in M^i_{u_i} \cap M^j_{u_j}$, i.e., $g_S \in A_i \cap A_j$. This containment contradicts the fact that the bundles in $\calA$ are disjoint. Therefore, we have that $(u_i, u_j) \notin E$ for all pairs $i \neq j \in [k]$. Therefore, $\{u_1, u_2, \ldots u_k\}$ is an independent set in the graph $G$.  
\end{proof}
This concludes the reverse direction of the reduction and completes the proof of the theorem.
\end{proof}

\end{document}